%
\documentclass[runningheads,envcountsect,envcountsame]{llncs}
\usepackage[T1]{fontenc}
\usepackage[english]{babel}
%
\usepackage{graphicx}
%
%

\usepackage{tikz}
\usepackage{bussproofs}
\usepackage{amssymb}
\usepackage{amsmath}
\usepackage{relsize}
\usepackage{hyperref}
\usepackage{cleveref}
\usepackage{mathrsfs}
\usepackage{stmaryrd}
\usepackage[shortlabels]{enumitem}
\usepackage{multirow}
\usepackage{xspace}
\usepackage{virginialake}
\usepackage{adjustbox}
\usepackage{thm-restate}

\newcommand{\nb}[1]{\textcolor{red}{$|$}\mbox{}\marginpar{\scriptsize\raggedright\textcolor{red}{#1}}}
\newcommand{\nbm}[1]{\textcolor{orange}{$|$}\mbox{}\marginpar{\scriptsize\raggedright\textcolor{orange}{M: #1}}}




\newcommand{\G}{\Gamma}
\newcommand{\D}{\Delta}
\newcommand{\seq}{\Rightarrow}
\newcommand{\sseq}{\Leftrightarrow}
\newcommand{\seqder}[1]{\overset{\raise.3em\hbox{$\varDer_{#1}$}}{\seq}}
\newcommand{\sseqder}[2]{\overset{\raise.3em\hbox{$\varDer_{#1}, \varDer_{#2}$}}{\sseq}}
\newcommand{\bigAND}{\bigwedge}

\newcommand{\vd}{\vdash}
\newcommand{\Vd}{\Vdash}

\newcommand{\mf}{\mathsf}

\newcommand{\ih}{i.h.}
\newcommand{\varS}{\mathcal{S}}

\newcommand{\varDer}{\mathcal{D}}

\newcommand{\wij}{Wijesekera}
\newcommand{\olk}{Olkhovikov\xspace}

\newcommand{\intu}{intuitionistic}
\newcommand{\const}{constructive}

\newcommand*{\IMP}{\mathbin{\to}}
\newcommand*{\IMPP}{\mathbin{\leftrightarrow}}
\newcommand*{\BOX}{\mathord{\Box}}
\newcommand*{\DIA}{\mathord{\Diamond}}
\newcommand*{\AND}{\mathbin{\wedge}}
\newcommand*{\OR}{\mathbin{\lor}}


%

\newcommand{\condDiam}{\ensuremath{%
  \mathbin{\mathsmaller{\Diamond}\kern-1.5pt\raise0.5pt\hbox{$\mathord{\mathsmaller\rightarrow}$}}}}
  
\newcommand{\condBox}{\ensuremath{%
  \mathbin{\mathsmaller{\Box}\kern-1.5pt\raise1pt\hbox{$\mathord{\mathsmaller\rightarrow}$}}}}
  
\newcommand{\condCirc}{\ensuremath{%
  \mathbin{\circ\kern-1.5pt\raise1pt\hbox{$\mathord{\mathsmaller\rightarrow}$}}}}

\newcommand{\condBlackDiam}{\ensuremath{%
  \mathbin{\mathsmaller{\blacklozenge}\kern-1.5pt\raise0.5pt\hbox{$\mathord{\mathsmaller\rightarrow}$}}}}

\newcommand{\cb}{\condBox}
\newcommand{\cd}{\condDiam}
\newcommand{\cc}{\condCirc}
\newcommand{\cbd}{\cd_\bot}

\newcommand{\imp}{\to}

\newcommand{\biimp}{\leftrightarrow}
\newcommand{\diam}{\Diamond}
\newcommand{\lan}{\mathcal{L}}
\newcommand{\atm}{Atm}
\newcommand{\logic}{\logicnamestyle L}
\newcommand{\fint}{i}
\newcommand{\fin}[1]{i(#1)}
\newcommand{\finseq}{\iota}

\newcommand{\lanB}{\lan^{\cb}}
\newcommand{\lanBD}{\lan}
\newcommand{\lanm}{\lan_m}
\newcommand{\lancm}{\lan_{cm}}

\newcommand{\logicnamestyle}[1]{\ensuremath{\mathsf{#1}}}
\newcommand{\K}{\logicnamestyle{K}}

\newcommand{\CPL}{\logicnamestyle{CPL}}
\newcommand{\IPL}{\logicnamestyle{IPL}}
\newcommand{\WK}{\logicnamestyle{WK}}
\newcommand{\sfour}{\logicnamestyle{S4}}

\newcommand{\CK}{\logicnamestyle{CK}}
\newcommand{\intCK}{\logicnamestyle{IntCK}}
\newcommand{\constCK}{\logicnamestyle{ConstCK}}
\newcommand{\IK}{\logicnamestyle{IK}}
\newcommand{\iK}{\logicnamestyle{iK}}
\newcommand{\ICK}{\intCK}
\newcommand{\CCK}{\constCK}
\newcommand{\IntCK}{\intCK}
\newcommand{\ConstCK}{\constCK}
\newcommand{\CKbox}{\logicnamestyle{CK}^{\cb}}
\newcommand{\constCKbox}{\logicnamestyle{ConstCK}^{\cb}}
\newcommand{\CCKbox}{\constCKbox}
\newcommand{\ConstCKbox}{\constCKbox}
\newcommand{\CKsfour}{\CK \oplus \sfour}
\newcommand{\CCKID}{\logicnamestyle{ConstCKID}}
\newcommand{\CCKMP}{\logicnamestyle{ConstCKMP}}
\newcommand{\CCKMPID}{\logicnamestyle{ConstCKMPID}}
\newcommand{\CCKCEM}{\logicnamestyle{ConstCKCEM}}
\newcommand{\CCKstar}{\logicnamestyle{ConstCK^*}}

\newcommand{\NIntCK}{\logicnamestyle{N}.\IntCK}

\newcommand{\SCK}{\logicnamestyle{S}.\CK}
\newcommand{\SConstCK}{\logicnamestyle{S}.\ConstCK}
\newcommand{\SConstCKbox}{\logicnamestyle{S}.\ConstCKbox}
\newcommand{\SWK}{\logicnamestyle{S}.\WK}
\newcommand{\SCCK}{\SConstCK}
\newcommand{\SCCKbox}{\SConstCKbox}
\newcommand{\SCCKID}{\logicnamestyle{S}.\CCKID}
\newcommand{\SCCKMP}{\logicnamestyle{S}.\CCKMP}
\newcommand{\SCCKMPID}{\logicnamestyle{S}.\CCKMPID}
\newcommand{\SCCKCEM}{\logicnamestyle{S}.\CCKCEM}
\newcommand{\SCCKstar}{\logicnamestyle{S}.\CCKstar}

\newcommand{\ccm}{\textsf{ccm}}
\newcommand{\W}{W}
\newcommand{\V}{V}
\newcommand{\M}{M}
\newcommand{\ltrset}{\llbracket}
\newcommand{\rtrset}{\rrbracket}
\newcommand{\trset}[1]{\ltrset{#1}\rtrset}

\newcommand{\less}{\futs}
\newcommand{\more}{\geq}

\newcommand{\pow}{\mathcal P}


\newcommand{\futs}{\leq}
\newcommand{\R}{R}
\newcommand{\Rof}[1]{R_{#1}}
\newcommand{\sat}{\Vdash}
\newcommand{\truthset}{\trset}

\newcommand{\ax}{\AxiomC}
\newcommand{\uinf}{\UnaryInfC}
\newcommand{\binf}{\BinaryInfC}
\newcommand{\tinf}{\TrinaryInfC}
\newcommand{\qinf}{\QuaternaryInfC}
\newcommand{\llab}{\LeftLabel}
\newcommand{\rlab}{\RightLabel}
\newcommand{\disp}{\DisplayProof}

\newcommand{\bl}[1]{{#1}^\bullet}
\newcommand{\wh}[1]{{#1}^\circ}

\newcommand{\NS}[2]{[{#1} : {#2}]} 
\newcommand{\GC}[1]{\G\{#1\}} 
\newcommand{\GCp}[1]{\G'\{#1\}} 
\newcommand{\LC}[1]{\Lambda\{#1\}} 
\newcommand{\GCD}[1]{\G^{\downarrow}\{#1\}} 
\newcommand{\depth}[1]{\mathit{depth}(#1)}


\newcommand{\leftrule}[1]{${#1}_\mathsf{L}$}
\newcommand{\rightrule}[1]{${#1}_\mathsf{R}$}
\newcommand{\init}{$\mf{init}$}

\newcommand{\limp}{\leftrule{\imp}}
\newcommand{\rimp}{\rightrule{\imp}}
\newcommand{\lland}{\leftrule{\land}}
\newcommand{\rland}{\rightrule{\land}}
\newcommand{\llor}{\leftrule{\lor}}

\newcommand{\rlorone}{${\lor}_\mathsf{R}^1$}
\newcommand{\rlortwo}{${\lor}_\mathsf{R}^2$}
\newcommand{\lwk}{\rulesfont{w_L}}
\newcommand{\rwk}{\rulesfont{w_R}}

\newcommand{\rulecb}{$\cb$}
\newcommand{\rulecd}{$\cd$}
\newcommand{\rulecbd}{$\cbd$}

\newcommand{\rulempbox}{$\mathsf{mp}_{\Box}$}
\newcommand{\rulempboxstar}{$\mathsf{mp_{\Box}^*}$}
\newcommand{\rulempdiam}{$\mathsf{mp}_{\diam}$}

\newcommand{\rulecbid}{$\cb^\mathsf{id}$}
\newcommand{\rulecdid}{$\cd^\mathsf{id}$}
\newcommand{\rulecbdid}{$\cbd^\mathsf{id}$}
\newcommand{\rulecbdcem}{$\cbd^\mathsf{cem}$}
\newcommand{\rulecdcem}{$\cd^\mathsf{cem}$}

\newcommand{\hilbertaxiomstyle}[1]{{#1}}

\newcommand{\MP}{\hilbertaxiomstyle{MP}}
\newcommand{\RAbox}{\hilbertaxiomstyle{RA}$_\Box$}
\newcommand{\RCbox}{\hilbertaxiomstyle{RC}$_\Box$}
\newcommand{\RAdiam}{\hilbertaxiomstyle{RA}$_\diam$}
\newcommand{\RCdiam}{\hilbertaxiomstyle{RC}$_\diam$}
\newcommand{\RMbox}{\hilbertaxiomstyle{RM}$_\Box$}
\newcommand{\RMdiam}{\hilbertaxiomstyle{RM}$_\diam$}

\newcommand{\axCMbox}{\hilbertaxiomstyle{CM}$_\Box$}
\newcommand{\axCCbox}{\hilbertaxiomstyle{CC}$_\Box$}
\newcommand{\axCNbox}{\hilbertaxiomstyle{CN}$_\Box$}

\newcommand{\axCMdiam}{\hilbertaxiomstyle{CM}$_\diam$}
\newcommand{\axCCdiam}{\hilbertaxiomstyle{CC}$_\diam$}
\newcommand{\axCNdiam}{\hilbertaxiomstyle{CN}$_\diam$}
\newcommand{\axCKdiam}{\hilbertaxiomstyle{CK}$_\diam$}
\newcommand{\axCFS}{\hilbertaxiomstyle{CFS}}
\newcommand{\axCW}{\hilbertaxiomstyle{CW}}
\newcommand{\axdefdiam}{\hilbertaxiomstyle{def}$_\diam$}

\newcommand{\axIDbox}{\hilbertaxiomstyle{ID}$_\Box$}
\newcommand{\axMPbox}{\hilbertaxiomstyle{MP}$_\Box$}
\newcommand{\axMPdiam}{\hilbertaxiomstyle{MP}$_\diam$}
\newcommand{\axCEMdiam}{\hilbertaxiomstyle{CEM}$_\diam$}
\newcommand{\axCEMbox}{\hilbertaxiomstyle{CEM}$_\Box$}

\newcommand{\pset}[1]{[#1]}

\newcommand{\ceset}[1]{\langle{#1}\rangle}
\newcommand{\ce}{\emph{ce}}
\newcommand{\A}{A}
\newcommand{\B}{B}

\newcommand{\ceR}{\mathscr{R}}

\newcommand{\ceU}{\mathscr{U}}

\newcommand{\Wc}{W}
\newcommand{\Vc}{V}
\newcommand{\Mc}{M}
\newcommand{\Rc}{R}
\newcommand{\Rcof}{\Rof}
\newcommand{\lessc}{\leq}
\newcommand{\fcbm}[1]{\varphi \cb ^-(#1)}

\newcommand{\rr}{\mathsf{r}}
\newcommand{\derivesIntCk}{\vd_{\IntCK}}

\newcommand{\bcont}{$\bullet$-context\xspace}
\newcommand{\wcont}{$\circ$-context\xspace}

\newcommand{\rulesfont}[1]{\mathsf{#1}}
\newcommand{\wk}{\rulesfont{w}}
\newcommand{\ctr}{\rulesfont{c}}
\newcommand{\cut}{\rulesfont{cut}}
\newcommand{\necs}{[\rulesfont{nec}]}

\newcommand{\rep}{\rulesfont{rep}}
\newcommand{\meds}{[\rulesfont{m}]}

\newcommand{\brl}[1]{\bl{#1}}
\newcommand{\wrl}[1]{\wh{#1}}
\newcommand{\initv}{\mathsf{init}}

\newcommand{\w}[1]{{w}(#1)}
\newcommand{\rk}[1]{r(#1)}

\newcommand{\nseq}{\Gamma}
\newcommand{\nder}{\mathcal{D}}

\newcommand{\NIntCKcut}{\logicnamestyle{N}.\IntCK\cup\{\cut\}}
\newcommand{\NIntCKcutrep}{\logicnamestyle{N}.\IntCK\cup\{\cut,\rep\}}

\newcommand{\ps}{\mathcal{P}}

\newcommand{\rg}{\mathsf{R}}
\newcommand{\height}[1]{h(#1)}

\newcommand{\seqdern}[2]{\overset{\raise.4em\hbox{$\varDer_{#1}$}}{$#2$}}

\newcommand{\marianna}[1]{\textcolor{orange}{[\![M: #1]\!]}}
\newcommand{\tiz}[1]{\textcolor{purple}{[\![T: #1]\!]}}

\newcommand\T{\rule{0pt}{2.6ex}}       
\newcommand\TT{\rule{0pt}{3ex}}       
 

\begin{document}
\title{A 
	Proof-Theoretic 
	View of Basic Intuitionistic Conditional Logic (Extended Version)}
%
%
\author{Tiziano Dalmonte\inst{1}
	\and
Marianna Girlando\inst{2}
}
\authorrunning{T. Dalmonte and M. Girlando}
%
\institute{Free University of Bozen-Bolzano, Bolzano, Italy \and
University of Amsterdam, Netherlands\\
}
\maketitle              
\begin{abstract}
	Intuitionistic conditional logic, studied by Weiss, Ciardelli and Liu, and Olkhovikov, aims at providing a constructive analysis of conditional reasoning. 
	In this framework, the \emph{would}  and the \emph{might} conditional operators are no longer interdefinable. 
	The intuitionistic conditional logics considered in the literature are defined by setting Chellas' conditional logic $\CK$, whose semantics is defined using selection functions, within the constructive and intuitionistic framework introduced for intuitionistic modal logics. 
	This operation gives rise to a constructive and an intuitionistic variant of (might-free-) $\CK$, which we call $\CCKbox$ and $\IntCK$ respectively. 
	Building on the proof systems defined for $\CK$ and for intuitionistic modal logics, in this paper we introduce a nested calculus for $\IntCK$ and a sequent calculus for $\CCKbox$.  
	Based on the sequent calculus, we define $\CCK$, a conservative extension of Weiss' logic $\CCKbox$ with the might operator. We introduce a class of models and an axiomatization for $\CCK$, and extend these result to several extensions of $\CCK$. 
%

\keywords{Conditional logic  \and Intuitionistic modal  logic 
	\and Sequent calculus \and Nested sequents }
\end{abstract}

\section{Introduction}

Intuitionistic%
\footnote{A short version of this paper was accepted for presentation at TABLEAUX 2025.} conditional logic
aims at providing a constructive analysis of conditional reasoning
by combining 
the \intu\ implication ($\IMP$) with the  \emph{would}  $\cb$ and the \emph{might}  $\cd$ conditional operators. 
These modalities have been extensively studied from the 1970s, and several possible-world semantics have been proposed for them, including \emph{selection function}~\cite{chellas:1975}, \emph{sphere models}~\cite{lewis:1973} and \emph{preferential semantics}~\cite{burgess:1981}. 
Conditional operators have mostly been studied in a classical setting; the goal is now to interpret them constructively, providing a formal and modular framework for their analysis. 
This line of research shares several insights with the definition of \emph{intuitionistic modal logics}, where the intuitionistic implication is combined with the modalities $\Box$ and $\Diamond$. In this direction,   \emph{constructive} variants of modal logic $\K$ have been defined, inspired by a computational interpretation~\cite{alechina:etal:01,wij:1990}, while \emph{intuitionistic} versions of $\K$ rely on the correspondence with first-order intuitionistic logic~\cite{fischerservi:1984,simpson:1994}. 

The study of intuitionistic variants of conditional logics is a relatively new field of research. 
This analysis started with the study of
\intu\ variants of
Chellas' `basic' conditional logic $\CK$~\cite{chellas:1975}. 
This logic\footnote{In the literature, $\CK$  also denotes the constructive version of $\K$. We here use $\CK$ for Chellas' logic, and use the prefix $\mathsf{Const}$ for the `constructive' version of a logic.}, one of the simplest normal conditional logics, constitutes the `core' of several other systems, including preferential conditionals~\cite{burgess:1981} and Lewis' counterfactuals~\cite{lewis:1973}. 
The first \intu\ counterpart of $\CK$ was the system $\mathsf{ICK}$ (which we call $\CCKbox$)
proposed by Weiss in~\cite{weiss:2019}
and extended by Ciardelli and Liu in~\cite{ciardelli:2020} with additional conditional axioms.
These logics are defined in a language 
with $\cb$
as the only conditional modality.
While in classical conditional logics
the {might} operator $\cd$ is definable via the relation
$(\varphi \cd \psi) \biimp \neg (\varphi \cb \neg\psi)$ (\cite{lewis:1973}),
the same 
is generally assumed not to hold intuitionistically. 
Weiss, Ciardelli and Liu leave open the question of how to extend  $\ConstCKbox$ with the might operator $\cd$. 




An answer was given by \olk\ in~\cite{olk:2024}, with the definition of logic $\IntCK$, inspired by the Fisher Servi-style \emph{bi-relational semantics} for intuitionistic modal logic. 
While $\IntCK$ supports a meaningful interpretation of $\cd$,
it validates $\cb$-principles that are not valid in $\ConstCKbox$,
so that 
$\IntCK$ is \emph{not} a conservative extensions of Weiss' system. 
This is analogous to what happens for intuitionistic modal logics where, as pointed out in~\cite{das:2023},  logic $\iK$, the $\Box$-fragment of constructive $\K$~\cite{bozic:1984,wolter:1999} differs from $\IK$, the intuitionistic variant of modal logic comprising both $\Box$ and $\Diamond$~\cite{fischerservi:1984,simpson:1994}.  
And indeed, as observed by Weiss~\cite{weiss:2019} and \olk~\cite{olk:2024},
by defining  $\Box$ and $\diam$ in terms of conditionals (through $\top \cb \varphi$ and $\top \cd \varphi$, respectively)
one can reconstruct  $\iK$ and $\IK$ within $\CCKbox$ and $\ICK$, respectively.

Very few proof-theoretical accounts of \intu\ conditionals exist in the literature.
A labelled calculus for a constructive conditional logic representing access control policies is introduced in~\cite{genovese:2014}, while  a natural deduction system supporting  a BHK interpretation of counterfactual inferences is defined in~\cite{wik:2025}. 


In this paper we explore the proof theory of  intuitionistic conditional logic, and we propose an alternative answer to the question posed by Weiss, Ciardelli and Liu. 
We first present a nested sequent calculus $\NIntCK$ for $\IntCK$.
This calculus is defined by suitably combining Stra\ss burger's nested calculus for $\IK$ \cite{strassburger:2013}
and Alenda, Olivetti and Pozzato's nested calculus for $\CK$ \cite{alenda:2012,alenda:2016}.
In particular, $\NIntCK$ inherits from \cite{strassburger:2013} the treatment of the intuitionistic base,
and from \cite{alenda:2012,alenda:2016} the indexing of nested components to
 treat antecedents of conditional sentences. 

We  then introduce $\SCCKbox$, a sequent calculus for $\CCKbox$ based on Pattinson and Schröder's  calculus for classical $\CK$ \cite{pattinson:2011}. 
While this calculus does not require to enrich the structure of Gentzen-style sequents, the number of premisses of the conditional rules is bounded by the number of conditional operators appearing in the conclusion. $\SCCKbox$ is defined by restricting Pattinson and Schröder's calculus to  single-succedent sequents. 

Proof-theory serves as a guide to integrate the might operator into $\CCKbox$. We first define the classical rules for $\cd$, which  are duals of the rules for $\cb$ in Pattinson and Schröder's calculus. By  adding the single-succedent restriction, we obtain a proof-theoretic characterization of a new system, which we call $\CCK$, comprising both $\cb$ and $\cd$. We propose an axiomatisation and a class of models for $\CCK$. Since the $\cd$-free fragment of $\CCK$ corresponds to $\CCKbox$, this logic represent a natural answer to the question of how to integrate the $\cd$ modality in $\CCKbox$. 
We conclude by discussing extensions of $\CCK$ with 
\emph{identity}, \emph{conditional modus ponens} and \emph{conditional excluded middle}.

The paper is structured as follows. In \Cref{sec:preliminaries} we introduce axiomatisations and semantics for intuitionistic conditional logics. Then, in \Cref{sec:nested} we present the nested calculus $\NIntCK$ for $\IntCK$, and in \Cref{sec:basic} we introduce the sequent calculus $\SCCKbox$ for $\CCKbox$, and the new logic $\CCK$. Finally, in \Cref{sec:ext} we discuss extensions of $\CCK$, and we conclude with directions for future works in~\Cref{sec:con}. Full proofs of our results are reported in the Appendix.  


%

%
%
%
%
%

\section{Preliminaries}
\label{sec:preliminaries}
Fixed a countable set of propositional atoms $\atm$, the formulas of $\lanBD$ are generated as follows, for $p \in \atm$:
$
\varphi ::= p \mid \bot \mid \varphi \AND \varphi  \mid \varphi \OR \varphi \mid \varphi \imp \varphi \mid \varphi \cb \varphi \mid \varphi \cd \varphi
$. 
We assume the convention that
$\land, \lor$ bind more strongly than $\imp, \cb, \cd$.
We define $\neg \varphi := \varphi\imp\bot$, $\top := \lnot \bot$ and $\varphi \biimp \psi := (\varphi \imp \psi) \land (\psi \imp \varphi)$.
The formulas of $\lanB$ are defined as those of $\lanBD$, but by removing the operator $\cd$.
For $\cc \in \{\cb, \cd\}$, we call $\cc$-formula (or \emph{conditional} formula) any formula whose main connective is $\cc$. 
We use capital Latin letters $A, B, C, ...$ to denote \emph{sets} of formulas,
and capital Greek letters $\G, \D, \Sigma, ...$ to denote \emph{multisets} of formulas.


\begin{figure}[t!]
\centering

\begin{multicols}{2}
\begin{tabular}{ll}
\axCMbox\ \ & $(\varphi\cb \psi\land\chi ) \imp (\varphi \cb \psi)\land(\varphi\cb\chi)$ \\
\axCCbox & $(\varphi \cb \psi)\land(\varphi\cb\chi) \imp (\varphi\cb \psi\land\chi)$ \\
\axCNbox\ & $\varphi \cb \top$ \\
\axCMdiam\ \ & $(\varphi \cd \psi)\lor(\varphi\cd\chi) \imp (\varphi\cd \psi\lor\chi)$ \\
\axCCdiam & $(\varphi\cd \psi\lor\chi) \imp (\varphi \cd \psi)\lor(\varphi\cd\chi)$ \\
\axCNdiam & $\neg(\varphi \cd \bot)$ \\
\axCW & $(\varphi \cd \psi)\land(\varphi\cb\chi) \imp (\varphi\cd \psi\land\chi)$ \\
\axCFS & $((\varphi \cd \psi)\imp(\varphi\cb\chi)) \imp (\varphi\cb (\psi\imp\chi))$ \\
\axCKdiam & $(\varphi \cb (\psi \imp \chi)) \imp ((\varphi \cd \psi) \imp (\varphi \cd \chi))$ \\
\axdefdiam\ & $(\varphi \cd \psi) \biimp \neg (\varphi \cb \neg\psi)$ \\
\end{tabular}

\columnbreak

\begin{tabular}{l}
\ax{$\varphi \biimp \rho$}
\llab{\RAbox}
\uinf{$(\varphi\cb\psi) \biimp (\rho\cb\psi)$}
\disp \\ 

\\[-0.5em]

\ax{$\psi \biimp \chi$}
\llab{\RCbox}
\uinf{$(\varphi\cb\psi) \biimp (\varphi\cb\chi)$}
\disp \\

\\[-0.5em]

\ax{$\varphi \biimp \rho$}
\llab{\RAdiam}
\uinf{$(\varphi\cd\psi) \biimp (\rho\cd\psi)$}
\disp \\

\\[-0.5em]

\ax{$\psi \biimp \chi$}
\llab{\RCdiam}
\uinf{$(\varphi\cd\psi) \biimp (\varphi\cd\chi)$}
\disp \\
\end{tabular}
\end{multicols}

\begin{tabular}{lllll}
\emph{System name} \ \ & \emph{Language} \ \ & \emph{Prop. base} \ \ & \ \emph{Conditional axioms and rules} 
 \\
\hline
\ $\CKbox$ & \ \ \ $\lanB$ & \ \ \ $\CPL$ & \RAbox, \RCbox, \axCMbox, \axCCbox, \axCNbox. 
\TT \\
\ $\CK$ & \ \ \ $\lan$ & \ \ \ $\CPL$ & \RAbox, \RCbox, \axCMbox, \axCCbox, \axCNbox, \axdefdiam. \T \\
\ $\ConstCKbox$ & \ \ \ $\lanB$ & \ \ \ $\IPL$ & \RAbox, \RCbox, \axCMbox, \axCCbox, \axCNbox. \T \\
\ $\IntCK$ & \ \ \ $\lan$ & \ \ \ $\IPL$ & \RAbox, \RCbox, \RAdiam, \RCdiam, \axCMbox, \axCMdiam, \axCCbox, \T \\
&&& \axCCdiam, \axCNbox, \axCNdiam, \axCW, \axCFS. \\
\end{tabular}
\caption{\label{fig:axioms}Conditional axioms and rules, and axiomatic systems.}
\end{figure}

The classical (resp. intuitionistic) conditional logics we consider are defined extending 
any axiomatisation of classical  (resp. intuitionistic) propositional logic $\CPL$ (resp. $\IPL$) including the \emph{modus ponens} rule (\MP), with suitable axioms and rules, as summarised in Fig. \ref{fig:axioms}. 


As usual, we say that a rule 
$A / \varphi$
 is \emph{derivable} in a logic $\logic$
if there is a finite sequence of formulas $\psi_1, ..., \psi_n$ such that $\psi_n = \varphi$
and for all $1 \leq i \leq n$, $\psi_i \in \A$, or it is an (instance of an) axiom of $\logic$, or it is obtained from previous formulas by the application of a rule of $\logic$.
Moreover, a formula $\varphi$ is \emph{derivable} in $\logic$ (written $\vd_{\logic} \varphi$), if
$\emptyset / \varphi$
is derivable.
Finally, we say that $\psi$ is derivable in $\logic$ from a set of formulas $\A$  (written $A \vd_{\logic} \varphi$)
if there are $\psi_1, ..., \psi_n \in \A$ such that $\vd_{\logic} \psi_1 \land ... \land \psi_n \imp \varphi$. 


Logic $\ConstCKbox$ was introduced in~\cite{weiss:2019,ciardelli:2020} (under the name $\mathsf{ICK}$), while $\IntCK$ is from~\cite{olk:2024}. In~\cite{olk:2024}, \olk shows that if $\varphi$ is derivable in $\ConstCKbox$, then it is derivable in $\IntCK$. 
However, similarly to what happens in the intuitionistic modal  case, $\IntCK$ is a non-conservative extension of $\CCKbox$:  formula $ \lnot \lnot (\top \cb \bot) \IMP (\top \cb \bot) $ is derivable in the  $\cd$-free fragment of $\IntCK$, but it is \emph{not} derivable in $\ConstCKbox$~\cite{olk:2024}. 

We introduce the bi-relational Kripke models for $\ConstCKbox$ and $\IntCK$ 
from \cite{weiss:2019} and  \cite{olk:2024} respectively. 
We first present some general definitions. 

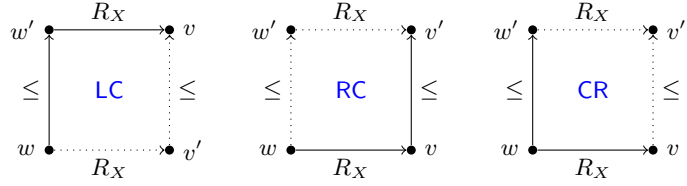
\begin{figure}[t!]\label{fig:fcbc}
	\begin{center}
			\begin{tikzpicture}[scale=0.8]
			\tikzstyle{node}=[circle,fill=black,inner sep=1.2pt]
			
			\node[label= left :{\small{$w$}}] (1) at (0,0) [node] {};
			\node[label= right :{\small{$v'$}}] (2) at (2,0) [node]
			{};
			\node[label=  right :{\small{$v$}}] (3) at (2,2) [node]
			{};
			
			\node[label= left :{\small{$w'$}}] (4) at (0, 2) [node]
			{};
			\node[][] at (1,1){\textsf{\textcolor{blue}{LC}}};

			\draw[->,dotted] (1) edge[below] node {\footnotesize{$R_X$}} (2);
			\draw[->, dotted] (2) edge[right] node{\footnotesize{$\leq$}} (3);
			\draw[-> ] (1) edge[left] node{\footnotesize{$\leq$}}  (4);
			\draw[->] (4) edge[above] node {\footnotesize{$R_X$}} (3);
			
		\end{tikzpicture}
		\quad 
		\begin{tikzpicture}[scale=0.8]
			\tikzstyle{node}=[circle,fill=black,inner sep=1.2pt]
			
			\node[label= left :{\small{$w$}}] (1) at (0,0) [node] {};
			\node[label= right :{\small{$v$}}] (2) at (2,0) [node]
			{};
			\node[label=  right :{\small{$v'$}}] (3) at (2,2) [node]
			{};
			
			\node[label= left :{\small{$w'$}}] (4) at (0, 2) [node]
			{};
			\node[][] at (1,1){\textsf{\textcolor{blue}{RC}}};

			\draw[->] (1) edge[below] node {\footnotesize{$R_X$}} (2);
			\draw[->] (2) edge[right] node{\footnotesize{$\leq$}} (3);
			\draw[->, dotted ] (1) edge[left] node{\footnotesize{$\leq$}}  (4);
			\draw[->,dotted,] (4) edge[above] node {\footnotesize{$R_X$}} (3);
			
		\end{tikzpicture}
		\quad 
			\begin{tikzpicture}[scale=0.8]
			\tikzstyle{node}=[circle,fill=black,inner sep=1.2pt]
			
			\node[label= left :{\small{$w$}}] (1) at (0,0) [node] {};
			\node[label= right :{\small{$v$}}] (2) at (2,0) [node]
			{};
			\node[label=  right :{\small{$v'$}}] (3) at (2,2) [node]
			{};
			
			\node[label= left :{\small{$w'$}}] (4) at (0, 2) [node]
			{};
			\node[][] at (1,1){\textsf{\textcolor{blue}{CR}}};

			\draw[->] (1) edge[below] node {\footnotesize{$R_X$}} (2);
			\draw[->, dotted] (2) edge[right] node{\footnotesize{$\leq$}} (3);
			\draw[-> ] (1) edge[left] node{\footnotesize{$\leq$}}  (4);
			\draw[->,dotted,] (4) edge[above] node {\footnotesize{$R_X$}} (3);
			
		\end{tikzpicture}
		\vspace{-0.5cm}
	\end{center}
	\caption{Left commutativity, right commutativity and Church-Rosser
	}
	\label{fig:frames}
\end{figure}

\begin{definition}\label{def:basic models}
A \emph{basic birelational Chellas model} (\emph{basic model} for short) is a tuple 
$\M = \langle \W, \futs, \R, \V \rangle $ consisting of $\W$, a non-empty set of elements called \emph{worlds}, a binary relation $\futs $ over $\W$ which is reflexive and transitive, and a \emph{monotone} valuation function $\V: \W \longrightarrow 2^{\atm}$, that is, for $w, w'\in \W$, if $w \futs w'$ then $\V(w) \subseteq \V(w')$. Finally, $\R \subseteq \W \times \pow(\W) \times \W$ is a relation associating to every $X \subseteq \W$ a binary relation $\Rof{X}$ on $W$. 
\end{definition}

\begin{definition}
	For $\M$ basic model, $X \subseteq \W$, we consider the properties (Fig.~\ref{fig:frames}):
	\begin{center}
	\begin{tabular}{l @{\quad} l}
		\emph{Left Commutativity:} & $\forall w, w', v  (w \futs w' \Rof{X} v \Rightarrow \exists v' ( w\Rof{X} v'\futs v))$. \\
		\emph{Right Commutativity:} & $\forall w, w', v (w \Rof{X} v \futs v' \Rightarrow \exists w' (w \futs w' \ \& \ w'\Rof{X} v'))$. \\
		\emph{Church-Rosser:
		} & $\forall w, w', v (w \futs w' \ \& \ w \Rof{X} v \Rightarrow \exists v' (w'\Rof{X} v' \ \& \ v \futs v'))$. \\
	\end{tabular}
	\end{center}
\end{definition}

\begin{definition}\label{def:sat}
For $\M$ basic model, $w$ world of $\M$ and $\varphi$ formula, the satisfaction relation $\M, w \sat \varphi$ is defined inductively as follows. 
For $\truthset{\psi} = \{w \in \W \mid \M, w \sat \psi\}$, 
conditionals have a   \emph{local} $(L)$ or \emph{global} $(G)$ interpretation:
	\begin{itemize}
		\item $\M, w \sat p $ \emph{iff} $p \in \V(w)$;
		\item $\M, w \not \sat \bot$;
		\item $\M, w \sat \varphi \AND \psi$  \emph{iff} $\M, w \sat \varphi $ and $\M, w \sat \psi$;
		\item $\M, w \sat \varphi \OR \psi$  \emph{iff} $\M, w \sat \varphi $ or $\M, w \sat \psi$;
		\item $\M, w \sat \varphi \IMP \psi$  \emph{iff} for all $w' \in \W $ s.t. $w \futs w'$, if $\M, w' \sat \varphi $ then $\M, w' \sat \psi$;
		\item$(L\cb)$ $\M, w \sat \varphi \cb \psi$  \emph{iff} for all $v \in \W $ s.t. $w \Rof{\truthset{\varphi}}v$, $\M, v \sat \psi$;
		\item$(G\cb)$ $\M, w \sat \varphi \cb \psi$  \emph{iff} for all $w',v \in \W $ s.t. $w \futs w'\Rof{\truthset{\varphi}}v$, $\M, v \sat \psi$; 
		\item$(L\cd)$ $\M, w \sat \varphi \cd \psi$  \emph{iff}  there is $v \in W$ s.t. $w\Rof{\truthset{\varphi}}v$ and $\M, v \sat \psi$; 
		\item$(G\cd)$ $\M, w \sat \varphi \cd \psi$  \emph{iff} for all $w'\in\W $ s.t. $w \futs w'$ there is $v \in W$ s.t. $w'\Rof{\truthset{\varphi}}v$ and $\M, v \sat \psi$.
\end{itemize}
\end{definition}

Models for $\CCKbox$ and $\IntCK$ are respectively defined in \cite{weiss:2019,olk:2024} as follows.

\begin{definition}
A \emph{Weiss model} 
is a basic model satisfying  \emph{left commutativity}. 
For $\varphi \in \lanB$, the satisfaction relation $\M, w \sat \varphi$ is inductively defined by the propositional clauses in Def.~\ref{def:sat}
and $(L\cb)$. 
\end{definition}

\begin{definition}
An \emph{Olkhovikov model}, or \emph{\intu\ Chellas model}, is a basic model satisfying \emph{right commutativity} and \emph{Church-Rosser}.
For  $\varphi \in \lanBD$, the satisfaction relation $\M, w \sat \varphi$ is inductively defined by the propositional clauses in Def.~\ref{def:sat}, $(G\cb)$ and $(L\cd)$. 
\end{definition}

For $\M$ any kind of model above, a formula $\varphi$ is \emph{valid in} $\M$ if it is satisfied by all its worlds.
For  $\varphi\in\lanB$, $\varphi$ is derivable in $\CCKbox$ iff it is valid in all Weiss models \cite{weiss:2019},
and for $\psi\in\lan$, $\psi$ 
is derivable in $\IntCK$ iff  it is valid in all \olk\ models \cite{olk:2024}.

%

\section{Nested sequent calculus for $\IntCK$}
\label{sec:nested}

In this section we present a nested sequent calculus $\NIntCK$ for logic $\IntCK$. The proof system `combines' the nested proof system for classical conditional  logic $\CK$ from Alenda et al. in~\cite{alenda:2012,alenda:2016} with  the nested sequent calculus for intuitionitstic modal logic $\IK$ introduced by Stra\ss burger in~\cite{strassburger:2013}. In particular, the treatment of the intuitionistic base is from~\cite{strassburger:2013}, while the idea of indexing nested components with formulas to evaluate conditional operators is from~\cite{alenda:2012,alenda:2016}. 

For convenience, we employ \emph{polarised} formulas, as in~\cite{strassburger:2013}. We associate to formulas $\varphi$ of $\lan$ a \emph{input polarity}, $\bl{-}$, or a \emph{output polarity}, $\wh{-}$. 
Polarities indicate the position of the formulas within two-sided sequents: input formulas $\bl{\varphi}$ can be thought as occurring in the antecedent of a sequent, while output formulas $\wh{\varphi}$ as formulas occurring in the consequent. 
Our nested sequents are `single-succedent', meaning that we allow \emph{exactly} output formula in a nested sequent. 
As it is usual for nested sequents, we enrich the structure of Gentzen-style sequent with an additional structural connective $[\,]$, which is now indexed with a non-polarised formula acting as index, or placeholder. A colon separates the index formulas from the rest of the nested sequent. The formal definition is given below. 

\begin{definition}	
	\emph{Intuitionistic conditional nested sequents} $\Gamma$ (\emph{nested sequents} for short)  
	are generated as follows, for $\varphi, \psi$ (polarized) formulas of the language: 
	$$
	\Lambda ::= \emptyset \mid \Lambda, \bl\varphi \mid \Lambda ,\NS{\psi}{\Lambda}
	\qquad 
	\Gamma ::= \Lambda, \wh\varphi \mid \Lambda, \NS{\psi}{\Gamma}
	$$
	Objects $\Lambda$ containing only input/$\bullet$-formulas are \emph{input sequents}, while objects $\Gamma$ containing exactly one output/$\circ$-formula are \emph{nested sequents}. 
	The formula interpretation of both objects  is defined as:
	\begin{equation*}
		\begin{array}{r c l @{\hspace{0.4cm}} r c l}
			\fint(\emptyset) & := &  \top  & \fint(\Lambda, \wh\varphi) & := & \fint(\Lambda) \IMP \varphi \\
			\fint(\Lambda, \bl \varphi) & := &  \fint(\Lambda) \AND \varphi  & 		\fint(\Lambda, \NS{\psi}{\Gamma}) & := & \fint(\Lambda) \IMP  \big( \psi \cb \fint(\Gamma) \big) \\
			\fint(\Lambda, \NS{\psi}{\Lambda'}) & := &  \fint(\Lambda) \AND  \big( \psi \cd \fint(\Lambda ') \big) & &  \\
		\end{array}
	\end{equation*}
\end{definition}
Observe that, according to our definition, nested sequents cannot be empty. 
A \emph{context} is a nested sequent  containing a \emph{hole}, $\{ \,\}$, to be filled with a nested sequent.  This notion is needed to apply rules `deep' within nested sequents. 

\begin{definition}
	We inductively define the notion of \emph{context}, denoted $\GC{\,}$, and \emph{depth} of a context, denoted $\depth{\GC{\,}}$, for $\Delta$ nested sequent and $\varphi$ formula:
	\begin{itemize}
		\item $\{\,\}$ is a context with $\depth{\{\,\}} = 0$;
		\item $\Delta, \GC{\,}$ is a context with $\depth{\Delta, \GC{\,}} = \depth{\GC{\,}}$;
		\item$\NS{\varphi}{ \GC{\,}}$ is a context with $\depth{\NS{\varphi}{ \GC{\,}} }= \depth{\GC{\,}}+1$.
	\end{itemize}
\end{definition}

A context containing only ${\bullet}$-formulas is  a  \emph{\bcont}, and a context containing exactly one  $\circ$-formula is a  \emph{\wcont}. 
A  {\bcont} filled with an input sequent is an input sequent, while a {\bcont} filled with a nested sequent is a nested sequent. A \wcont  can only be filled with an input sequent.



\begin{example}
	$\Lambda \{ \,\} = \bl{\varphi}, \bl{\psi}, \NS{\eta}{\bl{\zeta}, \{\, \}}$ is a \bcont, $\bl{\chi}$ is an input sequent, and  $\Delta= \bl{\chi}, \NS{\gamma}{\wh{\xi}}$ is a nested sequent. $\Lambda \{ \bl{\chi} \} = \bl{\varphi}, \bl{\psi}, \NS{\eta}{\bl{\zeta}, \bl{\chi}}$ is an input sequent, and $\Lambda \{\Delta \} = \bl{\varphi}, \bl{\psi}, \NS{\eta}{\bl{\zeta}, \bl{\chi}, \NS{\gamma}{\wh{\xi}} }$ is a nested sequent.  $\Gamma \{ \,\} = \bl{\varphi}, \bl{\psi}, \NS{\eta}{\wh{\zeta}, \{\, \}}$ is a \wcont; $\Gamma \{ \bl{\chi}\} $ is a nested sequent, while $\Gamma\{ \Delta \} $ is \emph{not} a nested sequent, as it contains two output formulas.  
\end{example}

\begin{figure}[t!]
\begin{center}
	\begin{tabular}{c}
		$\vlinf{\initv}{}{\GC{\bl{p}, \wh{p}}}{}$ 
		\qquad 
		$\vlinf{\brl{\bot}}{}{\GC{\bl{\bot}} }{}$ 
		\qquad 
		$\vlinf{\brl{\AND}}{}{ \GC{\bl{ \varphi \AND \psi }} }{ \GC{\bl{\varphi}, \bl{\psi}} }$
		\qquad
		$\vliinf{\wrl{\AND}}{}{ \GC{\wh{\varphi \AND \psi}} }{ \GC{\wh{\varphi}} }{ \GC{\wh{\psi}} }$\\[0.5cm]
		$\vliinf{\brl{\OR}}{}{ \GC{\bl{\varphi \OR \psi}} }{ \GC{\bl{\varphi}} }{ \GC{\bl{\psi}} }$
		\quad 
		$\vlinf{\wrl{\OR}}{}{ \GC{\wh{ \varphi \OR \psi }} }{ \GC{\wh{\varphi}}}$
		\quad 
		$\vlinf{\wrl{\OR}}{}{ \GC{\wh{ \varphi \OR \psi }} }{ \GC{\wh{\psi}}}$
		\quad 
		$\vliinf{\brl{\IMP}}{}{ \GC{\bl{\varphi \IMP \psi}} }{ \GCD{\bl{\varphi \IMP \psi}, \wh{\varphi}} }{\GC{\bl{\psi}} }$\\[0.5cm]
		$\vlinf{\wrl{\IMP}}{}{\GC{\wh{\varphi \IMP \psi}}}{\GC{\bl{\varphi}, \wh{\psi}}}$
		\qquad 
		$\vliiinf{\bl{\cb}}{}{ \GC{ \bl{\varphi \cb \psi}, \NS{\eta}{\Delta} } }{\bl{\varphi}, \wh{\eta}}{\bl{\eta}, \wh{\varphi}}{ \GC{ \bl{\varphi \cb \psi}, \NS{\eta}{\bl{\psi}, \Delta} } }$
		\qquad 
		$\vlinf{\wh{\cb}}{}{ \GC{\wh{\varphi \cb \psi}} }{ \GC{\NS{\varphi}{\wh{\psi}}} }$\\[0.5cm]
		$\vlinf{\bl{\cd}}{}{ \GC{\bl{\varphi \cd \psi}} }{ \GC{\NS{\varphi}{\bl{\psi}}} }$
		\qquad 
		$\vliiinf{\wh{\cd}}{}{ \GC{ \wh{\varphi \cd \psi}, \NS{\eta}{\Delta} } }{\bl{\varphi}, \wh{\eta}}{\bl{\eta}, \wh{\varphi}}{ \GC{  \NS{\eta}{\wh{\psi}, \Delta} } }$
	\end{tabular}
\end{center}	
%
%
%
%
%
\caption{The rules of $\NIntCK$}
\label{fig:nested:IntCK}
\end{figure}

The rules of $\NIntCK$ are defined in \Cref{fig:nested:IntCK}. 
Propositional rules are standard (refer, e.g., to~\cite{troelstra:2000}). For $\GC{\,}$ context, $\GCD{\,}$ denotes the result of removing the $\circ$-formula from $\GC{\,}$. 
The conditional rules, inspired from~\cite{alenda:2016}, `exemplify' the semantic conditions for $\cb$ and $\cd$. The rule $\wh{\cb}$, read bottom-up, creates a new nested component indexed with $\varphi$, representing the set of worlds accessible through $\Rof{\truthset{\varphi}}$. The rule $\bl{\cb}$ `moves' formula $\bl{\psi}$ to a nested component indexed with $\eta$, under the condition that $\eta$ and $\varphi$ are equivalent, i.e., that derivations of both $\bl{\varphi}, \wh{\eta}$ and $\bl{\eta} ,\wh{\varphi}$ can be constructed. 
A \emph{proof}, or \emph{derivation}, of a nested sequent $\Gamma$ in $\NIntCK$ is a tree whose root is labeled with $\Gamma$, whose leaves are occupied by instances of $\initv$ or $\bl{\bot}$, and such that the nested sequents occurring at intermediate nodes are obtained through application of $\NIntCK$-rules. 
If there is a proof of $\Gamma$ in $\NIntCK$, then $\Gamma$ is \emph{derivable} in $\NIntCK$. 

The proof of soundness of $\NIntCK$ is in Appendix~\ref{sec:app:soundness-nes}. The proof follows the structure of the soundness proof for intuitionistic nested sequents in~\cite{strassburger:2013}. 
\begin{restatable}{theorem}{soundnessnested}
	If a nested sequent $\nseq$ is derivable in $\NIntCK$, then $\derivesIntCk \fin{\nseq}$. 
\end{restatable}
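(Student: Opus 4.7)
The plan is to prove the statement by induction on the height of the derivation of $\nseq$ in $\NIntCK$. The key technical device is a \emph{context substitution lemma}: for every context $\GC{\,}$ (either a \bcont or a \wcont) and for any two polarised contents $X_1, X_2$ that legitimately fill the hole, an appropriate $\IntCK$-implication between the local formula interpretations at the hole lifts to an $\IntCK$-implication between $\fin{\GC{X_1}}$ and $\fin{\GC{X_2}}$. I would prove this lemma by induction on the depth of the context, using at each step the congruence rules \RAbox, \RCbox, \RAdiam, \RCdiam\ to propagate implications through the $\cb$- and $\cd$-components built by the $\NS{\cdot}{\cdot}$ structure, and the axioms \axCMbox, \axCCbox, \axCMdiam, \axCCdiam, together with \axCW\ and \axCFS, to handle the interaction of input and output material inside a nested component.

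With that in place, I would prove soundness by induction on the derivation height, case-splitting on the last rule applied. The base cases $\initv$ and $\brl{\bot}$ are immediate: $\fin{\GC{\bl p, \wh p}}$ reduces at the hole to the $\IPL$-tautology $p \imp p$, and $\fin{\GC{\bl \bot}}$ contains $\bot$ as a premise-side conjunct at the hole; the context substitution lemma then yields $\IntCK$-derivability in either case. The propositional rules follow from standard $\IPL$ reasoning at the hole, lifted through the context by the substitution lemma. For the single-premise conditional rules $\wrl{\cb}$ and $\brl{\cd}$, I would observe that the formula interpretations of premise and conclusion already coincide at the hole (for instance $\fin{\NS{\varphi}{\wh{\psi}}}$ unfolds to $\varphi \cb \psi$, matching the interpretation of $\wh{\varphi \cb \psi}$), so the induction hypothesis together with the substitution lemma suffices.

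The main obstacle will be the three-premise rules $\brl{\cb}$ and $\wrl{\cd}$. Here the first two premises $\bl \varphi, \wh \eta$ and $\bl \eta, \wh \varphi$ give, by the induction hypothesis, $\vd_\IntCK \varphi \imp \eta$ and $\vd_\IntCK \eta \imp \varphi$, hence $\vd_\IntCK \varphi \biimp \eta$. By \RAbox\ (respectively \RAdiam) this yields $(\varphi \cb \psi) \biimp (\eta \cb \psi)$ (resp. the $\cd$-variant). The third premise differs from the conclusion only by the addition of $\bl \psi$ inside $\NS{\eta}{\cdots}$, which at the local level prepends $\psi \imp$ to the inner formula interpretation. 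To bridge this gap one combines $\eta \cb \psi$ (obtained from the conclusion's $\bl{\varphi \cb \psi}$ together with the equivalence above, via \RAbox) with the material already present, using \axCCbox\ and the derived monotonicity of $\cb$ (obtainable from \RCbox\ and \axCMbox) in the $\brl{\cb}$ case, and \axCW\ together with \axCFS\ in the $\wrl{\cd}$ case. The delicate bookkeeping concerning how these combinations propagate through the surrounding $\GC{\,}$ is then handled uniformly by the context substitution lemma, completing the induction.
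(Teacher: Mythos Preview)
Your approach is essentially the paper's: a context-lifting lemma (the paper splits it into two, one for \bcont{}s and one for \wcont{}s, with the direction of the lifted implication flipping accordingly) followed by a rule-by-rule check. Most of your case analysis is right.

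There is one genuine gap, in your treatment of $\brl{\cb}$. You write that adding $\bl\psi$ inside $\NS{\eta}{\Delta}$ ``at the local level prepends $\psi \imp$ to the inner formula interpretation.'' This is only correct when the unique output formula lies inside $\Delta$. If instead the output formula is in the surrounding context $\GC{\,}$, then $\Delta$ is an \emph{input} sequent, $\NS{\eta}{\Delta}$ interprets as $\eta \cd \fin{\Delta}$ (not $\eta \cb \fin{\Delta}$), and adding $\bl\psi$ \emph{conjoins} $\psi$ rather than prepending an implication. In that case the local implication you must establish is $(\varphi\cb\psi)\land(\eta\cd\fin{\Delta}) \imp (\varphi\cb\psi)\land(\eta\cd(\psi\land\fin{\Delta}))$, which after replacing $\eta$ by $\varphi$ via \RAdiam\ is exactly an instance of \axCW, not of \axCCbox\ or $\cb$-monotonicity. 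The paper makes precisely this case split. A smaller slip: for $\wrl{\cd}$ the axiom you want is not \axCFS\ but the \axCKdiam-shaped principle $(\varphi\cb(\psi\imp\chi))\imp((\varphi\cd\psi)\imp(\varphi\cd\chi))$, which is derivable in $\IntCK$ from \axCW\ together with monotonicity of $\cd$; at the hole the third premise interprets as $\eta\cb(\fin{\Delta}\imp\psi)$ and the conclusion as $(\eta\cd\fin{\Delta})\imp(\varphi\cd\psi)$, and \axCFS\ goes the wrong way for that.
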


Next, we shall give a \emph{syntactic} proof of completeness of $\NIntCK$,  inspired from~\cite{alenda:2012,alenda:2016}.  
We wish to prove that $\cut$ is eliminable (\Cref{fig:structural-rules});  however, to do so, we need to prove eliminaibility of the \emph{index-replacement} rule $\rep$. So,  after  proving admissibility the structural rules (\Cref{lemma:str-prop}), we show how to `simulate' instances of $\cut$  and of $\rep$, in~\Cref{lemma:adm:rp,lemma:cut-adm} .  In \Cref{thm:cut-el-nes}, we show how to transform derivations in $\NIntCKcut$  into  derivations in $\NIntCK$. 

For  $\varphi$ formula of $\lan$,  the \emph{weight} of $\varphi$ is defined as the number of binary connectives occurring in $\varphi$. 
The \emph{height} of a $\NIntCK$ derivation $\nder$, denoted by $\height{\nder}$, is the length of its longest branch, minus 1. 
In the $\cut$ rule as displayed in \Cref{fig:structural-rules}, we call $\xi$ the \emph{cut formula}. 
The \emph{rank} of an application of the $\cut$ rule is set to be the weight of the cut formula, plus 1. The \emph{rank} of a derivation $\nder$ in  $\NIntCKcut$, denoted by $\rk{\nder}$, is the maximum of all the ranks of rules $\cut$ occurring in $\nder$.
In the $\rep$ rule as displayed in \Cref{fig:structural-rules}, we call $\varphi$ the \emph{rep-formula}. 

Given a rule $\rg = \psi_1, .., \psi_n / \chi$ and a proof system $\ps$,  $\rg$ is \emph{admissible} in $\ps$ if, whenever there are $\ps$-proofs $\nder_1, .., \nder_n$ of $\psi_1, .., \psi_n$, then there is a $\ps$-proof $\nder$ of $\chi$.  
$\rg$ is \emph{height-preserving (hp-) admissible}  in $\ps$ if $\height{\nder} \leq \max(\height{\nder_1}, ..,\height{\nder_n})$, and it is \emph{rank-preserving (rk-) admissible} in $\ps$ if $\rk{\nder} \leq \max(\rk{\nder_1}, ..,\rk{\nder_n})$. 
Moreover, $\rg$ is \emph{invertible} in $\ps$ if, whenever there is a $\ps$-proof $\nder$ of $\chi$, then there are $\ps$-proofs $\nder_1, .., \nder_n$ of $\psi_1, .., \psi_n$. If, for every $i \leq n$,  $\height{\nder_i}\leq \height{\nder}$, the rule is \emph{hp-invertible}, and if $\rk{\nder_i}\leq \rk{\nder}$ the rule is \emph{rp-invertible}. 

We omit the proof of the following result, which easily follows	by induction on $\w{\varphi}$. Full proofs for the subsequent lemmas can be found in \Cref{sec:app:cut-el-nested}. 

\begin{proposition}
	\label{prop:gen:init}
	Every sequent of the form $\GC{\bl{\varphi}, \wh{\varphi}}$ is derivable in $\NIntCK$. 
\end{proposition}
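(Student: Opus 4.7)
My plan is to proceed by induction on $\w{\varphi}$. A preliminary observation: $\GC{\,}$ must be a \bcont, since otherwise $\GC{\bl\varphi,\wh\varphi}$ would contain two $\circ$-formulas, violating the single-succedent restriction. For the base case ($\w{\varphi}=0$), either $\varphi=p$ is atomic and the sequent is an instance of $\initv$, or $\varphi=\bot$ and the sequent is closed by $\brl\bot$.

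For the inductive step I unfold both $\bl\varphi$ and $\wh\varphi$ bottom-up using the rules for the main connective of $\varphi$, so that every leaf reduces to an instance of the IH on a strict subformula. The propositional cases are routine: for $\psi\wedge\chi$ apply $\wrl\wedge$ then $\brl\wedge$ in each branch, reaching $\GC{\bl\psi,\bl\chi,\wh\psi}$ and $\GC{\bl\psi,\bl\chi,\wh\chi}$; for $\psi\vee\chi$ apply $\brl\vee$ then the appropriate $\wrl\vee$ in each branch, reaching $\GC{\bl\psi,\wh\psi}$ and $\GC{\bl\chi,\wh\chi}$; for $\psi\imp\chi$ apply $\wrl\imp$ and then $\brl\imp$, viewing the surviving $\bl{\psi\imp\chi}$ as sitting inside the \wcont\ obtained from $\GC{\,}$ by adding $\bl\psi$ and $\wh\chi$, to reach $\GC{\bl{\psi\imp\chi},\bl\psi,\wh\psi}$ and $\GC{\bl\psi,\bl\chi,\wh\chi}$. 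In every case each leaf has the form $\Sigma\{\bl\alpha,\wh\alpha\}$ for some strict subformula $\alpha$ of $\varphi$ and some \bcont\ $\Sigma\{\,\}$, so the IH applies.

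The conditional cases use the same strategy with one key design choice: the new index $\eta$ in the conditional rules is taken equal to the antecedent of the conditional. For $\varphi=\psi\cb\chi$, apply $\wrl\cb$ to obtain $\GC{\bl{\psi\cb\chi},\NS{\psi}{\wh\chi}}$, then apply $\brl\cb$ with $\eta:=\psi$; this collapses the two equivalence premises to $\bl\psi,\wh\psi$, closable by IH on $\psi$ in the trivial \bcont\ $\{\,\}$, and leaves a third premise $\GC{\bl{\psi\cb\chi},\NS{\psi}{\bl\chi,\wh\chi}}$, which has the form $\Sigma\{\bl\chi,\wh\chi\}$ for the \bcont\ $\Sigma\{\,\}=\GC{\bl{\psi\cb\chi},\NS{\psi}{\{\,\}}}$ and is closed by IH on $\chi$. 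The case $\varphi=\psi\cd\chi$ is entirely dual: apply $\brl\cd$ and then $\wrl\cd$ with $\eta:=\psi$. No genuine obstacle arises; the only non-routine ingredient is the index choice $\eta:=\psi$, which is precisely what reduces the equivalence premises of the conditional rules to smaller instances of the very statement being proved.
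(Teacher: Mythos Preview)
Your proof is correct and follows exactly the approach the paper indicates: induction on $\w{\varphi}$. The paper omits the details entirely (``We omit the proof of the following result, which easily follows by induction on $\w{\varphi}$''), while you carry out each case carefully; in particular, your choice $\eta := \psi$ in the conditional cases is precisely the intended trick, and your handling of the $\brl\IMP$ case via the \wcont $\GC{\{\,\},\bl\psi,\wh\chi}$ is right.
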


\begin{figure}[t]
	\[
		\vlinf{\wk}{}{\GC{\Lambda}}{\GC{\emptyset}}
	\qquad 
	\vlinf{\necs}{}{\NS{\eta}{\Gamma}}{\Gamma}
	\qquad 
		\vliinf{\meds}{}{\GC{\NS{\eta}{\Delta_1, \Delta_2}}}{ \GC{\NS{\eta}{\Delta_1}} }{\GC{\NS{\eta}{\Delta_2}}}
		\qquad 
	\vlinf{\ctr}{}{\GC{\bl{\varphi}}}{\GC{\bl{\varphi}, \bl{\varphi}}}
	\]
	\[
	\vliinf{\cut}{}{\GC{\emptyset}}{\GCD{\wh{\xi}}}{\GC{\bl{\xi}}}
	\qquad 
	\vliiinf{\rep}{}{\GC{\NS{\eta}{\Delta}}}{\bl{\varphi}, \wh{\eta}}{\bl{\eta},\wh{\varphi}}{\GC{\NS{\varphi}{\Delta}}}
	\]
	\caption{
		The structural rules of \emph{weakening}, \emph{necessitation}, \emph{box-medial}, \emph{contraction} and \emph{cut}, and the \emph{index-replacement} rule. In the $\wk$ rule, $\Lambda$ needs to be an input sequent.}
	\label{fig:structural-rules}
\end{figure}

\begin{restatable}{lemma}{structuralnes} The following hold in $\NIntCKcutrep$:
	\label{lemma:str-prop}
	\begin{enumerate}[$a)$]
		\item 
		\label{it:adm:rules}
		The rules of  $\wk$, $\necs$, $\meds$ and $\ctr$ are hp- and rp-admissible; 
		\item 
		\label{it:inv} Rules $\bl{\AND}$, $\wh{\AND}$, $\bl{\OR}$, $\wh{\IMP}$, $\bl{\cb}$, $\wh{\cb}$ and $\bl{\cd}$ are hp- and rp-invertible.
	\end{enumerate}
\end{restatable}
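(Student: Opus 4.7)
The plan is to prove both parts by inductions on the height of the input derivations, following the standard template for structural proof analysis in intuitionistic and conditional nested calculi (cf.\ \cite{strassburger:2013,alenda:2016}). Since none of the structural rules in question introduces any new instance of $\cut$, rank-preservation follows automatically from height-preservation: a transformation that never adds a $\cut$-inference cannot increase the maximum cut-rank. It therefore suffices to verify hp-admissibility in part (a) and hp-invertibility in part (b).

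For part (a), I proceed in the order $\wk$, $\necs$, $\meds$, $\ctr$, because only $\ctr$ depends on the invertibility results of part (b). For $\wk$, I induct on the height of a derivation $\nder$ of $\GC{\emptyset}$: at the leaves, instances of $\initv$ and $\brl{\bot}$ remain instances after adding input formulas to a \bcont, and for the inductive step any rule of $\NIntCKcutrep$ commutes with $\wk$ since no rule inspects the \emph{absence} of formulas. The case of $\necs$ is analogous, embedding the entire derivation inside a freshly indexed nested component and observing that every rule still applies deep inside the new component. For $\meds$, I use an induction on the sum of the heights of the two input derivations with a case analysis on their last rules: when a rule acts within one of the $\Delta_i$, the inductive hypothesis is invoked on its premises paired with the untouched derivation of the other premise, and the rule is then reapplied on the merged sequent. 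Finally, $\ctr$ is handled by induction on height: if the duplicate is not principal in the last rule we commute; if it is principal, we apply the corresponding hp-invertibility lemma from part (b) to eliminate the other copy, obtaining a smaller derivation.

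For part (b), each of the listed rules is treated uniformly by induction on the height of a derivation of its conclusion. If the conclusion is an axiom, then the premise(s) of the invertibility statement are also axioms, possibly after appealing to the generalized identity of \Cref{prop:gen:init}. Otherwise, we distinguish two subcases. If the last rule of the conclusion's derivation is \emph{not} the one being inverted, we commute: apply the inductive hypothesis to each premise and reapply the last rule. If it \emph{is}, then the premises of the invertibility statement coincide with, or are immediately obtained from, those of the last derivation step.

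The main obstacle will be the admissibility of $\meds$. The delicate point is that conditional nested sequents are `single-succedent', so the unique output formula may sit either in the outer context or inside one of $\Delta_1, \Delta_2$; the two input derivations can therefore have quite different shapes around the target component, and one has to consider all positions where the output can occur. Moreover, the $\rep$ rule can rewrite the index $\eta$ of the component we are merging, so the two premises need not be aligned on that index, and re-synchronisation may require inserting a suitable instance of $\rep$ into the merged derivation. A related but milder subtlety appears for $\ctr$ when the duplicated formula is principal: careful bookkeeping of heights through the hp-invertibility lemmas of part (b) is needed, which is precisely why the two parts must be proved in the stated order.
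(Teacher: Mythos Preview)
Your overall plan matches the paper's and is largely sound, but your handling of $\meds$ is both over-engineered and, as stated, does not establish height-preservation. The paper argues $\meds$ by a single induction on the height of the derivation $\nder_2$ of $\GC{\NS{\eta}{\Delta_2}}$, essentially reducing it to hp-admissible $\wk$: since the two premisses of $\meds$ share the context $\GC{\,}$ and must both be well-formed nested sequents, the unique output formula necessarily sits in $\GC{\,}$, so $\Delta_1$ is an \emph{input} sequent and can simply be weakened into the $\eta$-component of the second premiss. In particular the first derivation is never inspected, and the ``obstacle'' you anticipate---$\rep$ desynchronising the index of the two derivations---does not arise at all.

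By contrast, your induction on the sum of heights, permuting the last rule of whichever derivation happens to act inside its $\Delta_i$, fails to preserve height: if that derivation is the one of smaller height $k$ while the other has height $\ell\geq k$, the inductive hypothesis yields a merged derivation of height at most $\max(k{-}1,\ell)=\ell$, and reapplying the permuted rule gives height $\ell{+}1>\max(k,\ell)$. You also do not describe the case where the last rule acts in the shared context $\GC{\,}$ rather than inside a $\Delta_i$; there the two premisses' contexts no longer match after the permutation, and since you explicitly place the proof of $\meds$ \emph{before} part (b), you cannot invoke invertibility to realign them. Both difficulties vanish with the paper's one-sided argument.

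One smaller omission concerning $\ctr$: when the duplicated $\bullet$-formula is principal (notably in $\bl{\cd}$), applying invertibility produces two sibling $\NS{\varphi}{\cdot}$ components, and the paper notes that $\meds$ together with inner applications of $\ctr$ is then needed to collapse them; you do not mention this dependency.
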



\begin{restatable}{lemma}{replacement}
	\label{lemma:adm:rp}
	If $\nder^1$, $\nder^2$, $\nder^3$ are $\NIntCKcut$ derivations of 
	 $\bl{\varphi}, \wh{\eta}$ and $ \bl{\eta}, \wh{\varphi}$ and $\GC{\NS{\varphi}{\Delta}}$ respectively,  
	then there is a $\NIntCKcut$ derivation $\nder $ of $\GC{\NS{\eta}{\Delta}}$, such that 
	$\rk{\nder} = \max(\rk{\nder^1}, \rk{\nder^2}, \rk{\nder^3}, \w{\varphi}+1)$.
\end{restatable}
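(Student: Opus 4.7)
The natural strategy is induction on the height of $\nder^3$, following the nested-sequent technique of \cite{alenda:2012,alenda:2016} adapted to the intuitionistic setting. The base case is immediate: an instance of $\initv$ or $\brl{\bot}$ is indifferent to the index of any nested component, so $\GC{\NS{\eta}{\Delta}}$ is again an instance of the same axiom, with rank $0$.

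For the inductive step I analyse the last rule $\rg$ of $\nder^3$. When $\rg$ is not principal on the distinguished $\NS{\varphi}{\Delta}$, each premise still carries a $\NS{\varphi}{\cdot}$ at the same address and has strictly smaller height; the induction hypothesis together with a reapplication of $\rg$ yields the conclusion inside the rank bound. When $\rg\in\{\bl{\cb},\wh{\cd}\}$ propagates a formula inside $\NS{\varphi}{\Delta}$ --- say $\bl{\cb}$ with principal $\bl{\chi\cb\psi}$ --- the main premise still contains $\NS{\varphi}{\cdot}$, so IH produces the analogous derivation with index $\eta$, after which I reapply the rule using $\eta$ as index. To supply the side-premises of this application I need fresh equivalences $\bl{\chi},\wh{\eta}$ and $\bl{\eta},\wh{\chi}$, which I build by cutting on $\varphi$ (using hp-admissible weakening from \Cref{lemma:str-prop}($a$) to align contexts) from $\nder^1$, $\nder^2$ and the equivalence premises of the original $\bl{\cb}$; these cuts have rank exactly $\w{\varphi}+1$, matching the bound.

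The main obstacle is the case where $\rg$ creates the very nested component being replaced: $\rg=\wh{\cb}$ with $\Delta=\wh{\sigma}$ and premise $\GC{\wh{\varphi\cb\sigma}}$, or dually $\rg=\bl{\cd}$. Here the premise no longer exhibits a $\NS{\varphi}{\cdot}$ position on which IH can act. My plan is first to build, cut-free, a derivation of $\GC{\bl{\varphi\cb\sigma},\NS{\eta}{\wh{\sigma}}}$: take the generalized-initial sequent $\GC{\bl{\varphi\cb\sigma},\NS{\eta}{\bl{\sigma},\wh{\sigma}}}$ from \Cref{prop:gen:init} and apply $\bl{\cb}$ with principal $\bl{\varphi\cb\sigma}$ and index $\eta$, feeding $\nder^1$ and $\nder^2$ into the two equivalence premises. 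Weakening the premise $\GC{\wh{\varphi\cb\sigma}}$ of $\rg$ by the input component $\NS{\eta}{\emptyset}$ aligns contexts, and a final cut on $\varphi\cb\sigma$ then produces $\GC{\NS{\eta}{\wh{\sigma}}}$. The delicate point is that this cut is on a compound formula: keeping its rank at $\w{\varphi}+1$ is the most demanding step, and is where the precise shape of the $\bl{\cb}$/$\wh{\cb}$ rules, together with the invertibility profile of \Cref{lemma:str-prop}($b$), must be exploited carefully.
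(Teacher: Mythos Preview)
Your induction on the height of $\nder^3$ and your treatment of the genuinely interesting cases---$\bl{\cb}$ and $\wh{\cd}$ acting on the displayed component $\NS{\varphi}{\Delta}$, where you rebuild the equivalence side-premises by cutting on $\varphi$---are exactly what the paper does.

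The ``main obstacle'' in your last paragraph, however, rests on a misreading of the rule $\wh{\cb}$. In $\NIntCK$ the rule has premiss $\GC{\NS{\rho}{\wh{\sigma}}}$ and \emph{conclusion} $\GC{\wh{\rho\cb\sigma}}$, not the other way round. Hence if the root of $\nder^3$ is $\GC{\NS{\varphi}{\Delta}}$ and the last rule applied is $\wh{\cb}$, the active position of that rule is some formula $\wh{\rho\cb\sigma}$ sitting inside the root; the component $\NS{\varphi}{\Delta}$ is either part of the surrounding context or contains $\wh{\rho\cb\sigma}$ inside $\Delta$, and in both cases the premiss of $\wh{\cb}$ still carries a $\NS{\varphi}{\cdot}$ position. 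So $\wh{\cb}$ (and dually $\bl{\cd}$) falls under your ``$\rg$ not principal on $\NS{\varphi}{\Delta}$'' clause, which is precisely how the paper handles it: only $\bl{\cb}$ and $\wh{\cd}$ need special treatment. There is no case in which the last rule manufactures the very component you want to replace.

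Your proposed workaround would in any event break the rank bound: the cut you introduce is on $\varphi\cb\sigma$, whose weight is $\w{\varphi}+\w{\sigma}+1$, so that cut has rank $\w{\varphi}+\w{\sigma}+2 > \w{\varphi}+1$. No appeal to invertibility can lower this. Fortunately the case is vacuous, and once it is dropped your argument coincides with the paper's.
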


\begin{restatable}{lemma}{cutadmnes}
	\label{lemma:cut-adm}
	If $\nder^1$, $\nder^2$  are $\NIntCKcut$ derivations of $\GCD{\wh{\xi}}$ and $\GC{\bl{\xi}}$ such that $\rk{\nder^1} = \rk{\nder^2} = \w{\xi}$, then there is a $\NIntCKcutrep$ derivation $\nder$ of $\GC{\emptyset}$ s.t. $\rk{\nder} = \w{\xi}$.  Every rep-formula $\zeta$ in $\nder$ is s.t.  $\w{\zeta} < \w{\xi}$. 
\end{restatable}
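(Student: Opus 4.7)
The plan is to proceed by induction on $\height{\nder^1} + \height{\nder^2}$, analyzing the last rule of each derivation. As is standard for cut elimination in nested sequent calculi, the cases split into three families: \emph{axiom cases}, when one derivation ends with $\initv$ or $\brl{\bot}$; \emph{commutative cases}, when the cut formula $\xi$ is not principal in the last rule of one of the derivations; and \emph{principal cases}, when both last rules introduce $\xi$. The axiom cases are handled by direct inspection, using \Cref{prop:gen:init} together with the hp-admissibility of weakening from \Cref{lemma:str-prop}. The commutative cases are handled by applying the IH to the cut above the non-principal rule and then reapplying that rule; for rules that modify the context in sensitive positions (such as $\brl{\IMP}$, $\bl{\cb}$, $\wh{\cd}$), this may require first applying hp-invertibility from \Cref{lemma:str-prop} to $\nder^1$, or using hp-admissibility of $\wk$ and $\necs$ to adapt its context.

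Principal cases for propositional $\xi$ are standard, reducing the cut to cuts on strictly smaller subformulas. The crucial case is for conditionals, say $\xi = \varphi \cb \psi$: then $\nder^1$ ends with $\wh{\cb}$ from a premiss $\nder^1_0$ deriving $\GCD{\NS{\varphi}{\wh{\psi}}}$, while $\nder^2$ ends with $\bl{\cb}$ applied to the principal $\bl{\xi}$, with premisses $\mathcal{E}_1 : \bl{\varphi}, \wh{\eta}$, $\mathcal{E}_2 : \bl{\eta}, \wh{\varphi}$, and $\nder^2_3 : \GC{\bl{\xi}, \NS{\eta}{\bl{\psi}, \Delta}}$. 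We proceed in three steps. First, we apply the IH (valid since $\height{\nder^2_3} < \height{\nder^2}$) to cut $\bl{\xi}$ from $\nder^2_3$ against a suitably weakened version of $\nder^1$, producing a $\NIntCKcutrep$ derivation of $\GC{\NS{\eta}{\bl{\psi}, \Delta}}$ of rank at most $\w{\xi}$ whose rep-formulas have weight strictly less than $\w{\xi}$. Second, we apply an explicit instance of the $\rep$ rule with rep-formula $\varphi$ (whose weight satisfies $\w{\varphi} < \w{\xi}$), using $\mathcal{E}_1$, $\mathcal{E}_2$, and $\nder^1_0$, to re-index the block created by $\wh{\cb}$ from $\varphi$ to $\eta$. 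Third, we cut on $\psi$ (of weight $\w{\psi} < \w{\xi}$) between the two resulting derivations, aligning the contexts via hp-admissible weakening, necessitation, and merge, to obtain the desired derivation of $\GC{\emptyset}$. The dual case $\xi = \varphi \cd \psi$ is handled symmetrically, with $\wh{\cd}$ in $\nder^1$ and $\bl{\cd}$ in $\nder^2$.

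The main obstacle is the conditional principal case: after the explicit $\rep$ application introduces an $\eta$-indexed copy of the block created by $\wh{\cb}$, alongside the existing $\eta$-indexed block sitting in the surrounding context, a careful combination of the hp-admissible structural rules $\wk$, $\necs$, $\meds$ is needed to align the structure of the intermediate derivations for the subsequent cut on $\psi$. Throughout the construction, we verify that every newly introduced cut has weight at most $\w{\xi}$ and that every rep-formula has weight strictly less than $\w{\xi}$, matching the lemma's conclusion.
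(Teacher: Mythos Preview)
Your proposal is correct and follows essentially the same route as the paper: induction on the sum of heights, the three-way case split (axiom, commutative, principal), and in the conditional principal case the combination of (i) the IH applied to the cut on $\xi$ against $\nder^2_3$, (ii) an explicit $\rep$ on the antecedent $\varphi$ to re-index the block produced by $\wh{\cb}$, and (iii) an explicit $\cut$ on $\psi$ left in the resulting $\NIntCKcutrep$ derivation. One small point: in the final step of the $\cb$-case you will also need hp-admissible contraction $\ctr$ (not just $\wk$, $\necs$, $\meds$), since after the cut on $\psi$ the two $\eta$-indexed blocks merge to $\NS{\eta}{\Delta,\Delta}$ and must be contracted back to $\NS{\eta}{\Delta}$.
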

\begin{proof}
	By induction on $m + n$, the sum of heights of $\nder^1$ and $\nder^2$. We here report one case, and refer to \Cref{sec:app:cut-el-nested} for more details. 
	Suppose that the cut formula is $\varphi \cb \psi$ is principal in the last rule applied in both $\nder^1$, $\nder^2$. 
	Then, for $\Delta$ and $\NS{\eta}{\Delta}$ input sequents, we have:
	\vspace{-0.5cm}
	$$
	\vlderivation{
		\vlin{\wh{\cb}}{}{ \GCD{ \wh{\varphi \cb \psi}, \NS{\eta}{\Delta} }  }{
			\vltr{\nder^1_1}{ \GCD{ \NS{\varphi}{\wh{\psi}} ,\NS{\eta}{\Delta} }	}{\vlhy{}}{\vlhy{\quad }}{\vlhy{}}
		}
	}	
	\qquad 
	\vlderivation{
		\vliiin{\bl{\cb}}{}{ \GC{ \bl{\varphi \cb \psi}, \NS{\eta}{\Delta} } }{
			\vltr{\nder^2_1}{	\bl{\varphi}, \wh{\eta} }{\vlhy{}}{\vlhy{\quad }}{\vlhy{}}
		}{
			\vltr{\nder^2_2}{	\bl{\eta}, \wh{\varphi} }{\vlhy{}}{\vlhy{\quad }}{\vlhy{}}
		}{
			\vltr{\nder^2_3}{  \GC{ \bl{\varphi \cb \psi}, \NS{\eta}{\bl{\psi}, \Delta} }  }{\vlhy{}}{\vlhy{\quad }}{\vlhy{}}
		}
	}
	$$
	We construct the following derivation $\nder$, where $\mathcal{Q} = 
		\GC{ \bl{\varphi \cb \psi}, \NS{\eta}{\bl{\psi}, \Delta} }  $:
			\vspace{-0.3cm}

	\begin{adjustbox}{max width = \textwidth}
		$
	\vlderivation{
		\vliq{\ctr,\meds}{}{ \GC{\NS{\eta}{\Delta}} }{
			\vliin{\cut}{}{ \GC{\NS{\eta}{\Delta}, \NS{\eta}{\Delta}}  }
					{
					\vliq{\wk}{}{ \GCD{  \NS{ \eta}{\wh{\psi}, \Delta }, \NS{\eta}{\Delta} } }{
						\vliiin{\rep}{}{ \GCD{  \NS{ \eta}{\wh{\psi} }, \NS{\eta}{\Delta} }   }{
							\vltr{\nder^2_1}{	\bl{\varphi}, \wh{\eta} }{\vlhy{}}{\vlhy{\quad }}{\vlhy{}}
						}{
							\vltr{\nder^2_2}{	\bl{\eta}, \wh{\varphi} }{\vlhy{}}{\vlhy{\quad }}{\vlhy{}}
						}{
							\vltr{\nder^1_1}{ \GCD{ \NS{\varphi}{\wh{\psi}} ,\NS{\eta}{\Delta} }	}{\vlhy{}}{\vlhy{\quad }}{\vlhy{}}
						}
					}
				}
				%
				{
				\vliq{\wk}{}{\GC{\NS{\eta}{ \bl{\psi}, \Delta }, \NS{\eta}{\Delta}}} {
					\vliid{ih}{}{\GC{\NS{\eta}{ \bl{\psi}, \Delta }} }{
						\vliq{\wk}{}{ \GCD{ \wh{\varphi \cb \psi}, \NS{\eta}{\bl{\psi}, \Delta} } }{
							\vlin{\wh{\cb}}{}{ \GCD{ \wh{\varphi \cb \psi}, \NS{\eta}{\Delta} }  }{
								\vltr{\nder^1_1}{ \GCD{ \NS{\varphi}{\wh\psi} ,\NS{\eta}{\Delta} }	}{\vlhy{}}{\vlhy{\quad }}{\vlhy{}}
							}
						}
					}
				{
					\vltr{\nder^2_3}{  
						\mathcal{Q}
					}{\vlhy{}}{\vlhy{\quad }}{\vlhy{}}
					}
				}
			}
		}
	}
	$
\end{adjustbox}\\

	\noindent The application of the inductive hypothesis is justified, since the sum of heights of the two subderivations above $ih$ is smaller than $m + n$.
	Next, we need to check that $\rk{\nder} \leq \w{\varphi \cb \psi} $. From \Cref{lemma:adm:rp}, we have that $\rk{\GCD{  \NS{ \eta}{\wh{\psi} }, \NS{\eta}{\Delta} } } = \max(\rk{\nder^2_1}, \rk{\nder^2_2}, \rk{\nder^1_1}, \w{\varphi} + 1)$.  By $ih$, $\rk{ \GC{\NS{\eta}{ \bl{\psi}, \Delta }}  } \leq \w{\varphi \cb \psi}$. 
	Then, $\rk{\nder} =   \max(\rk{\nder^2_1}, \rk{\nder^2_2}, \rk{\nder^1_1}, \w{\varphi} + 1,  \w{\varphi \cb \psi}, \w{\psi} + 1)$. By assumption, $\rk{\nder^i_j} \leq \w{\varphi \cb \psi}$, for $i \in \{1,2\}$ and $j\in\{1,2,3\}$. Moreover, $\w{\varphi} + 1 \leq \w{\varphi \cb \psi }$, and $\w{\psi} + 1 \leq \w{\varphi \cb \psi }$. Thus, $\rk{\nder} \leq \w{\varphi \cb \psi} $.
	Finally, the rep-formula $\varphi$ of the new application of $\rep$ in the derivation is such that $\w{\varphi}< \w{\varphi \cb \psi}$. 
\end{proof}

The following proof, reported in \Cref{sec:app:cut-el-nested}, consists in iterating  \Cref{lemma:cut-adm} and \Cref{lemma:adm:rp} in a certain order, until a $\cut$- and $\rep$-free derivation is obtained. 

\begin{restatable}[Cut-elimination]{theorem}{cutelnes}
	\label{thm:cut-el-nes}
	If $\nseq$ is derivable in $\NIntCKcut$, then $\nseq$ is derivable in $\NIntCK$. 
\end{restatable}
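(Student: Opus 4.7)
The plan is to induct on the rank of the given $\NIntCKcut$ derivation $\nder$. The base case is immediate: if $\rk{\nder} = 0$ then $\nder$ contains no applications of $\cut$ and is already a $\NIntCK$ derivation. For the inductive step, suppose $\rk{\nder} = r+1$; the goal is to construct a derivation of the same endsequent with rank at most $r$, to which the induction hypothesis then applies.

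To do this, I would select a topmost occurrence of $\cut$ of rank $r+1$ in $\nder$, that is, an instance whose cut formula $\xi$ satisfies $\w{\xi} = r$ and whose two premise subderivations $\nder^1, \nder^2$ have rank at most $r$ (which is guaranteed by the topmost choice). Lemma \ref{lemma:cut-adm} transforms $\nder^1$ and $\nder^2$ into a $\NIntCKcutrep$ derivation $\nder^*$ of the conclusion with $\rk{\nder^*} = r$, possibly at the cost of introducing applications of $\rep$; crucially, every rep-formula $\zeta$ occurring in $\nder^*$ satisfies $\w{\zeta} < r$.

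I would then clean up $\nder^*$ by eliminating its $\rep$ applications one at a time, each time selecting a topmost remaining instance, so that its three premise subderivations, being already rep-free, form $\NIntCKcut$ derivations as required by the hypothesis of Lemma \ref{lemma:adm:rp}. Each such elimination introduces only $\cut$-applications of rank at most $\w{\zeta}+1 \leq r$, hence never raises the overall rank. Since Lemma \ref{lemma:adm:rp} does not produce new $\rep$ applications, iterating the procedure strictly decreases the number of $\rep$ instances and ultimately yields a $\NIntCKcut$ derivation of rank at most $r$ with the same endsequent as $\nder^*$. Splicing this back into $\nder$ in place of the originally chosen cut instance then gives a $\NIntCKcut$ derivation of the original endsequent with one fewer occurrence of $\cut$ of rank $r+1$. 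Repeating this whole procedure finitely many times removes all such occurrences, lowering the overall rank to at most $r$, and the induction hypothesis closes the argument.

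The main obstacle is conceptual rather than technical at this final stage: Lemma \ref{lemma:cut-adm} alone cannot produce a $\NIntCKcut$ derivation, because in the principal case where the cut formula is a conditional the reduction must match a premise carrying an index $\varphi$ with one carrying a merely equivalent index $\eta$, which is exactly the mismatch that $\rep$ is designed to repair. Lemma \ref{lemma:adm:rp} pays the price of the syntactic mismatch through additional cuts on the indices, but only of strictly smaller rank. The delicate interaction of cuts and rep-formulas is thus already absorbed in the preceding two lemmas, and the present theorem requires only the careful orchestration of their successive applications described above.
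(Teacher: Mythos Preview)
Your proposal is correct and follows essentially the same approach as the paper: induction on rank, selecting a topmost maximal-rank cut, applying Lemma~\ref{lemma:cut-adm} to reduce it at the cost of introducing $\rep$ instances with strictly smaller rep-formulas, then iteratively eliminating those $\rep$ instances top-down via Lemma~\ref{lemma:adm:rp} without raising the rank, and finally splicing back and repeating. Your exposition is in fact slightly more careful than the paper's (e.g., you explicitly note that Lemma~\ref{lemma:adm:rp} creates no new $\rep$ instances so the count strictly decreases, and that the rank bound $\w{\zeta}+1 \le r$ is what keeps the overall rank from growing).
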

Thanks to the cut elimination result, we can now prove completeness of $\NIntCK$. 
For $A$ set of formulas, let $\bl{A}=\{ \bl{\gamma} \mid \gamma \in A \}$.

\begin{restatable}{theorem}{completenessnes}
	For  $A \cup \{\varphi\}$ finite set of formulas in $\lan$, if $\A \derivesIntCk \varphi$ then the nested sequent $\bl{A}, \wh{\varphi} $ is derivable in $\NIntCK$. 
\end{restatable}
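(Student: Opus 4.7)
The strategy is to reduce to the single-formula case and then induct on the Hilbert derivation. Since $A \vd_{\IntCK} \varphi$ unpacks to the existence of $\psi_1, \ldots, \psi_n \in A$ with $\vd_{\IntCK} \psi_1 \land \ldots \land \psi_n \imp \varphi$, it suffices to show that for every theorem $\chi$ of $\IntCK$, the nested sequent $\wh{\chi}$ is derivable in $\NIntCK$: given a derivation of $\wh{\psi_1 \land \ldots \land \psi_n \imp \varphi}$, repeated hp-invertibility of $\wh{\IMP}$ and $\bl{\AND}$ (Lemma~\ref{lemma:str-prop}) yields a derivation of $\bl{\psi_1}, \ldots, \bl{\psi_n}, \wh{\varphi}$, and admissibility of weakening (Lemma~\ref{lemma:str-prop}) then gives $\bl{A}, \wh{\varphi}$ as required.

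With the reduction in hand, the proof proceeds by induction on the length of a Hilbert derivation of $\chi$ in $\IntCK$. In the base case I must exhibit a $\NIntCK$-derivation of $\wh{\alpha}$ for every axiom $\alpha$ listed in Figure~\ref{fig:axioms}. The intuitionistic propositional axioms are routine. For the conditional axioms \axCMbox, \axCCbox, \axCNbox, \axCMdiam, \axCCdiam, \axCNdiam, \axCW\ and \axCFS, I apply the rules $\bl{\cb}$, $\wh{\cb}$, $\bl{\cd}$, $\wh{\cd}$ following the shape of the axiom, discharging the side premises $\bl{\varphi}, \wh{\eta}$ and $\bl{\eta}, \wh{\varphi}$ of $\bl{\cb}$ and $\wh{\cd}$ via the generalised identity (Proposition~\ref{prop:gen:init}) whenever the relevant indices coincide. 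For the inductive step, modus ponens is simulated by cut: a derivation of $\wh{\alpha \imp \beta}$ inverts to a derivation of $\bl{\alpha}, \wh{\beta}$, which together with a derivation of $\wh{\alpha}$ yields $\wh{\beta}$ via a cut on $\alpha$, and cut-elimination (Theorem~\ref{thm:cut-el-nes}) returns a cut-free derivation. The congruence rules \RAbox, \RCbox, \RAdiam\ and \RCdiam\ are simulated by exploiting precisely the side premises of $\bl{\cb}$ and $\wh{\cd}$: inverting a derivation of $\wh{\varphi \biimp \rho}$ produces derivations of $\bl{\varphi}, \wh{\rho}$ and $\bl{\rho}, \wh{\varphi}$, which serve as the index-equivalence premises needed to derive, e.g., $\wh{(\varphi \cb \psi) \biimp (\rho \cb \psi)}$ by applications of $\wh{\AND}$, $\wh{\IMP}$, $\wh{\cb}$ and $\bl{\cb}$, closing with Proposition~\ref{prop:gen:init}.

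The main obstacle will be providing concrete $\NIntCK$-derivations of the conditional axioms that couple $\cb$ with $\cd$, notably \axCW\ and \axCFS. For \axCW\ the derivation decomposes the outer implication and conjunction, applies $\wh{\cd}$ to open an indexed component, uses $\bl{\cd}$ to plant the antecedent of the $\cd$-conjunct inside it, and uses $\bl{\cb}$ to transport $\chi$ into the same component with matching indices discharged by Proposition~\ref{prop:gen:init}, before closing with $\wh{\AND}$. The case of \axCFS\ is more delicate, requiring nested applications of $\wh{\cb}$ and $\bl{\cd}$ with careful bookkeeping of indices so that $\bl{\IMP}$ can fire correctly inside the created component; these explicit derivations reflect syntactically why the two axioms hold in \olk\ models. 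Once they are in place the induction closes, completing the proof.
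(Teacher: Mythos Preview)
Your proposal is correct and follows essentially the same approach as the paper: show that the axioms of $\IntCK$ are derivable and the rules admissible in $\NIntCKcut$ (using invertibility for the congruence rules and $\cut$ for modus ponens), then apply cut-elimination (Theorem~\ref{thm:cut-el-nes}). Your reduction from $A\vdash_{\IntCK}\varphi$ to the single-theorem case via invertibility of $\wh{\IMP}$, $\bl{\AND}$ and weakening is a natural unpacking of the same idea; the paper leaves this implicit. One minor slip: in your sketch for \axCW\ the order of $\bl{\cd}$ and $\wh{\cd}$ should be reversed, since $\bl{\cd}$ \emph{creates} the indexed component $\NS{\varphi}{\bl{\psi}}$ and $\wh{\cd}$ then moves the output formula into it (the paper's appendix gives the explicit derivation of \axCFS, and \axCW\ is built analogously).
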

\begin{proof}
	By showing that the axioms and rules of $\IntCK$ are derivable and admissible in $\NIntCKcut$, and then using \Cref{thm:cut-el-nes} to remove occurrences of $\cut$. Derivations of some axioms and rules can be found in \Cref{sec:app:cut-el-nested}. 
\end{proof}


\section{Basic constructive conditional logic}
\label{sec:basic}

We now move to the analysis of \const\ conditional logics.
We first present a Gentzen-style sequent calculus for $\CCKbox$
obtained as the single-succendent restriction of
(a two-sided formulation of) the sequent calculus for $\CKbox$ defined in \cite{pattinson:2011}.
We then consider sequent rules explicitly dealing with $\cd$,
and define on their basis logic $\CCK$, which is 
an extension of $\CCKbox$ in the full language $\lan$. Other than the sequent calculus, we shall introduce both an Hilbert-style axiom system and a class of models for $\CCK$.



Since we now deal with Gentzen-style sequents, we drop the polarities of formulas and use the standard two-sided notation. 
A \emph{sequent} is a pair $\G \seq \D$ of finite, possibly empty multisets of formulas. 
A sequent is called \emph{single-succedent} if $|\D| \leq 1$.
A sequent $\G\seq\D$ is interpreted 
via the \emph{formula interpretation} $\finseq$
as $\bigwedge\G \imp \bigvee\D$
if $\G \neq \emptyset$, and 
as $\bigvee\D$ if $\G = \emptyset$, where
$\bigvee\emptyset$ is interpreted as $\bot$.
We use $\G \sseq \D$ as an abbreviation for the two sequents $\G \seq \D$ and $\D \seq \G$.
The rules of calculus $\SCCKbox$ for $\CCKbox$ can be found on top of Fig.~\ref{fig:seq CCKbox}.

%
%


We shall prove soundness and completeness of $\SCCKbox$ as a consequence of soundness and completeness of an extension of $\SCCKbox$ with rules for the might operator $\cd$. 
To define such an extension, 
we start from a calculus $\SCK$ for classical $\CK$ with explicit rules for both $\cb$ and $\cd$.
Such a calculus can be obtained by extending the standard G3-style rules for $\CPL$ from~\cite{troelstra:2000}
with the following rules obtained by integrating $\cd$ into  the rules of \cite{pattinson:2011}
(cf. e.g. \cite{indr:2021book} for analogous rules for unary modal logics):

%
\begin{center}
\begin{small}
\ax{$\{\varphi \sseq \rho_i\}_{i \leq n}$} 
\ax{$\{\varphi \sseq \eta_j\}_{j \leq k}$}
\ax{$\sigma_1, ..., \sigma_n \seq \psi, \chi_1, ..., \chi_k$}
\rlab{($n,k \geq 0$)}
\tinf{$\G, \rho_1 \cb \sigma_1, ..., \rho_n \cb \sigma_n \seq \varphi \cb \psi, \eta_1 \cd \chi_1, ..., \eta_k \cd \chi_k, \D$}
\disp

\vspace{0.3cm}
\ax{$\{\varphi \sseq \rho_i\}_{i \leq n}$} 
\ax{$\{\varphi \sseq \eta_j\}_{j \leq k}$}
\ax{$\sigma_1, ..., \sigma_n, \psi \seq \chi_1, ..., \chi_k$}
\rlab{($n,k \geq 0$)}
\tinf{$\G, \rho_1 \cb \sigma_1, ..., \rho_n \cb \sigma_n, \varphi \cd \psi \seq \eta_1 \cd \chi_1, ..., \eta_k \cd \chi_k, \D$}
\disp
\end{small}
\end{center}

By restricting $\SCK$ to single-succedent sequents, we obtain the calculus $\SCCK$ defined by 
adding to the rules of $\SCCKbox$ the $\cd$ rules 
at the bottom of Fig.~\ref{fig:seq CCKbox}.
We now turn to investigating the structural properties of $\SCCKbox$ and $\SCCK$. 

\begin{figure}[t!]
	\ax{}
	\llab{\init}
	\uinf{ $\G, p \seq p$ }
	\disp
	\hfill
	\ax{}
	\llab{$\bot_{\mathsf{L}}$}
	\uinf{ $\G, \bot \seq \D$ }
	\disp
	\hfill
	\ax{$\G, \varphi, \psi \seq \D$}
	\llab{\lland}
	\uinf{$\G, \varphi \land \psi \seq \D$}
	\disp
	%
	\hfill
	\ax{$\G \seq \varphi$}
	\ax{$\G \seq \psi$}
	\llab{\rland}
	\binf{$\G \seq \varphi\land\psi$}
	\disp
	
	\vspace{0.2cm}
	\ax{$\G, \varphi \seq \D$}
	\ax{$\G, \psi \seq \D$}
	\llab{\llor}
	\binf{$\G, \varphi\lor\psi \seq \D$}
	\disp
	\hfill
	\ax{$\G \seq \varphi$}
	\llab{\rlorone}
	\uinf{$\G \seq \varphi\lor\psi$}
	\disp
	\hfill
	\ax{$\G \seq \psi$}
	\llab{\rlortwo}
	\uinf{$\G \seq \varphi\lor\psi$}
	\disp
	\hfill
	\ax{$\G, \varphi \seq \psi$}
	\llab{\rimp}
	\uinf{$\G \seq \varphi \imp \psi$}
	\disp
	
	\vspace{0.2cm}
	\ax{$\G, \varphi \imp \psi \seq \varphi$}
	\ax{$\G, \psi \seq \D$}
	\llab{\limp}
	\binf{$\G, \varphi \imp \psi \seq \D$}
	\disp
	\hfill
	\ax{$\{\varphi \sseq \rho_i\}_{i \leq n}$} 
	\ax{$\sigma_1, ..., \sigma_n \seq \psi$}
	\llab{\rulecb}
	\binf{$\G, \rho_1 \cb \sigma_1, ..., \rho_n \cb \sigma_n \seq \varphi \cb \psi$}
	\disp
	\vspace{0.2cm}
	
	\noindent\makebox[\linewidth]{\rule{\textwidth}{0.3pt}}
	\begin{center}
		\ax{$\{\varphi \sseq \rho_i\}_{i \leq n}$} 
		\ax{$\varphi \sseq \eta$}
		\ax{$\sigma_1, ..., \sigma_n, \psi \seq \vartheta$}
		\llab{\rulecd}
		\tinf{$\G, \rho_1 \cb \sigma_1, ..., \rho_n \cb \sigma_n, \varphi \cd \psi \seq \eta \cd \vartheta$}
		\disp
		\\[0.2cm]
		\ax{$\{\varphi \sseq \rho_i \}_{i \leq n}$} 
		\ax{$\sigma_1, ..., \sigma_n, \psi \seq$}
		\llab{\rulecbd}
		\binf{$\G, \rho_1 \cb \sigma_1, ..., \rho_n \cb \sigma_n, \varphi \cd \psi \seq \D$}
		\disp
	\end{center}
	\caption{\label{fig:seq CCKbox} 
		\textbf{Top}: Rules of $\SConstCKbox$, for $0 \leq |\D| \leq 1$ and $n \geq 0$. \textbf{Bottom}: Additional rules of $\SCCK$.}
\end{figure}

\begin{theorem}[Structural properties]\label{th:cut elim CCK}
The rules \lland, \rland, \llor, \rlorone, \rlortwo, \rimp\ are height-preserving invertible, 
and the rule \limp\ is height-preserving invertible with respect to the right premiss.
Moreover, the following weakening and contraction rules are height-preserving admissible in $\SCCK$,
and the following rule $\cut$ is admissible in $\SConstCK$.
\begin{center}
\ax{$\G \seq \D$}
\llab{$\lwk$}
\uinf{$\G, \varphi \seq \D$}
\disp
\
\ax{$\G \seq $}
\llab{$\rwk$}
\uinf{$\G \seq \varphi$}
\disp
\
\ax{$\G, \varphi, \varphi \seq \D$}
\llab{$\ctr$}
\uinf{$\G, \varphi \seq \D$}
\disp
\
\ax{$\G \seq \varphi$}
\ax{$\G', \varphi \seq \D$}
\llab{$\cut$}
\binf{$\G, \G' \seq \D$}
\disp
\end{center}
\end{theorem}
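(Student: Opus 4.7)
The proof proceeds in three stages following the standard G3-style pattern, with adaptations for the conditional rules. First, height-preserving invertibility of the propositional rules is established by induction on derivation height: for each invertible rule $R$ with principal formula $\varphi$, either $R$ is the last rule applied (and its premises give what is needed) or the last rule is applied to formulas other than $\varphi$, so the inductive hypothesis yields inverted premises, after which the last rule is reapplied. The only subtlety is \limp, which retains its principal formula in the left premise and is therefore only invertible on the right.

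Second, the admissibility of $\lwk$, $\rwk$, and $\ctr$ is proved by induction on height. For $\lwk$, the conditional rules \rulecb, \rulecd, \rulecbd\ all admit arbitrary contexts $\G$ in their conclusions, so the weakening formula can be absorbed there. For $\rwk$, one observes that only rules whose conclusion may have empty succedent (\lbot, \lland, \llor, \limp, \rulecbd) can end a derivation of $\G \seq$, and each of these accepts an arbitrary $\D$ in the conclusion, so $\rwk$ is pushed upward to the axioms. Contraction is handled similarly, using invertibility to push duplicated propositional subformulas into the premises. The critical case is two copies of the same conditional formula $\rho \cb \sigma$ among the principal formulas of \rulecb, \rulecd, or \rulecbd: the two equivalence premises $\varphi \sseq \rho$ coincide, and the main premise contains $\sigma$ twice, which we contract by the inductive hypothesis before reapplying the conditional rule with a single copy of $\rho \cb \sigma$.

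Third, cut admissibility in $\SConstCK$ is proved by double induction on the rank (weight of the cut formula) with secondary induction on the sum of the heights of the two premises. Standard cases cover propositional cut formulas and cuts in which the cut formula is not principal in at least one premise (in the latter, one permutes the cut upward, appealing to the admissibility of $\lwk$ and $\rwk$ to restore context shapes). The crucial novel case is the cut formula being a conditional $\xi$ principal in both premises. For $\xi = A \cb B$, the left premise ends in \rulecb\ with premises $\{A \sseq \tau_j\}_{j \leq m}$ and $\pi_1, \ldots, \pi_m \seq B$; the right premise ends in \rulecb, \rulecd, or \rulecbd\ with $A \cb B$ occurring among the principal $\rho_i \cb \sigma_i$. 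The target sequent is obtained by applying the same rule as on the right, with the principal $\cb$-multiset obtained by replacing $A \cb B$ with $\tau_1 \cb \pi_1, \ldots, \tau_m \cb \pi_m$: the new equivalence premises $\varphi' \sseq \tau_j$ follow by cuts on $A$ (from $\varphi' \sseq A$ and $A \sseq \tau_j$), and the new principal premise follows by a cut on $B$ (from $\pi_1, \ldots, \pi_m \seq B$ and the $B$-containing premise of the right rule). Both generated cuts are on strictly smaller formulas and so are justified by the primary induction on rank. The case $\xi = A \cd B$ is symmetric: the left premise must derive $\G \seq A \cd B$ via \rulecd, and the right premise must have $A \cd B$ as the unique principal $\cd$ formula on the left of \rulecd\ or \rulecbd.

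The main obstacle will be handling the conditional-principal cut cases uniformly across the three possible shapes of the right premise (\rulecb, \rulecd, \rulecbd) and checking that all generated cuts reduce to strictly smaller subformulas while respecting the single-succedent restriction. The latter is automatic because every premise of the conditional rules in $\SCCK$ is itself single-succedent, so the newly introduced cuts inherit that shape. A careful bookkeeping of the $\cb$- and $\cd$-multisets in the conclusion, and of the way the equivalence premises are composed via cuts on $A$, is what makes the argument go through.
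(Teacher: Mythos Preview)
Your proposal is correct and follows essentially the same strategy as the paper: standard \textsf{G3ip}-style arguments for invertibility, weakening, and contraction, and a lexicographic induction on (cut-formula weight, sum of heights) for $\cut$, where the principal-conditional case is handled by reapplying the right-hand conditional rule with the left rule's principal $\cb$-formulas spliced in for the cut formula, the new equivalence premisses obtained by smaller cuts on the antecedent $A$, and the new main premiss by a smaller cut on the consequent $B$. The paper's proof differs only in presentation, giving explicit instances (e.g.\ \rulecd/\rulecbd, \rulecb/\rulecb) rather than your uniform description; the underlying transformation is the same.
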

\begin{proof} 
The proof of hp-admissibility of $\lwk$, $\rwk$ and $\ctr$ and of the hp-invertibility of the propositional rules is
essentially the standard one for \textsf{G3ip} (cf. e.g.  \cite{troelstra:2000})
without major modification induced by addition of the conditional rules.
Admissibility of $\cut$ is proved by induction on lexicographically ordered pairs ($c$, $h$), 
where $c$ is the weight 
of the cut formula $\varphi$, and $h$ is the sum of the heights of the derivations of the two premisses of $\cut$.
As an example, consider the following:
\begin{center}
\begin{small}
\ax{$\varphi \seqder{1} \rho$ \ $\rho \seqder{2} \varphi$ \ $\varphi \seqder{3} \eta$ \ $\eta \seqder{4} \varphi$ \ $\sigma, \psi \seqder{5} \vartheta$}
\llab{\rulecd}
\uinf{$\G, \rho \cb \sigma, \varphi \cd \psi \seq \eta \cd \vartheta$}
\disp
\quad
\ax{$\eta \seqder{6} \xi$ \ $\xi \seqder{7} \eta$ \ $\chi, \vartheta \seqder{8}$}
\llab{\rulecbd}
\uinf{$\G', \xi \cb \chi, \eta \cd \vartheta \seq $}
\disp
\end{small}
\end{center}
with the conclusion $\G, \G', \rho \cb \sigma, \xi \cb \chi, \varphi \cd \psi \seq$ \  of $\cut$ derivable as follows,
where dotted lines represent applications of the induction hypothesis:
\vspace{-0.2cm}
\begin{center}
\begin{small}
\ax{$\varphi \seqder{1} \rho$ \ \ $\rho \seqder{2} \varphi$} 
\ax{$\varphi \seqder{3} \eta$ \ \ $\eta \seqder{6} \xi$}
\dottedLine
\uinf{$\varphi \seq \xi$}
\ax{$\xi \seqder{7} \eta$ \ \ $\eta \seqder{4} \varphi$}
\dottedLine
\uinf{$\xi \seq \varphi$}
\ax{$\sigma, \psi \seqder{5} \vartheta$ \ \ $\chi, \vartheta \seqder{8}$}
\dottedLine
\uinf{$\sigma, \chi, \psi \seq$}
\llab{\rulecbd}
\qinf{$\G, \G', \rho \cb \sigma, \xi \cb \chi, \varphi \cd \psi \seq$}
\disp
\end{small}
\end{center}\qed
\end{proof}

%
%


Next, we define an axiomatic system $\CCK$ equivalent to $\SCCK$, and prove  that $\CCK$ is a conservative extension of $\CCKbox$.

\begin{definition}[$\CCK$]
The logic $\CCK$ is defined in the language $\lan$ extending any axiomatisation of $\IPL$ (formulated in the language $\lan$), containing modus ponens,
with the axioms \axCMbox, \axCCbox, \axCNbox, \axCNdiam, \axCKdiam\ and the rules \RAbox, \RCbox, \RAdiam, \RCdiam\ from Fig.~\ref{fig:axioms}.
\end{definition}


\begin{theorem}[Syntactic equivalence]\label{th:synt compl CCK}
$\G \seq \D$ is derivable in $\SCCK$ if and only if $\finseq(\G \seq \D)$ is derivable in $\CCK$.
\end{theorem}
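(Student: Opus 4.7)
The plan is to prove the two directions of the equivalence by induction, using cut admissibility (\Cref{th:cut elim CCK}) to handle modus ponens on the axiomatic side.

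\textbf{Sequent $\Rightarrow$ Axiomatic.} I would argue by induction on the height of the $\SCCK$-derivation of $\Gamma \seq \Delta$. The propositional rules are handled routinely: each rule preserves $\CCK$-derivability of the formula interpretation, since $\CCK$ contains all of $\IPL$ plus modus ponens. For the rule \rulecb, with premisses $\{\varphi \sseq \rho_i\}_{i\leq n}$ and $\sigma_1,\dots,\sigma_n \seq \psi$, I apply \RAbox\ repeatedly to rewrite each $\rho_i \cb \sigma_i$ as $\varphi \cb \sigma_i$, then combine these using \axCCbox\ (iterated) to obtain $\varphi \cb (\sigma_1 \wedge \dots \wedge \sigma_n)$, and finally use derived monotonicity of $\cb$ in the consequent (from \axCMbox\ and \RCbox) together with the induction hypothesis $\sigma_1 \wedge \dots \wedge \sigma_n \imp \psi$ to conclude $\varphi \cb \psi$. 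For \rulecd, the same rewriting using \RAbox/\RAdiam\ reduces the problem to deriving $\varphi \cb (\sigma_1 \wedge \dots \wedge \sigma_n) \wedge \varphi \cd \psi \imp \varphi \cd \vartheta$, which follows from \axCKdiam\ once monotonicity of $\cd$ in the consequent is available; that monotonicity follows in turn from \axCKdiam\ together with the trivial $\cb$-theorem $\varphi \cb (\psi \imp \vartheta)$ whenever $\vdash \psi \imp \vartheta$ (obtained via \RCbox\ from \axCNbox). For the degenerate rule \rulecbd, the fact that $\sigma_1,\dots,\sigma_n,\psi \seq \varnothing$ is derivable in $\CCK$ as $\sigma_1 \wedge \dots \wedge \sigma_n \wedge \psi \imp \bot$ gives, via the same chain, $\varphi \cd \bot$ from the premisses; then \axCNdiam\ yields a contradiction, whence $\Delta$ follows by \emph{ex falso}.

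\textbf{Axiomatic $\Rightarrow$ Sequent.} I would verify that every axiom and rule of $\CCK$ is derivable/admissible in $\SCCK$. The propositional axioms and modus ponens are standard: for any $\IPL$-axiom, $\seq$-derivability follows by the standard $\mathsf{G3ip}$-completeness argument, and modus ponens is simulated by $\cut$ (admissible by \Cref{th:cut elim CCK}). For the conditional axioms and rules, the key observation is that the multi-premise $\cb$/$\cd$-rules are tailored precisely to match them. For instance:
\begin{itemize}
\item \RAbox: from $\vdash \varphi \sseq \rho$ one obtains $\varphi \cb \psi \seq \rho \cb \psi$ by applying \rulecb\ with $n=1$, using $\varphi \sseq \rho$ and $\psi \seq \psi$; the converse is symmetric. \RCbox, \RAdiam, \RCdiam\ are analogous.
\item \axCNbox: apply \rulecb\ with $n=0$ to $\seq \top$.
\item \axCCbox, \axCMbox: immediate from \rulecb\ with suitable $n$ and propositional reshuffling.
\item \axCNdiam: apply \rulecbd\ with $n=0$ to $\bot \seq$.
\item \axCKdiam: from the $\cd$ rule with $n=1$, $\rho_1 \cb \sigma_1 := \varphi \cb (\psi \imp \chi)$, $\eta := \varphi$, $\vartheta := \chi$, using the trivial premisses $\varphi \sseq \varphi$ and the derivable $\psi\imp\chi,\psi \seq \chi$.
\end{itemize}

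\textbf{Main obstacle.} The two directions are largely bookkeeping; the delicate point is the sequent-to-axiom direction for \rulecd\ and \rulecbd, where the multi-premise shape of the rule must be reassembled into a proof using only \axCKdiam\ and monotonicity. One must first establish, as a derived lemma of $\CCK$, that $\cb$-monotonicity in the consequent and $\cd$-monotonicity in the consequent are derivable, and that $\CCK$ proves $\varphi \cb \alpha \wedge \varphi \cd \beta \imp \varphi \cd (\alpha \wedge \beta)$ (an ``\axCW-like'' principle restricted to the boxed version), which is what the multi-premise \rulecd\ packages together. Once these derived principles are in hand, the inductive step goes through smoothly, and as a corollary the restriction to the $\cd$-free fragment gives the promised soundness and completeness of $\SCCKbox$ for $\CCKbox$.
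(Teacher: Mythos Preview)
Your proposal is correct and follows essentially the same route as the paper: induction on derivation height in both directions, with modus ponens simulated by the admissible $\cut$, and with the sequent-to-axiom direction for \rulecd\ and \rulecbd\ relying on exactly the derived $\CCK$-principles you isolate (monotonicity \RMbox, \RMdiam, and the ``\axCW-like'' principle, which is precisely \axCW\ and is derived in $\CCK$ from \axCKdiam\ via \RMbox). The paper packages these derived facts explicitly and then proceeds identically.
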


\begin{proof}
Both directions of this claim are proved by induction on the height of the derivations in the respective proof system.
 From left to right, we observe that for every rule 
$\varS_1, ..., \varS_k / \varS$ of $\SCCK$, the corresponding rule $\finseq(\varS_1), ..., \finseq(\varS_k) / \finseq(\varS)$ 
is derivable in $\CCK$.
For instance, 
for $n = 2, \G = \emptyset$, the rule corresponding to \rulecd\ has the following form:
$\varphi \biimp \rho_1, \varphi \biimp \rho_2, \varphi \biimp \eta, \sigma_1 \land \sigma_2 \land \psi \imp \vartheta/ (\rho_1 \cb \sigma_1) \land (\rho_2 \cb \sigma_2) \land (\varphi \cd \psi) \imp (\eta \cd \vartheta)$
To derive these rules we can make use of the rules \RMbox, \RMdiam\ and the axiom \axCW, which are easily derivable in $\CCK$
(see full derivations in Appendix \ref{app:der CCK}).
In the other direction, we can show that all axioms and rules of $\CCK$ are derivable, respectively admissible, 
in $\SCCK$, with modus ponens simulated as usual by $\cut$ (Appendix \ref{app:der SCCK}). \qed
\end{proof}


\begin{theorem}
If $\varphi \in \lanB$, then $\varphi$ is derivable in $\CCK$ if and only if it is derivable in $\CCKbox$.
\end{theorem}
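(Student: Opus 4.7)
The plan is to prove this conservativity result by exploiting the sequent calculi $\SCCKbox$ and $\SCCK$ together with the cut-elimination theorem (\Cref{th:cut elim CCK}) and the syntactic equivalence of $\CCK$ and $\SCCK$ (\Cref{th:synt compl CCK}). The right-to-left direction is essentially immediate at the axiomatic level, while the left-to-right direction requires a proof-theoretic argument based on the subformula property of cut-free $\SCCK$ derivations.

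For the direction $\CCKbox \vdash \varphi \Rightarrow \CCK \vdash \varphi$, I would simply observe that all axioms (\axCMbox, \axCCbox, \axCNbox) and all rules (\RAbox, \RCbox, modus ponens) of $\CCKbox$ are either axioms or rules of $\CCK$; since the propositional base for both systems is $\IPL$, any derivation in $\CCKbox$ is \emph{a fortiori} a derivation in $\CCK$.

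For the non-trivial direction, suppose $\varphi \in \lanB$ is derivable in $\CCK$. By \Cref{th:synt compl CCK}, the sequent $\seq \varphi$ is derivable in $\SCCK$, and by the cut-elimination result in \Cref{th:cut elim CCK}, there is a cut-free $\SCCK$-derivation $\varDer$ of $\seq \varphi$. The key step is to verify that $\SCCK$ enjoys the subformula property: inspecting each rule of $\SCCK$ (including \rulecb, \rulecd, \rulecbd), every formula appearing in a premiss is a subformula of some formula in the conclusion (the formulas $\rho_i, \sigma_i, \varphi, \psi, \eta, \vartheta$ in the premisses of the conditional rules are all immediate subformulas of the conditional formulas $\rho_i \cb \sigma_i$, $\varphi \cb \psi$, $\varphi \cd \psi$, $\eta \cd \vartheta$ occurring in the conclusion). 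Consequently, every formula in $\varDer$ is a subformula of $\varphi$, hence lies in $\lanB$ and contains no occurrence of $\cd$. But both conditional rules dealing with $\cd$, namely \rulecd\ and \rulecbd, require a $\cd$-formula in their conclusion, so they cannot appear in $\varDer$. Therefore $\varDer$ uses only rules of $\SCCKbox$, and $\seq \varphi$ is derivable in $\SCCKbox$. A straightforward induction on the height of this $\SCCKbox$-derivation, mimicking the right-to-left direction of \Cref{th:synt compl CCK} and using only the axioms and rules of $\CCKbox$, yields $\vd_{\CCKbox} \varphi$.

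The main obstacle, and the step that deserves careful attention, is checking that the subformula property really holds uniformly across all rules of $\SCCK$, especially for the $\sseq$-premisses of the conditional rules, which are responsible for the matching of antecedents. Since these premisses only compare the $\cd$- or $\cb$-antecedents already present in the conclusion, no foreign $\cd$ can be introduced bottom-up; once this is verified, the rest of the argument is routine. As a by-product, this argument also yields the syntactic completeness of $\SCCKbox$ with respect to $\CCKbox$.
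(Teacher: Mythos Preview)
Your proposal is correct and follows essentially the same approach as the paper: use the syntactic equivalence of $\CCK$ with $\SCCK$, then exploit analyticity of cut-free $\SCCK$ to conclude that a derivation of a $\cd$-free endsequent never invokes \rulecd\ or \rulecbd, hence already lives in $\SCCKbox$, and finish by translating back to $\CCKbox$. One minor slip: when you write ``mimicking the right-to-left direction of \Cref{th:synt compl CCK}'' you mean the left-to-right direction (from sequent calculus to Hilbert system), but the intent is clear and the argument goes through.
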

\begin{proof}
Suppose that $\varphi\in\lanB$ is derivable in $\CCK$.
Then, $\seq \varphi$ is derivable in $\SCCK$.
Since $\SCCK$ is analytic, the derivation $\varDer$ of $\seq \varphi$ does not contain any formula with occurrences of $\cd$, 
meaning that $\varDer$ does not employ the rules \rulecd\ and \rulecbd.
Since all other rules of $\SCCK$ also belong to $\SCCKbox$, $\varDer$ is a derivation of $\seq \varphi$ in $\SCCKbox$, thus $\varphi$ is derivable in $\CCKbox$.
\end{proof}

The following is an immediate consequence of the previous results:

\begin{theorem}[Syntactic equivalence]
$\G \seq \D$ is derivable in $\SCCKbox$ if and only if $\finseq(\G \seq \D)$ is derivable in $\CCKbox$.
\end{theorem}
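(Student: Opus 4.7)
The plan is to obtain the statement as a direct consequence of the two preceding theorems, using the analyticity of $\SCCK$ as the bridge between the $\lan$- and $\lanB$-fragments. Throughout, I will use the fact that $\SCCKbox$ is exactly the subcalculus of $\SCCK$ obtained by dropping the two $\cd$-rules \rulecd\ and \rulecbd, and that $\CCKbox$ is (modulo axiomatisation choice) the $\lanB$-axiomatisation whose axioms and rules are a subset of those of $\CCK$.

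For the left-to-right direction, suppose $\G \seq \D$ is derivable in $\SCCKbox$. Since every rule of $\SCCKbox$ is also a rule of $\SCCK$, the same derivation is an $\SCCK$-derivation. By the syntactic equivalence between $\SCCK$ and $\CCK$ (Theorem on $\SCCK$), $\finseq(\G \seq \D)$ is derivable in $\CCK$. Moreover, since all formulas in $\G \seq \D$ belong to $\lanB$, so does $\finseq(\G \seq \D)$; hence by the conservativity theorem (the immediately preceding result), $\finseq(\G \seq \D)$ is derivable in $\CCKbox$.

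For the right-to-left direction, suppose $\finseq(\G \seq \D)$ is derivable in $\CCKbox$. Since every axiom and rule of $\CCKbox$ is also an axiom/rule of $\CCK$, the derivation is also a $\CCK$-derivation, so $\finseq(\G\seq\D)$ is derivable in $\CCK$. By the syntactic equivalence between $\SCCK$ and $\CCK$, the sequent $\G \seq \D$ is derivable in $\SCCK$. Now apply cut-elimination (Theorem~\ref{th:cut elim CCK}): we may assume the derivation $\varDer$ is cut-free. Since $\SCCK$ is analytic and the endsequent $\G \seq \D$ contains no $\cd$-formula, every formula occurring in $\varDer$ belongs to $\lanB$; in particular, the rules \rulecd\ and \rulecbd\ cannot be applied anywhere in $\varDer$. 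Therefore $\varDer$ only employs rules of $\SCCKbox$, which yields the desired $\SCCKbox$-derivation of $\G \seq \D$.

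I do not expect any serious obstacle: the only subtlety is the analyticity argument in the right-to-left direction, which relies on cut-elimination for $\SCCK$ (already proved in Theorem~\ref{th:cut elim CCK}) together with the subformula property enjoyed by cut-free $\SCCK$-derivations. This is exactly the same move used in the conservativity theorem, so no new technical work is needed.
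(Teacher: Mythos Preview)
Your proposal is correct and follows the same route as the paper, which simply records the result as an immediate consequence of the syntactic equivalence $\SCCK\leftrightarrow\CCK$ together with the conservativity of $\CCK$ over $\CCKbox$. You just spell out the bridging step---that a cut-free $\SCCK$-derivation of a $\lanB$-endsequent uses only $\SCCKbox$-rules---which is exactly the analyticity argument already deployed in the conservativity proof.
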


As already mentioned, $\ConstCKbox$ and $\IntCK$ naturally correspond to known \intu\ unary modal logics.
 Similarly, the modal correspondent of $\CCK$ is the propositional fragment of \wij's logic $\mathsf{CCDL}$,
often denoted $\WK$ \cite{das:2023}. This fragment is 
defined in a language 
$\lanm$,  containing $\Box, \diam$, by extending $\IPL$ with the rule $\varphi / \Box\varphi$ and the 
axioms 
$\Box(\varphi \imp \psi) \imp (\Box\varphi \imp \Box\psi)$,
$\Box(\varphi \imp \psi) \imp (\diam\varphi \imp \diam\psi)$,
$\neg\diam\bot$.
Let $(\cdot)^\tau$ be a translation $\lanm \longrightarrow \lan$ 
such that $p^\tau = p$, $\bot^\tau = \bot$, $(\rho \circ \psi)^\tau = \rho^\tau \circ \psi^\tau$ for $\circ \in \{\land, \lor, \imp\}$,
and $(\circ \psi)^\tau = \top \cc \psi^\tau$ for $\circ \in \{\Box, \diam\}$.

\begin{theorem}
For all $\varphi\in\lanm$, $\vd_{\WK} \varphi$ if and only if \ $\vd_{\CCK} \varphi^\tau$.
\end{theorem}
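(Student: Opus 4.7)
The plan is to prove the two directions separately. For the left-to-right direction, I would induct on the $\WK$-derivation of $\varphi$, translating each axiom and rule of $\WK$ into something derivable in $\CCK$. The propositional base and modus ponens are immediate since $\CCK$ extends $\IPL$ and $(\cdot)^\tau$ commutes with the propositional connectives. For the necessitation rule $\varphi/\Box\varphi$: from $\vd_{\CCK}\varphi^\tau$ we obtain $\vd_{\CCK}\top \biimp \varphi^\tau$, whence \RCbox\ gives $\vd_{\CCK}(\top \cb \top) \biimp (\top \cb \varphi^\tau)$, and combined with \axCNbox\ we conclude $\vd_{\CCK}\top \cb \varphi^\tau$. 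The normality axiom $\Box(\varphi\imp\psi) \imp (\Box\varphi \imp \Box\psi)$ translates to a formula derivable from \axCMbox, \axCCbox\ and \RCbox; the mixed axiom $\Box(\varphi\imp\psi) \imp (\diam\varphi \imp \diam\psi)$ is directly an instance of \axCKdiam\ with antecedent $\top$; and $\neg\diam\bot$ is an instance of \axCNdiam\ with antecedent $\top$.

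For the right-to-left direction, the key idea is to exploit the analyticity of $\SCCK$. Assume $\vd_{\CCK}\varphi^\tau$. By \Cref{th:synt compl CCK}, the sequent $\seq \varphi^\tau$ is derivable in $\SCCK$; by cut-elimination (\Cref{th:cut elim CCK}), there is a cut-free derivation $\varDer$. The crucial observation is that the image of $(\cdot)^\tau$ is closed under subformulas: every subformula of $\varphi^\tau$ has the form $\psi^\tau$ for some subformula $\psi$ of $\varphi$. Consequently, every conditional formula occurring in $\varDer$ has shape $\top \cb \chi$ or $\top \cd \chi$, so every application of \rulecb, \rulecd, \rulecbd\ has antecedent $\top$ and all of its equivalence premisses $\top \sseq \top$ are trivially derivable. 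Under this restriction, the three conditional rules reduce to K-style sequent rules for the unary modalities $\top\cb$ and $\top\cd$, matching exactly the sequent rules for $\Box$ and $\diam$ in a calculus for \wij's logic $\WK$ (see, e.g., \cite{wij:1990}). By relabelling each $\top \cb \chi$ as $\Box\chi'$ and each $\top \cd \chi$ as $\diam\chi'$, where $\chi = \chi'^\tau$, the derivation $\varDer$ transforms into a derivation of $\seq \varphi$ in such a $\WK$-calculus, yielding $\vd_{\WK}\varphi$.

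The main obstacle I anticipate is the converse direction, and specifically the step of turning a restricted $\SCCK$-derivation into a $\WK$-proof. Doing this cleanly requires either invoking (and establishing syntactic equivalence with) a sequent calculus for $\WK$, or alternatively, inductively simulating each cut-free $\SCCK$-sequent whose formulas lie in the image of $(\cdot)^\tau$ by an axiomatic $\WK$-derivation. The subformula analysis sketched above is the enabling observation, but packaging it into a formal derivation-to-derivation transformation is where the technical care lies.
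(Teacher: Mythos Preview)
Your proposal is correct and, for the harder right-to-left direction, follows the same route as the paper: pass through the cut-free sequent calculus $\SCCK$, observe that every conditional occurring in a derivation of $\seq\varphi^\tau$ has antecedent $\top$, and then rewrite the conditional rules as the unary modal rules of a sequent calculus $\SWK$ for $\WK$ taken from \cite{wij:1990}. The paper does both directions symmetrically at the sequent level (the ``vice versa'' transforms $\SWK$-derivations back into $\SCCK$-derivations), whereas you handle the left-to-right direction by a direct Hilbert-style simulation of the $\WK$ axioms and rules inside $\CCK$. Both work; your version avoids even invoking $\SWK$ in that direction, at the cost of checking each axiom by hand.

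One small imprecision worth fixing: the image of $(\cdot)^\tau$ is not literally closed under subformulas, since $\top$ appears as the antecedent of every translated conditional but need not be $\psi^\tau$ for any subformula $\psi$ of $\varphi$. What you actually need, and what does hold, is the weaker observation that every \emph{conditional} subformula of $\varphi^\tau$ has shape $\top\cb\chi$ or $\top\cd\chi$ with $\chi$ in the image of $\tau$; the subformula property of cut-free $\SCCK$ then guarantees that all equivalence premisses collapse to $\top\sseq\top$, exactly as you say.
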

\begin{proof}
Considering $\SWK$, the propositional fragment of the sequent calculus for $\WK$ in \cite{wij:1990},
and transforming every derivation in $\SCCK$ into a derivation in $\SWK$ by replacing every application of \rulecb, \rulecd, \rulecbd\ with an application of the corresponding modal rule, and vice versa.
\end{proof}

\subsection{Semantics}
We now present a semantics for $\CCK$ defined in terms of basic Chellas models.

\begin{definition}[Constructive Chellas model]
We call \emph{\const\ Chellas model} (\ccm\ for short) any basic model (Def.~\ref{def:basic models})
where the satisfaction relation $\M, w \sat \varphi$ is inductively defined by the propositional clauses in Def.~\ref{def:sat}
and the \emph{global} interpretations $(G\cb)$ and $(G\cd)$ for $\cb$ and $\cd$.
\end{definition}



\ccm s and Weiss models can be shown equivalent for formulas of $\lanB$.
By induction 
on the construction of the formulas,
one can prove that \ccm s satisfy the hereditary property,
that is, 
for all \ccm s $\M$, $w, w'$ of $\M$, and $\varphi\in\lan$,
if  $w \Vd \varphi$ and $w \less w'$, then $w' \Vd \varphi$.
Moreover, one can easily show that  $\CCK$ is sound with respect to \ccm s,
by showing that all axioms and rules of $\CCK$ are valid, respectively validity-preserving, in every \ccm\
(c.f. Appendix \ref{app:sem}).


We now prove that $\CCK$ is complete with respect to \ccm s 
by a canonical model construction inspired from~\cite{wij:1990}.
In the following, $\vd$ only refers to $\vd_{\CCK}$.
A set $\A$ of formulas of $\lan$ is \emph{prime}
if it is consistent ($\A\not\vd\bot$),
deductively closed (if $\A\vd\varphi$, then $\varphi\in\A$),
and satisfies the disjunction property
(if $\varphi\lor\psi\in\A$, then $\varphi\in\A$ or $\psi\in\A$).
We denote by $\pset{\varphi}$ the set of all prime sets $\A$ such that $\varphi\in\A$. 
The following proof, reported in Appendix \ref{app:sem}, is a routine extension of the proof of \cite{segerbergMin} for $\IPL$ to the conditional language $\lan$.
%

%

\begin{restatable}{lemma}{lindenbaum}\label{lemma:lind}
If $\A\not\vd\varphi$, then there is a prime set $\B$ such that $\A\subseteq\B$ and $\varphi\notin\B$.
\end{restatable}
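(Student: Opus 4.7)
The plan is to carry out a standard Lindenbaum-style construction, extended to the intuitionistic setting where the disjunction property must be secured by the choice of construction rather than following from maximal consistency alone.

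First, I would fix an enumeration $\psi_0, \psi_1, \psi_2, \ldots$ of all formulas of $\lan$ (possible since $\atm$ is countable). Setting $A_0 = A$, I would build a chain $A_0 \subseteq A_1 \subseteq \ldots$ inductively by
\[
A_{n+1} =
\begin{cases}
A_n \cup \{\psi_n\} & \text{if } A_n \cup \{\psi_n\} \not\vd \varphi, \\
A_n & \text{otherwise,}
\end{cases}
\]
and take $B := \bigcup_{n \in \mathbb{N}} A_n$. By construction $A \subseteq B$, and a routine induction shows $A_n \not\vd \varphi$ for every $n$.

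Next I would verify the three defining properties of a prime set. For \emph{consistency} (indeed $B \not\vd \varphi$), note that since $\vd_{\CCK}$ is finitary (a derivation from $B$ uses only finitely many formulas of $B$), any witness $B \vd \varphi$ lies in some $A_m$, contradicting $A_m \not\vd \varphi$; in particular $\varphi \notin B$. For \emph{deductive closure}, suppose $B \vd \psi$ and write $\psi = \psi_n$. If $\psi_n \notin B$, then by construction $A_n \cup \{\psi_n\} \vd \varphi$; combining this with $A_n \subseteq B$ and $B \vd \psi_n$ via cut/transitivity of $\vd_{\CCK}$ (available since $\CCK$ contains $\IPL$ and modus ponens) yields $B \vd \varphi$, a contradiction. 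For the \emph{disjunction property}, suppose $\chi \lor \eta \in B$ but $\chi, \eta \notin B$; writing $\chi = \psi_n$, $\eta = \psi_m$ and unfolding the construction at stages $n$ and $m$, I obtain $B \cup \{\chi\} \vd \varphi$ and $B \cup \{\eta\} \vd \varphi$. The standard $\IPL$-derivable $\lor$-elimination scheme then gives $B \cup \{\chi \lor \eta\} \vd \varphi$, and since $\chi \lor \eta \in B$ we conclude $B \vd \varphi$, contradicting consistency.

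There is no real obstacle here: the argument is essentially Segerberg's proof for $\IPL$ transplanted to $\lan$, and the conditional connectives $\cb, \cd$ play no special role since the construction treats all formulas uniformly via the enumeration. The only point worth flagging is that one must rely on the finitary formulation of $\vd_{\CCK}$ from Section~\ref{sec:preliminaries} (derivability from $A$ reduces to derivability from a finite conjunction of elements of $A$) so that compactness arguments and the cut/transitivity step in the closure proof go through; since $\CCK$ is defined Hilbert-style over $\IPL$ plus modus ponens, these are immediate.
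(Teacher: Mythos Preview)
Your proposal is correct and matches the paper's own argument essentially step for step: the paper also builds $B$ by a Segerberg-style enumeration, adding $\psi_n$ to $A_n$ exactly when doing so does not entail $\varphi$ (the paper phrases the test as $A_n\not\vd\psi_n\imp\varphi$, which is equivalent to your $A_n\cup\{\psi_n\}\not\vd\varphi$ by the deduction theorem built into the definition of $\vd$), and then verifies consistency, deductive closure, and the disjunction property exactly as you sketch. Your remark that the conditional connectives play no special role is also the paper's point—it simply says the proof extends Segerberg's for $\IPL$ to $\lan$ without essential modification.
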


\begin{definition}[Canonical element]\label{def:can el}
A \emph{canonical element} (\ce) is a pair $(\A, \ceR)$, where 
$\A$ is a prime set, and $\ceR$ is a set of pairs $(\alpha, \beta)$ such that
\begin{itemize}
\item $\alpha = \pset{\varphi}$ for some $\varphi\in\lan$;
\item $\beta$ is a set of prime sets such that, for all $\psi\in \lan$:
\begin{itemize}
\item if $\varphi\cb\psi \in \A$, then for all $\B\in\beta$, $\psi\in\B$,
\item if $\varphi\cd\psi \in \A$, then there is $\B\in\beta$ such that $\psi\in\B$;
\end{itemize}
\item for all $(\alpha, \beta), (\alpha', \beta') \in \ceR$, $\alpha = \alpha'$ implies $\beta = \beta'$;
\item if $\varphi\cd\psi \in \A$, then there is $\beta$ such that $(\pset{\varphi}, \beta) \in \ceR$.
\end{itemize}
\end{definition}

We denote by $\ceset{\varphi}$ the set of all \ce s $(\A, \ceR)$ such that $\varphi\in\A$. 
For every set $\A \subseteq \lan$, we denote by $\fcbm{\A}$ the set $\{\psi \mid \varphi\cb\psi\in\A\}$.
The proof of the following lemma is in Appendix \ref{app:sem}. 

\begin{restatable}{lemma}{lemmacanel}\label{lemma:can el}
For every prime set $\A$:
\begin{enumerate}
\item There is a \ce\ $(\A, \ceR)$;
\item If $\varphi \cb \psi \notin \A$, then there is a \ce\ $(\A, \ceR)$ and a pair $(\pset{\varphi}, \beta) \in \ceR$ such that $\psi\notin\B$ for some $\B \in \beta$;
\item If $\varphi \cd \psi \notin \A$, then there is a \ce\ $(\A, \ceR)$ and a pair $(\pset{\varphi}, \beta) \in \ceR$ such that $\psi\notin\B$ for all $\B \in \beta$.
\end{enumerate}
\end{restatable}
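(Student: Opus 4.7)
The plan is to build $(\A, \ceR)$ by selecting, for each equivalence class $\alpha = \pset{\varphi}$ that must be put into $\ceR$, a family $\beta$ of prime sets satisfying the two conditional clauses of Definition~\ref{def:can el}. The crux is three consistency facts about $\A$ and a fixed $\varphi$: (a) if $\varphi \cd \chi \in \A$, then $\fcbm{\A} \cup \{\chi\}$ is consistent; (b) if $\varphi \cb \psi \notin \A$, then $\fcbm{\A} \not\vd \psi$; (c) if $\varphi \cd \psi \notin \A$ and $\varphi \cd \chi \in \A$, then $\fcbm{\A} \cup \{\chi\} \not\vd \psi$. Each is shown by collapsing finitely many $\psi_1, \ldots, \psi_n \in \fcbm{\A}$ into one $\varphi \cb (\psi_1 \land \cdots \land \psi_n) \in \A$ via iterated \axCCbox, applying the derived rule \RMbox\ (obtainable from \axCMbox\ and \RCbox), and then, for (a) and (c), invoking \axCKdiam\ to transfer the resulting $\cb$-assertion to a $\cd$-assertion. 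In (a) one gets $\varphi \cb \neg\chi \in \A$, hence $\varphi \cd \chi \imp \varphi \cd \bot \in \A$, so $\varphi \cd \bot \in \A$, contradicting \axCNdiam; in (c) one gets $\varphi \cb (\chi \imp \psi) \in \A$, hence $\varphi \cd \chi \imp \varphi \cd \psi \in \A$, forcing $\varphi \cd \psi \in \A$, against the hypothesis; (b) is immediate.

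For claim~(1), for every $\alpha = \pset{\varphi}$ such that $\{\chi : \varphi \cd \chi \in \A\} \neq \emptyset$, apply (a) and Lemma~\ref{lemma:lind} to pick, for each such $\chi$, a prime set $\B_\chi \supseteq \fcbm{\A} \cup \{\chi\}$; set $\beta_\varphi := \{\B_\chi : \varphi \cd \chi \in \A\}$ and put $(\pset{\varphi}, \beta_\varphi)$ into $\ceR$. Both clauses of Definition~\ref{def:can el} hold by construction. Well-definedness of $\ceR$ as a partial function of $\alpha$ follows from the observation that $\pset{\varphi} = \pset{\varphi'}$ implies $\vd \varphi \biimp \varphi'$ (by Lemma~\ref{lemma:lind} applied in both directions), so by \RAbox\ and \RAdiam\ both $\fcbm{\A}$ and $\{\chi : \varphi \cd \chi \in \A\}$ depend only on $\alpha$; one picks a single representative per equivalence class.

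For claim~(2), I additionally use (b) and Lemma~\ref{lemma:lind} to obtain a prime set $\B_0 \supseteq \fcbm{\A}$ with $\psi \notin \B_0$, and replace $\beta_\varphi$ by $\beta_\varphi \cup \{\B_0\}$; the clauses of Definition~\ref{def:can el} are preserved since $\fcbm{\A} \subseteq \B_0$, and now some element of $\beta_\varphi$ omits $\psi$. For claim~(3), I instead replace each inner $\B_\chi$ by a prime set that additionally omits $\psi$, whose existence follows from (c) and Lemma~\ref{lemma:lind}; if no $\chi$ with $\varphi \cd \chi \in \A$ exists, $\beta_\varphi := \emptyset$ satisfies both clauses vacuously. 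In both (2) and (3), $\ceR$ is completed on the remaining $\alpha' \neq \pset{\varphi}$ exactly as in (1). The main obstacle is verifying (a) and (c): the interaction between $\cb$ and $\cd$ mediated by \axCKdiam\ is the subtle ingredient that makes the $\cd$-side of the argument go through, and is precisely the reason \axCKdiam\ is included among the axioms of $\CCK$.
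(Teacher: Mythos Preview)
Your proposal is correct and follows essentially the same approach as the paper: build $\ceR$ by choosing, for each relevant equivalence class $\pset{\varphi}$, a set $\beta$ of prime witnesses extending $\fcbm{\A}$, with the three consistency facts (a), (b), (c) justifying the applications of Lemma~\ref{lemma:lind}. Your exposition is in fact more detailed than the paper's---you spell out the role of \axCKdiam\ and \axCNdiam\ in (a) and (c), and you make explicit the Lindenbaum argument showing $\pset{\varphi}=\pset{\varphi'}$ implies $\vd\varphi\biimp\varphi'$, which the paper leaves implicit by simply fixing a canonical representative per equivalence class.
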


\begin{definition}[Canonical model]\label{def:mod can}
The \emph{canonical model} for $\CCK$ is the tuple
$\Mc = \langle \Wc, \lessc, \Rc, \Vc \rangle$, where:
\begin{itemize}
\item $\Wc$ is the set of all canonical elements;
\item $(\A, \ceR) \lessc (\A', \ceR')$ iff $\A \subseteq \A'$;
\item $(\A, \ceR) \Rcof{X} (\B, \ceU)$ iff there are $\varphi$, $\beta$ s.t. $X = \ceset{\varphi}$, $(\pset{\varphi}, \beta)\in\ceR$ and $\B\in\beta$;
\item $p \in \Vc((\A, \ceR))$ iff $p \in \A$.
\end{itemize}
\end{definition}

It is easy to verify that the canonical model for $\CCK$ is a \ccm.

\begin{restatable}{lemma}{lemmamodcan}\label{lemma:mod can}
For all $\varphi\in\lan$ and all prime sets $\A$, $\varphi\in\A$ iff $(\A, \ceR) \Vd \varphi$ for all \ce s $(\A, \ceR)$.
\end{restatable}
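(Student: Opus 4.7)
The plan is a standard truth lemma, proved by induction on the structure of $\varphi$. The atomic case is immediate from the definition of $\Vc$, and $\bot$ is handled by the consistency of prime sets. The clauses for $\land$, $\lor$, $\imp$ are the usual ones for a canonical model of $\IPL$ built over prime theories: $\land$ is by deductive closure, $\lor$ by the disjunction property, and the harder direction of $\imp$ appeals to Lemma~\ref{lemma:lind} to extend $\A \cup \{\varphi\}$ to a prime set omitting $\psi$, then to a \ce\ via Lemma~\ref{lemma:can el}.(1) to produce a countermodel world.

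Before handling the conditional connectives, I would record an auxiliary: if $\ceset{\varphi} = \ceset{\varphi'}$ then $\vd \varphi \biimp \varphi'$, and consequently $\pset{\varphi} = \pset{\varphi'}$. This follows because every prime set extends to a \ce\ by Lemma~\ref{lemma:can el}.(1), so $\ceset{\varphi} = \ceset{\varphi'}$ reduces to $\pset{\varphi} = \pset{\varphi'}$, and Lemma~\ref{lemma:lind} gives $\vd \varphi \imp \varphi'$ and the converse. This auxiliary is the crucial bridge between the semantic condition $\Rof{\trset{\varphi}}$ and the syntactic witnesses $\pset{\varphi'}$ used in Definition~\ref{def:can el}, and combined with the inductive hypothesis yields $\trset{\varphi} = \ceset{\varphi}$ in the canonical model.

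For the $\cb$ case, left-to-right: if $\varphi \cb \psi \in \A$ and $(\A,\ceR) \lessc (\A',\ceR') \Rof{\trset{\varphi}} (\B,\ceU)$, then $\varphi \cb \psi \in \A'$ by $\A \subseteq \A'$, and unfolding $\Rc$ yields $\varphi'$, $\beta$ with $\ceset{\varphi} = \ceset{\varphi'}$, $(\pset{\varphi'},\beta) \in \ceR'$, $\B \in \beta$. By the auxiliary and rule \RAbox, $\varphi' \cb \psi \in \A'$, so $\psi \in \B$ by Definition~\ref{def:can el}, and the IH gives $(\B,\ceU) \Vd \psi$. Right-to-left: if $\varphi \cb \psi \notin \A$, Lemma~\ref{lemma:can el}.(2) supplies a \ce\ $(\A,\ceR)$ with $(\pset{\varphi},\beta) \in \ceR$ and some $\B \in \beta$ with $\psi \notin \B$; extending $\B$ to a \ce\ $(\B,\ceU)$ by Lemma~\ref{lemma:can el}.(1), the IH refutes $\varphi \cb \psi$ via reflexivity of $\lessc$.

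For the $\cd$ case, left-to-right: if $\varphi \cd \psi \in \A$ and $(\A,\ceR) \lessc (\A',\ceR')$, the last clause of Definition~\ref{def:can el} forces a pair $(\pset{\varphi},\beta) \in \ceR'$, and the $\cd$-clause of Definition~\ref{def:can el} gives $\B \in \beta$ with $\psi \in \B$; Lemma~\ref{lemma:can el}.(1) produces a \ce\ $(\B,\ceU)$, and the IH concludes. Right-to-left: if $\varphi \cd \psi \notin \A$, Lemma~\ref{lemma:can el}.(3) furnishes $(\A,\ceR)$ with $(\pset{\varphi},\beta) \in \ceR$ and $\psi \notin \B$ for every $\B \in \beta$. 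Any $(\B',\ceU)$ with $(\A,\ceR) \Rof{\trset{\varphi}} (\B',\ceU)$ arises from some $\varphi'$ with $\ceset{\varphi'} = \ceset{\varphi}$ and $(\pset{\varphi'},\beta') \in \ceR$ and $\B' \in \beta'$; the auxiliary gives $\pset{\varphi'} = \pset{\varphi}$, and the uniqueness clause in Definition~\ref{def:can el} forces $\beta' = \beta$, so $\psi \notin \B'$, refuting $\varphi \cd \psi$ at $(\A,\ceR)$ itself.

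The main obstacle is the auxiliary equivalence $\ceset{\varphi} = \ceset{\varphi'} \Rightarrow {\vd} \varphi \biimp \varphi'$ and the subsequent appeal to the uniqueness clause in Definition~\ref{def:can el} to identify the $\beta$ witnessing $\Rc$; without it, one cannot convert the semantic index $\trset{\varphi}$ into a syntactic witness of the form $\pset{\varphi}$ required by Definition~\ref{def:mod can}.
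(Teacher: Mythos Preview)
Your proposal is correct and follows essentially the same route as the paper: induction on $\varphi$, using Lemma~\ref{lemma:lind} for the propositional connectives and Lemma~\ref{lemma:can el} together with Definition~\ref{def:can el} for the conditional ones. If anything, you are more careful than the paper's own proof: the paper silently passes from $\Rcof{\ceset{\rho}}$ to the existence of a pair $(\pset{\rho},\beta)\in\ceR'$ and, in the $\cd$ right-to-left direction, silently invokes the uniqueness clause of Definition~\ref{def:can el}, whereas you isolate the needed bridge $\ceset{\varphi}=\ceset{\varphi'}\Rightarrow\pset{\varphi}=\pset{\varphi'}$ as an explicit auxiliary and name the uniqueness step.
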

\begin{proof}
Note that the claim is equivalent to $\truthset{\varphi} = \ceset{\varphi}$.
The proof is by induction on the construction of $\varphi$.
We only show the case $\varphi = \rho \cb \psi$ ($\varphi = \rho \cd \psi$ in Appendix \ref{app:sem}).
%
Assume $\rho \cb \psi \in \A$ and $(\A, \ceR) \lessc (\A', \ceR') \Rcof{\trset{\rho}} (\B, \ceU)$.
Then $\A \subseteq \A'$, which implies $\rho \cb \psi \in \A'$, and by \ih, $(\A', \ceR') \Rcof{\ceset{\rho}} (\B, \ceU)$.
By definition of $\Rcof{X}$, there is $\beta$ such that $(\pset{\rho}, \beta) \in \ceR'$ and $\B \in \beta$,
then by definition of \ce, $\psi\in\B$.
Thus by \ih, $(\B, \ceU)\Vd\psi$,
hence $(\A, \ceR) \Vd \rho \cb \psi$.
Now assume $\rho \cb \psi \notin \A$.
By Lemma~\ref{lemma:can el},
there are 
$(\A, \ceR)$, $\beta$, $\B$
such that 
$(\pset{\rho}, \beta)\in\ceR$, $\B\in\beta$ and $\psi\notin\B$.
Moreover, there exists a \ce\ $(\B, \ceU)$.
Then $(\A, \ceR) \Rcof{\ceset{\rho}} (\B, \ceU)$, 
and by \ih,
$(\A, \ceR) \Rcof{\trset{\rho}} (\B, \ceU)$ and $(\B, \ceU)\not\Vd\psi$.
Therefore $(\A, \ceR) \not\Vd \rho\cb\psi$. \qed
\end{proof}

\begin{theorem}[Completeness]
If $\varphi$ is valid in all \ccm s if and only if it is derivable in $\CCK$.
\end{theorem}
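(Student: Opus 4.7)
My plan is to prove the completeness theorem by the standard canonical model argument, using the machinery already developed in Lemmas \ref{lemma:lind}, \ref{lemma:can el} and \ref{lemma:mod can}. Soundness has already been established (by checking that each axiom and rule of $\CCK$ is valid or validity-preserving in every \ccm), so the substantive direction is completeness: if $\varphi$ is valid in every \ccm, then $\vd_{\CCK}\varphi$.

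I would argue by contraposition. Suppose $\not\vd_{\CCK}\varphi$, that is, $\emptyset \not\vd \varphi$. By Lemma~\ref{lemma:lind} applied to $\A = \emptyset$, there exists a prime set $\B$ with $\emptyset \subseteq \B$ and $\varphi \notin \B$. By clause (1) of Lemma~\ref{lemma:can el}, there is a canonical element $(\B, \ceR)$, so $(\B, \ceR)$ is a world of the canonical model $\Mc$ defined in Def.~\ref{def:mod can}. Applying the truth lemma (Lemma~\ref{lemma:mod can}) in the contrapositive direction yields $(\B, \ceR) \not\Vd \varphi$, witnessing that $\varphi$ fails at some world of some \ccm.

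The final step is to observe that $\Mc$ is genuinely a \ccm, which is the point the excerpt asserts is ``easy to verify''. I would make this explicit: $\Wc$ is nonempty (by Lemma~\ref{lemma:can el}(1) applied to any prime set, e.g.\ one extending $\{\top\}$); the relation $\lessc$ is reflexive and transitive because it is defined by set inclusion; the valuation $\Vc$ is monotone along $\lessc$ because prime sets are closed under the inclusion order on their atomic fragments; and $\Rcof{X}$ depends only on $X$ (not on the chosen witness $\varphi$ with $X = \ceset{\varphi}$) in the sense required by Def.~\ref{def:basic models}, since the uniqueness clause in Def.~\ref{def:can el} guarantees $(\pset{\varphi},\beta),(\pset{\varphi},\beta')\in \ceR$ imply $\beta = \beta'$.

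Combining these three steps gives: $\not\vd_{\CCK}\varphi$ implies there is a \ccm\ and a world at which $\varphi$ fails, so $\varphi$ is not valid in all \ccm s, which is the desired contrapositive. The main obstacle has really been pushed into the earlier lemmas — Lemma~\ref{lemma:can el} in particular, where one must simultaneously construct a prime set $\B$ refuting a given $\cb$- or $\cd$-formula \emph{and} arrange the relational component $\ceR$ so as to satisfy the coherence conditions of Def.~\ref{def:can el}; once those lemmas are in place the completeness theorem itself is a three-line consequence.
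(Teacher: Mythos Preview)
Your proposal is correct and follows essentially the same route as the paper: argue by contraposition, apply Lemma~\ref{lemma:lind} to obtain a prime set not containing $\varphi$, use Lemma~\ref{lemma:can el}(1) to produce a canonical element, and invoke Lemma~\ref{lemma:mod can} to conclude that $\varphi$ fails at a world of the canonical model. The paper's proof is a one-sentence version of exactly this chain; you have simply expanded the step ``it is easy to verify that the canonical model is a \ccm'' into explicit checks, which is a reasonable addition.
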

\begin{proof}
If $\not\vd \varphi$, then by Lemma \ref{lemma:lind} there is $\A$ prime such that $\varphi\notin\A$,
then by Lemma \ref{lemma:can el} there is a \ce\ $(\A, \ceR)$ that by Def. \ref{def:mod can} belongs to the canonical model for $\CCK$, hence by Lemma \ref{lemma:mod can}, $(\A, \ceR)\not\Vd \varphi$,
thus $\varphi$ is not valid in all \ccm s.
\end{proof}

\section{Beyond  basic constructive conditional logic}
\label{sec:ext}
Similarly 
to what happens in classical $\CK$, 
axioms and frame conditions can be added to $\CCKbox$, $\CCK$ and 
$\IntCK$. 
Several extensions of $\CCKbox$ are studied in \cite{ciardelli:2020} while, to the best of our knowledge, extensions of $\IntCK$ or $\CCK$ have not been investigated. 
In this section, we define sequent calculi, axiom systems and classes of models for $\CCK$ extended with \emph{identity} and \emph{conditional modus ponens}. While we focus on $\CCK$ for this section, by removing the $\cd$-rules from our proof systems, we obtain sequent calculi for the corresponding extensions of $\CCKbox$, defined in~\cite{ciardelli:2020}.  We also provide a sequent calculus and an axiom system for $\CCK$ extended with the \emph{conditional excluded middle}, but a semantic characterisation of the logic is left to future work. 

To define proof systems for extensions of $\CCK$, we consider the sequent calculi of \cite{pattinson:2011} for the extensions of $\CKbox$ with the axioms
\axIDbox, \axMPbox\ and \axCEMbox\ (Fig. \ref{fig:ax ext}).
We integrate the rules of these calculi with $\cd$, and reduce them to single-succedent sequents,
obtaining the constructive conditional rules and calculi in Fig. \ref{fig:rules ext}.  
%
We first consider extensions of $\CCK$ with \axIDbox\ and \axMPbox. 
Take $\SCCKstar \in \{ \SCCKID, \SCCKMP, \SCCKMPID\}$. 

\begin{figure}[t!]
\begin{tabular}{ll}
\axIDbox\ & $\varphi \cb \varphi$ \\

\axMPbox\ & $(\varphi \cb \psi) \imp (\varphi \imp \psi)$ \\

\axMPdiam\ & $\varphi \land \psi \imp (\varphi \cd \psi)$ \\

\axCEMbox\ & $(\varphi \cb \psi) \lor (\varphi \cb \neg\psi)$ \\
 
\axCEMdiam\ & $(\varphi \cd \psi) \land (\varphi \cd \chi) \imp (\varphi \cd \psi \land \chi)$ \\
\end{tabular}
\hfill
\begin{tabular}{lllll}
\emph{System name} \ \ && \emph{Axiomatisation} \\
\hline
$\CCKID$ & \quad & $\CCK$ + \axIDbox \\
$\CCKMP$ & \quad & $\CCK$ + \axMPbox, \axMPdiam \\
$\CCKMPID$ & \quad & $\CCK$ + \axMPbox, \axMPdiam, \axIDbox \\
$\CCKCEM$ & \quad & $\CCK$ + \axCEMdiam \\
\end{tabular}
\caption{\label{fig:ax ext}Conditional axioms and extensions of $\CCK$.}
\end{figure}

%
%


\begin{figure}[t!]
\centering
\ax{$\{\varphi \sseq \rho_i\}_{i \leq n}$ \quad $\sigma_1, ..., \sigma_n, \varphi \seq \psi$}
\llab{\rulecbid}
\uinf{$\G, \rho_1 \cb \sigma_1, ..., \rho_n \cb \sigma_n \seq \varphi \cb \psi$}
\disp
\hfill
\ax{$\{\varphi \sseq \rho_i \}_{i \leq n}$ \quad $\sigma_1, ..., \sigma_n, , \varphi, \psi \seq$}
\llab{\rulecbdid}
\uinf{$\G, \rho_1 \cb \sigma_1, ..., \rho_n \cb \sigma_n, \varphi \cd \psi \seq \D$}
\disp

\vspace{0.2cm}
\ax{$\{\varphi \sseq \rho_i\}_{i \leq n}$} 
\ax{$\varphi \sseq \eta$}
\ax{$\sigma_1, ..., \sigma_n, , \varphi, \psi \seq \vartheta$}
\llab{\rulecdid}
\tinf{$\G, \rho_1 \cb \sigma_1, ..., \rho_n \cb \sigma_n, \varphi \cd \psi \seq \eta \cd \vartheta$}
\disp

\vspace{0.3cm}
\ax{$\G, \varphi \cb \psi \seq \varphi$} 
\ax{$\G, \varphi \cb \psi, \psi \seq \D$}
\llab{\rulempbox}
\binf{$\G, \varphi \cb \psi \seq \D$}
\disp
\quad
\ax{$\G \seq \varphi$} 
\ax{$\G \seq \psi$}
\llab{\rulempdiam}
\binf{$\G \seq \varphi \cd \psi$}
\disp

\vspace{0.3cm}
\ax{$\{\varphi \sseq \rho_i\}_{i \leq n}$} 
\ax{$\{\varphi \sseq \xi_j\}_{j \leq k}$} 
\ax{$\varphi \sseq \eta$}
\ax{$\sigma_1, ..., \sigma_n, \chi_1, ..., \chi_k, \psi \seq \vartheta$}
\llab{\rulecdcem}
\qinf{$\G, \rho_1 \cb \sigma_1, ..., \rho_n \cb \sigma_n, \xi_1 \cd \chi_1, ..., \xi_k \cd \chi_k, \varphi \cd \psi \seq \eta \cd \vartheta$}
\disp

\vspace{0.2cm}
\ax{$\{\varphi \sseq \rho_i \}_{i \leq n}$} 
\ax{$\{\varphi \sseq \xi_j\}_{j \leq k}$} 
\ax{$\sigma_1, ..., \sigma_n, \chi_1, ..., \chi_k, \psi \seq$}
\llab{\rulecbdcem}
\tinf{$\G, \rho_1 \cb \sigma_1, ..., \rho_n \cb \sigma_n, \xi_1 \cd \chi_1, ..., \xi_k \cd \chi_k, \varphi \cd \psi \seq \D$}
\disp


\vspace{0.4cm}
\begin{tabular}{llll|lllll}
\emph{Calculus} \ \ && \emph{Conditional rules} &\quad&& \emph{Calculus} \ \ && \emph{Conditional rules} \\
\hline
$\SCCKID$ & \quad & \rulecbid, \rulecdid, \rulecbdid &&&
$\SCCKMPID$ & \quad & \rulecbid, \rulecdid, \rulecbdid, \rulempbox, \rulempdiam \TT \\
$\SCCKMP$ & \quad & \rulecb, \rulecd, \rulecbd, \rulempbox, \rulempdiam &&&
$\SCCKCEM$ & \quad & \rulecb, \rulecdcem, \rulecbdcem \\
\end{tabular}

\caption{\label{fig:rules ext}Conditional rules and calculi for extensions, where  $0 \leq |\D| \leq 1$ and $n, k \geq 0$.}
\end{figure}


%

The proof of the following theorem,  reported in Appendix \ref{app:cut elim SCCKstar}, extends or slightly modifies the proofs of Th. \ref{th:cut elim CCK} with the new conditional rules. 

\begin{restatable}{theorem}{thm:struct:SCCKstar}
	\label{th:cut elim SCCKstar}
The rules $\lwk$, $\rwk$, $\ctr$ are height-preserving admissible, and the rule $\cut$ is admissible in 
$\SCCKstar$.
\end{restatable}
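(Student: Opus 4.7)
The plan is to extend the proof of Theorem \ref{th:cut elim CCK} to the additional conditional rules in each $\SCCKstar$. The overall structure is preserved: height-preserving admissibility of $\lwk$, $\rwk$ and $\ctr$ is proved by induction on the height of derivations, while admissibility of $\cut$ is proved by a lexicographic induction on the pair (weight of the cut formula, sum of heights of the premises of $\cut$). For each new rule, I would verify that the standard cases go through with only minor modifications, and then focus on the genuinely new critical pairs.

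For hp-admissibility of the structural rules, each new rule must be shown to commute with weakening and contraction. The identity variants \rulecbid, \rulecdid, \rulecbdid\ differ from \rulecb, \rulecd, \rulecbd\ only by the extra occurrence of $\varphi$ in the antecedent of the minor premise; since this formula is unrelated to the context $\G$ on which the structural rules act, the argument is identical to the basic case. For \rulempbox\ and \rulempdiam\ the principal formula stays in both premises (as for \limp), so permutation with $\lwk$, $\rwk$ and $\ctr$ is routine.

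For admissibility of $\cut$, the essentially new cases are those in which the cut formula is a $\cb$-formula or a $\cd$-formula principal on both sides under the new rules. When both last rules are identity variants, the reduction is obtained by a direct adaptation of the corresponding case in Theorem \ref{th:cut elim CCK}: the extra occurrence of $\varphi$ introduced by the identity rules is carried along in the auxiliary cuts on the subformulas, which is harmless since $\varphi$ is a strict subformula of the cut formula. The case of \rulempdiam\ on the right of cut reduces directly to two cuts of strictly smaller weight on $\varphi$ and $\psi$.

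The main obstacle is the case in which the cut formula $\varphi \cb \psi$ is introduced on the right by \rulecb\ (or \rulecbid), with premises $\varphi \sseq \rho_i$ and $\sigma_1, \dots, \sigma_n \seq \psi$, and eliminated on the left by \rulempbox. Applying the induction hypothesis on height to push $\cut$ above \rulempbox\ yields $\G,\G' \seq \varphi$ and $\G,\G',\psi \seq \D$, but this is not yet the desired conclusion $\G,\G' \seq \D$. The plan is to exploit cuts of strictly smaller weight as follows: first cut $\varphi$ between $\G,\G' \seq \varphi$ and each $\varphi \seq \rho_i$ to obtain $\G,\G' \seq \rho_i$; then apply \rulempbox\ iteratively using the formulas $\rho_i \cb \sigma_i$ already present in $\G$, combined with the induction hypothesis, to produce $\G,\G',\sigma_1,\dots,\sigma_n \seq \D$; finally cut on $\psi$ with $\sigma_1, \dots, \sigma_n \seq \psi$ to close the derivation. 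The symmetric cases involving \rulecd, \rulecbd\ (and their identity variants) against \rulempbox\ or \rulempdiam\ are handled analogously, with the $\cd$-on-the-right case being easier since \rulempdiam\ has subformulas of the cut formula as premises.
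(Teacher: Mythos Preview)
Your approach is essentially the same as the paper's, including the key device of iterating \rulempbox\ (the paper packages this as an auxiliary admissible rule $\rulempboxstar$, but your ``apply \rulempbox\ iteratively'' amounts to the same thing). Two points of exposition need fixing.

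First, in the principal case \rulecb\ versus \rulempbox, your steps 3 and 4 are stated in the wrong order. Applying \rulempbox\ iteratively \emph{consumes} the $\sigma_i$'s from the antecedent, it does not produce them; so from $\G,\G'\seq\rho_i$ and a sequent $\G,\G',\sigma_1,\dots,\sigma_n\seq\D$ you obtain $\G,\G'\seq\D$. The sequent $\G,\G',\sigma_1,\dots,\sigma_n\seq\D$ must therefore be obtained \emph{first}, by the cut of smaller weight on $\psi$ between $\sigma_1,\dots,\sigma_n\seq\psi$ and $\G,\G',\psi\seq\D$, and only \emph{then} does the iterated \rulempbox\ close the argument. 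As written, your step~3 has no derivation to start from.

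Second, the case where \rulempdiam\ introduces the cut formula (so \rulempdiam\ is the last rule on the \emph{left} premiss of $\cut$, not the right) does not in general reduce to ``two cuts of strictly smaller weight'': when the opposing \rulecd\ or \rulecbd\ has $n\geq 1$ principal $\cb$-formulas $\rho_i\cb\sigma_i$, you again need the iterated \rulempbox\ to discharge the $\sigma_i$'s, followed by a final application of \rulempdiam\ (or $\rwk$, for \rulecbd). Only when $n=0$ is the reduction as immediate as you suggest. Since you already note that these cases are ``handled analogously'' to the main case, this is a minor gap, but the sentence claiming they are easier is misleading.
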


For every calculus $\SCCKstar$ we define a corresponding axiomatic system $\CCKstar$, see Fig. \ref{fig:ax ext}, and prove their equivalence with the sequent calculus characterisation (Appendix \ref{app:der SCCKstar} and \ref{app:der CCKstar}).

\begin{restatable}{theorem}{thm:equiv:SCCKstar}\label{th:synt compl CCKstar} 
$\G \seq \D$ is derivable in $\SCCKstar$ if and only if $\finseq(\G \seq \D)$ is derivable in 
$\CCKstar$.
\end{restatable}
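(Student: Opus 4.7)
The plan is to mirror the proof strategy of Theorem \ref{th:synt compl CCK}, with both directions proved by induction on the height of derivations in the respective proof system. The new ingredients to be handled compared to the base case are the rules \rulecbid, \rulecdid, \rulecbdid (present in $\SCCKID$ and $\SCCKMPID$) and \rulempbox, \rulempdiam (present in $\SCCKMP$ and $\SCCKMPID$), together with the corresponding axioms \axIDbox, \axMPbox, \axMPdiam.

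For the direction from $\SCCKstar$ to $\CCKstar$, I would check that each new sequent rule $\varS_1, \dots, \varS_k / \varS$ gives rise to a derivable implication $\finseq(\varS_1) \land \dots \land \finseq(\varS_k) \imp \finseq(\varS)$ in $\CCKstar$. The rules \rulempbox and \rulempdiam translate in a straightforward way using \axMPbox and \axMPdiam respectively: from the premisses of \rulempbox one extracts $\varphi$, then applies \axMPbox to $\varphi\cb\psi$ to derive $\psi$, and concludes with the second premiss; for \rulempdiam one directly conjoins the two premisses and appeals to \axMPdiam. The identity rules require more care, since the formula interpretation of, e.g., \rulecdid carries an extra occurrence of $\varphi$ on the left of its third premiss, absent in \rulecd. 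To bridge the gap I would first observe that \axIDbox together with the conditional K-schema $\varphi\cb(\psi\imp\chi)\imp(\varphi\cb\psi\imp\varphi\cb\chi)$ (derivable from \axCMbox, \axCCbox, \axCNbox and the congruence rules) yields $\varphi\cb\chi\biimp\varphi\cb(\varphi\imp\chi)$, which is precisely what allows us to pull the extra antecedent $\varphi$ inside the scope of $\cb$. Combined with \axCKdiam and \axCW (as already used in Theorem \ref{th:synt compl CCK}), this reduces the formula interpretation of \rulecdid to that of \rulecd modulo the presence of \axIDbox. The cases of \rulecbid and \rulecbdid are analogous and strictly simpler.

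For the direction from $\CCKstar$ to $\SCCKstar$ it suffices to derive each new axiom as a sequent and to show that modus ponens is admissible. Axiom \axIDbox is obtained by a single application of \rulecbid with $n=0$ on top of the initial sequent $\varphi\seq\varphi$; \axMPbox is derived by two applications of \rimp and then \rulempbox with both premisses closing by init; and \axMPdiam is derived via \rimp, \lland, and \rulempdiam with both premisses closing by init. The congruence rules \RAbox, \RCbox, \RAdiam, \RCdiam and the remaining axioms common with $\CCK$ are derived exactly as in Theorem \ref{th:synt compl CCK}. Modus ponens is simulated in the standard way by invertibility of \rimp together with admissibility of \cut granted by Theorem \ref{th:cut elim SCCKstar}.

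The main obstacle lies in the soundness of the identity rules with respect to the axiomatic system: the extra copy of $\varphi$ appearing in a premiss of \rulecbid, \rulecdid, \rulecbdid does not match any primitive axiom schema of $\CCKstar$, so one has to carry out the combinatorial step above, using \axIDbox together with the conditional K-schema (and, for \rulecdid, also \axCKdiam and \axCW), to internalise this extra antecedent inside the conditional. Once this is done, the rest of the argument is routine bookkeeping, completely parallel to Theorem \ref{th:synt compl CCK}.
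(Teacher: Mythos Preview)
Your proposal is correct and follows essentially the same approach as the paper. The paper handles the extra antecedent $\varphi$ in the identity rules by directly adjoining the conjunct $\varphi\cb\varphi$ (available from \axIDbox) to the principal $\cb$-formulas and then invoking \axCCbox\ and \axCW, whereas you phrase the same idea via the derived equivalence $\varphi\cb\chi\biimp\varphi\cb(\varphi\imp\chi)$; both routes amount to the same use of \axIDbox. One small point worth making explicit: in $\SCCKID$ and $\SCCKMPID$ the rules \rulecb, \rulecd, \rulecbd\ are \emph{replaced} by their id-variants, so when you say the common $\CCK$ axioms ``are derived exactly as in Theorem~\ref{th:synt compl CCK}'' you implicitly rely on the fact that the non-id rules are derivable from the id-rules via admissible left weakening on the extra $\varphi$---which is immediate, but should be stated.
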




By extending the completeness proof for $\CCK$, we can show that 
these logics are characterised by exactly those classes of \ccm s that satisfy the frame properties 
of Chellas models for their classical counterparts (c.f. \cite{chellas:1975}).

\begin{theorem}\label{th:sem compl CCKID}
Formula $\varphi$ is derivable in $\CCKID$, resp. $\CCKMP$, resp. $\CCKMPID$ if and only if it is valid in all \ccm s satisfiying the property (id), resp. the property (mp), resp. both properties (id) and (mp) below:
\begin{center}
\begin{tabular}{lllll}
(id) & if $w \Rof{X} v$, then $v \in X$; & \ \ \  &
(mp) & if $w \in X$, then $w \Rof{X} w$. \\
\end{tabular}
\end{center}
\end{theorem}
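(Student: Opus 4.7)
The plan is to extend the soundness and completeness proof for $\CCK$ case by case, checking the additional axioms semantically and adjusting the canonical model construction to secure the new frame properties. For soundness I would verify that $\axIDbox$, $\axMPbox$, $\axMPdiam$ are valid in \ccm s satisfying respectively (id), (mp) and (mp). The argument for $\axIDbox$ is immediate: if $w \Rof{\trset{\varphi}} v$, then (id) gives $v \in \trset{\varphi}$, hence $v \Vd \varphi$. For $\axMPbox$ and $\axMPdiam$, given any $w' \geq w$ with $w' \Vd \varphi$, the hereditary property together with (mp) yields $w' \Rof{\trset{\varphi}} w'$, so that $(G\cb)$, respectively $(G\cd)$, applied at $w'$ (using reflexivity of $\leq$) read off the desired conclusion at $w'$ itself.

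For completeness I would construct a canonical model in the style of Def. \ref{def:mod can}. For $\CCKID$ the unchanged definition suffices: since $\axIDbox$ places $\varphi \cb \varphi$ in every prime $\A$, Def. \ref{def:can el} forces $\varphi \in \B$ whenever $(\pset{\varphi}, \beta) \in \ceR$ and $\B \in \beta$, which is exactly (id) translated into the canonical relation. For $\CCKMP$ I would strengthen Def. \ref{def:can el} by the additional clause: if $\varphi \in \A$, then the (unique) pair $(\pset{\varphi}, \beta) \in \ceR$ satisfies $\A \in \beta$. Putting $\A$ into $\beta$ is compatible with the universal $\cb$-clause thanks to $\axMPbox$ (from $\varphi \cb \psi \in \A$ and $\varphi \in \A$ one derives $\psi \in \A$ by modus ponens), and it is harmless for the existential $\cd$-clause, which is monotone in $\beta$. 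The resulting canonical relation then satisfies (mp) by construction. For $\CCKMPID$ the two modifications combine transparently, since the ID case requires no change to $\ceR$.

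The main obstacle is re-proving Lemma \ref{lemma:can el} for the strengthened notion of canonical element: for every prime $\A$ one must exhibit a single $\ceR$ that simultaneously (i) witnesses every $\rho \cb \psi \notin \A$ and every $\rho \cd \psi \notin \A$ in the sense of parts 2 and 3 of the original lemma, (ii) in the MP case contains $\A$ itself in every $\beta$ indexed by a formula true at $\A$, and (iii) remains functional in its first coordinate. This boils down to a closure construction: for each $\pset{\varphi}$ that must appear in the domain of $\ceR$, one builds a single $\beta$ which includes $\A$ when $\varphi \in \A$ and, in parallel, prime witnesses for each $\varphi \cd \psi \in \A$ and each $\varphi \cb \psi \notin \A$; the compatibility of adding $\A$ with the universal $\cb$-clause is exactly where $\axMPbox$ enters, while everything else is a direct adaptation of the construction used for $\CCK$. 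The truth lemma (Lemma \ref{lemma:mod can}) then goes through essentially unchanged, concluding completeness in all three cases.
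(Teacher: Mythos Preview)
Your proposal is correct and follows essentially the same route as the paper: soundness by directly verifying the axioms, completeness for $\CCKID$ via the unmodified canonical model (using $\varphi\cb\varphi\in\A$ to force (id)), and completeness for $\CCKMP$ by restricting to canonical elements where $\A\in\beta$ whenever $(\pset{\varphi},\beta)\in\ceR$ and $\varphi\in\A$, with the compatibility of this insertion justified by $\axMPbox$.

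One small over-complication: you frame the ``main obstacle'' as exhibiting a \emph{single} $\ceR$ that simultaneously witnesses every $\rho\cb\psi\notin\A$ and every $\rho\cd\psi\notin\A$. The paper (and its Lemma~\ref{lemma:can el}) does not need this: since the truth lemma quantifies over \emph{all} canonical elements $(\A,\ceR)$, it suffices that for each individual failure there is \emph{some} (MP\mbox{-})canonical element witnessing it. So parts 2 and 3 of Lemma~\ref{lemma:can el} can be re-proved for MP-canonical elements one formula at a time, which is easier than the uniform construction you sketch. Note also that in $\CCKMP$ the axiom $\axMPdiam$ guarantees $\varphi\cd\varphi\in\A$ whenever $\varphi\in\A$, so the original construction already places $\pset{\varphi}$ in the domain of $\ceR$ in that case; this is why the ``unique pair'' you refer to actually exists.
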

\begin{proof}
	Soundness is easy.
	For completeness, in the case of $\CCKID$ we need to show that the canonical model satisfies (id):
	Suppose $(\A, \ceR) \Rcof{X} (\B, \ceU)$.
	By definition,  $X = \ceset{\varphi}$ for some $\varphi\in\lan$,
	and there is $\beta$ such that $(\pset{\varphi}, \beta)\in\ceR$ and $\B\in\beta$.
	Since $\varphi \cb \varphi \in A$, by Def.~\ref{def:can el}, $\varphi\in B$ for all $B \in \beta$,
	that is $B \subseteq \pset{\varphi}$, 
	therefore $(\B, \ceU) \in \ceset{\varphi}$.
	For $\CCKMP$, we need to slightly modify the construction of \ce s in the proof of Lemma \ref{lemma:can el}.
	First, call \emph{MP}-\ce\ any \ce\ $(\A, \ceR)$ such that if $(\alpha, \beta)\in\ceR$, $\alpha = \pset{\varphi}$ and $\varphi\in\A$, then $\A\in\beta$.
	Moreover, in item (i) of the proof, let $\beta$ be defined as before as a set containing a witness for every $\varphi \cd \vartheta \in \A$.
	We now define $\beta^* = \beta$ if $\varphi\notin\A$, and $\beta^* = \beta \cup \{A\}$ if $\varphi\in\A$,
	and define $\ceR$ as a set of pairs $(\pset{\varphi}, \beta^*)$.
	Then $(\pset{\varphi}, \beta^*)$ is still a \ce, in particular if $\varphi\cb\psi, \varphi \in \A$, then by \axMPbox, $\psi\in\A$.
	We define $\Wc$ in the canonical model for $\CCKMP$ as the set of all MP-ces. It is easy to verify that 
	the model 
	satisfies (mp).
	For $\CCKMPID$, we consider the same canonical model construction and show that it satisfies (id).\qed
\end{proof}

%
%

Let us consider \axCEMbox.
This case is particularly interesting, as there is no agreement in the literature about how
a constructive variant of it should, or could, be defined. 
According to Weiss \cite{weiss:2019}, 
intuitionistic analogues of these logics are not possible, 
 since $\varphi\lor\neg\varphi$ becomes derivable
in presence of \axMPbox\ and \axCEMbox.
Against this claim, Ciardelli and Liu \cite{ciardelli:2020} observe that by 
replacing \axCEMbox\ with $(\varphi \cb \psi \lor \chi) \imp (\varphi \cb \psi) \lor (\varphi \cb \chi)$,
which is classically but not intuitionistically equivalent to \axCEMbox,
one can define intuitionistic conditional systems that do not collapse into classical logic.
Our proof-theoretical approach suggests at least one alternative
constructive variant of the logics with \axCEMbox.
The classical sequent rule of \cite{pattinson:2011}
for $\CKbox$ + \axCEMbox\ has the following shape: 
\begin{center}
\begin{small}
\ax{$\{\varphi_1 \sseq \rho_i\}_{i \leq n}$} 
\ax{$\sigma_1, ..., \sigma_n \seq \psi_1, ..., \psi_m$}
\rlab{($n \geq 0, m \geq 1$)}
\binf{$\G, \rho_1 \cb \sigma_1, ..., \rho_n \cb \sigma_n \seq \varphi_1 \cb \psi_1, ..., \varphi_m \cb \psi_m, \D$}
\disp
\end{small}
\end{center}
The single-succedent restriction of this rule simply collapses into \rulecb. 
However, by integrating $\cd$ into the classical calculus we add $k \geq 1$
principal $\cd$-formulas in the premiss of conditional rules, 
which are preserved by their single-succedent restriction.
The resulting rules are displayed in Fig. \ref{fig:rules ext}. Using these rules, one can derive  axiom \axCEMdiam\ (Fig. \ref{fig:ax ext}),
which is the characterising axiom of $\CCKCEM$
(Cf. derivation of \axCEMdiam\ in Appendix \ref{app:der CEM}). The proofs of the following statements are in \Cref{app:sound:cem,app:cutel:cem}. 

\begin{theorem}\label{th:synt compl CCKCEM}
Weakening and contraction are height-preserving admissible, and $\cut$ is admissible in $\SCCKCEM$.
If   
$\G \seq \D$ is derivable in $\SCCKCEM$ if and only if $\finseq(\G \seq \D)$ is derivable in $\CCKCEM$.
\end{theorem}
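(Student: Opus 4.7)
The plan is to extend, with essentially no new conceptual ingredient, the arguments of Theorem~\ref{th:cut elim CCK} and Theorem~\ref{th:synt compl CCKstar} to the two new conditional rules \rulecdcem\ and \rulecbdcem. Height-preserving admissibility of $\lwk$, $\rwk$ and $\ctr$, together with hp-invertibility of the propositional rules, is proved by induction on the height of the derivation. In each inductive step, the cases for \rulecdcem\ and \rulecbdcem\ are handled exactly as for \rulecd\ and \rulecbd: the generic context $\G, \D$ absorbs the added/contracted formula, and the equivalence premisses $\varphi \sseq \rho_i$, $\varphi \sseq \xi_j$, $\varphi \sseq \eta$ are untouched.

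For cut admissibility, I proceed by the standard double induction on $(c,h)$, where $c$ is the weight of the cut formula and $h$ the sum of heights of the two premisses. All propositional and structural reductions are as in Theorem~\ref{th:cut elim CCK}. The essentially new cases are those in which the cut formula is a conditional formula and both premisses end in one of the new $\cd$-rules. A representative reduction is: the left premiss ends in \rulecdcem\ with conclusion $\G, \vec{\rho\cb\sigma}, \vec{\xi\cd\chi}, \varphi\cd\psi \seq \eta\cd\vartheta$, and the right premiss ends in \rulecdcem\ (or \rulecbdcem) with principal $\eta\cd\vartheta$ on the left. The cut conclusion is rederived by one application of the same $\cd$-rule below, taking $\varphi\cd\psi$ as its principal formula; its equivalence premisses $\varphi\sseq\rho'_i$, $\varphi\sseq\xi'_j$, $\varphi\sseq\eta'$ are produced by internal cuts on the smaller formula $\eta$ against $\varphi\sseq\eta$, and its main non-equivalence premiss by a cut on $\vartheta$ between the two non-equivalence premisses. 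All these cuts have weight strictly below that of $\eta\cd\vartheta$, so the outer induction hypothesis applies; the $\cb$-cases are strictly simpler since \rulecb\ produces no $\cd$-formula on the right.

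The point where a little care is needed — and the main obstacle of the proof — is the \emph{combinatorial combination of the $\cd$-contexts} of the two premisses: in the conclusion of the new $\cd$-rule all context formulas $\xi_j\cd\chi_j$ and $\xi'_j\cd\chi'_j$ must have their antecedents equivalent to the new principal antecedent $\varphi$. This is exactly what is provided by the chain $\varphi \sseq \eta \sseq \xi'_j$ combined with the given $\varphi \sseq \xi_j$, realized through internal cuts on $\eta$. Once this is in place, the reduction schema is standard and terminates by the $(c,h)$-induction.

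For the syntactic equivalence, I mirror the proof of Theorem~\ref{th:synt compl CCKstar}. From $\SCCKCEM$ to $\CCKCEM$, I check that the formula interpretation of each rule is derivable in $\CCKCEM$; the only non-routine case is the $\finseq$-image of \rulecdcem\ and \rulecbdcem, which is obtained by combining the standard derivations used for \rulecd\ and \rulecbd\ in Theorem~\ref{th:synt compl CCK} with a suitable instance of \axCEMdiam\ (the additional $\xi_j\cd\chi_j$ premisses are combined with the principal $\varphi\cd\psi$ through \axCEMdiam\ and \RMdiam). Conversely, the axioms \axCMbox, \axCCbox, \axCNbox, \axCNdiam, \axCKdiam\ and the replacement rules \RAbox, \RCbox, \RAdiam, \RCdiam\ of $\CCKCEM$ are derived, respectively simulated, in $\SCCKCEM$ exactly as in the proof of Theorem~\ref{th:synt compl CCK}, and \axCEMdiam\ is derived by a single application of \rulecdcem\ with $k=1$, as outlined in Appendix~\ref{app:der CEM}; modus ponens is simulated by $\cut$, whose admissibility has just been established.
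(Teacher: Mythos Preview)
Your proposal is correct and follows essentially the same approach as the paper: the cut-elimination argument extends that of Theorem~\ref{th:cut elim CCK} by handling the new principal cases for \rulecdcem/\rulecbdcem\ via chained cuts on the antecedent $\eta$ to merge the $\cd$-contexts of both premisses, exactly as in the paper's representative reduction in Appendix~\ref{app:cutel:cem}. The syntactic equivalence is likewise argued just as the paper does, deriving the Hilbert rule corresponding to \rulecdcem\ via \axCEMdiam\ (Appendix~\ref{app:sound:cem}) and deriving \axCEMdiam\ in $\SCCKCEM$ by one application of \rulecdcem\ (Appendix~\ref{app:der CEM}).
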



\section{Conclusions} 
\label{sec:con}
We have investigated the proof theory of intuitionistic conditional logics, by defining a nested sequent calculus for $\IntCK$ and a sequent calculus for $\CCKbox$. The calculi are obtained by imposing a  single-succedent restriction to a nested  and a sequent calculus for classical $\CK$ respectively. Then, inspired by the sequent calculus rules, we have defined logic $\CCK$, a natural extension of $\CCKbox$ with the $\cd$ modality. We have introduced a class of models for the logic, and showed its soundness and completeness with respect to its axiomatization. Finally, we have defined extensions of $\CCK$ with \emph{identity}, \emph{conditional modus ponens} and \emph{conditional excluded middle}, and investigated their proof theory. 

In future work, we would like to employ our proof systems to determine decision procedures for intuitionistic conditional logics, thus giving the first complexity bounds for these logics. Moreover, we are interested in exploring extensions of $\CCK$ and $\IntCK$ with further axioms and frame conditions coming from classical conditional logics. As a first candidate, we plan to cover $\CCK$ with \emph{cautious monotonicity}, taking inspiration from the classical rules proposed in~\cite{schr:2010}. 
Regarding  $\IntCK$, we conjecture that the nested sequent calculi defined in ~\cite{alenda:2012,alenda:2016} for extensions of classical $\CK$ can be adapted to our nested sequent  framework. The rule for  \emph{conditional excluded middle} poses a problem though, as it does not seem to be derivable in the single-succedent setting of $\NIntCK$. Switching to a multi-succedent nested calculus as the one in~\cite{kuznets:2019} might be necessary to cover this case. 
Finally, we are interested in investigating the semantics and the proof theory of stronger conditional logics, such as preferential logics and Lewis' counterfactuals, in a constructive setting. 

 \bibliographystyle{splncs04}
 \bibliography{mybibliography}
%
%
%
%
%

\appendix


\section{Soundness and completeness of $\NIntCK$}

\begin{remark}
	\label{remark:contexts}
	Any \bcont $\GC{\,}$ has the following shape, where every $\Lambda_i$ is an input sequent: 
	$
	\Lambda_1, \NS{\eta_1}{\Lambda_1, \NS{\eta_2}{\Lambda_2 ,  \dots ,\NS{\eta_n}{\Lambda_n, \{ \,\} } \ldots}} 
	$. 
	If the hole in $\GC{\,}$ is filled with a nested sequent $\Sigma$, then $\GC{\Sigma}$ is a nested sequent.  Otherwise, if $\Sigma$ is an input sequent, then $\GC{\Sigma}$ is also an input sequent. 
	
	Any \wcont  $\GC{\,}$ is instead of the form $\GCp{ \LC{\,}, \Pi}$, where $\Pi$ is a nested sequent (thus containing one output formula), and $\GCp{\,}$, $\LC{\,}$ are \bcont. Since $\GC{\,}$  already contains an output formula, the hole in  $\GC{\,}$  can only be filled with an input sequent, yielding a nested sequent. Observe that $\GCp{\,}$, $\LC{\,}$  and $\Pi$ are uniquely determined by the position of $\{ \,\}$ in $\GC{\,}$. 
\end{remark}

\subsection{Soundness of $\NIntCK$}
\label{sec:app:soundness-nes}
We now turn to proving soundness of $\NIntCK$. We shall prove that every rule $\rr$ of the calculus preserves derivability in $\IntCK$, that is: if the formula interpretation of every premiss of $\rr$ is derivable in $\IntCK$, then the conclusion of $\rr$ is derivable in $\IntCK$. 
The proof structure is inspired by~\cite{strassburger:2013}. 
We start by recalling the derivability results that will be needed in the proof. 

\begin{lemma}
	\label{lemma:validities:prop}
	The following hold:
	\begin{enumerate}[$a)$]
		
		\item 
		\label{it:imp:mono:lemma:validities}
		If $\derivesIntCk \varphi \IMP \vartheta$, then $\derivesIntCk(\xi \IMP \varphi ) \IMP (\xi \IMP \vartheta)$; 
		
		\item If $\derivesIntCk (\varphi  \AND \psi) \IMP \vartheta$, then $\derivesIntCk \big((\xi \IMP \varphi ) \AND (\xi \IMP \psi) \big) \IMP (\xi\IMP \vartheta)$; 
		\label{it:imp:multi:lemma:validities}

		\item 
		\label{it:impand:mono:lemma:validities}
		If $\derivesIntCk\vartheta \IMP \varphi $, then $\derivesIntCk (\xi \AND \vartheta) \IMP (\xi \AND \varphi)$;
		
		\item 
		\label{it:impor:multi:validities}
		If $\derivesIntCk\vartheta \IMP ( \varphi \OR \psi)$, then $\derivesIntCk(\xi \AND \vartheta) \IMP \big((\xi \AND \varphi) \OR (\xi \AND \psi)\big)$;

		\item 
		\label{it:imporand:multi:validities}
		If $\derivesIntCk\vartheta \IMP ( \varphi \OR \psi)$, then $\derivesIntCk \big((\varphi \IMP \xi) \AND  (\psi \IMP \xi) \big) \IMP (\vartheta \IMP \xi)$;

		\item 
		\label{it:impand:lemma:validities}
		If $\derivesIntCk\varphi \IMP (\psi \IMP \vartheta)$, then $\derivesIntCk(\varphi  \AND \psi) \IMP   \vartheta$;
		
		\item 
		\label{it:impimp:lemma:validities}
		If $\derivesIntCk \varphi \IMP\psi$, then $\derivesIntCk (\psi \IMP \xi) \IMP (\varphi \IMP \xi)$;
		
		\item 
		\label{it:impred:lemma:validites}
		If $\derivesIntCk  \varphi \IMP (\psi \IMP \theta) $, then $\derivesIntCk  (\psi \IMP \varphi) \IMP (\psi \IMP \theta)  $;
		
		\item  
		\label{it:impanddis:validities}
		If $\derivesIntCk (\varphi \AND \psi) \IMP \theta$, then $\derivesIntCk (\varphi \AND \psi) \IMP (\varphi \AND \theta)$.
		
	\end{enumerate}
\end{lemma}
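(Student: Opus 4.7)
The plan is to observe that each of the nine items in Lemma~\ref{lemma:validities:prop} is a statement of \emph{pure intuitionistic propositional reasoning}: the conditional operators $\cb$ and $\cd$ do not appear, and the schemas involve only $\land$, $\lor$, $\imp$. Since $\IntCK$ is defined as any axiomatisation of $\IPL$ together with $\MP$ plus conditional axioms and rules, and since every derivation in $\IPL$ is also a derivation in $\IntCK$, it suffices to verify each claim in $\IPL$. Throughout, I will freely use the deduction theorem for $\IPL$ and the admissible rule of hypothetical syllogism.

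First, I would dispatch the ``single-step monotonicity'' items. For (a), assuming $\vd \varphi \imp \vartheta$, chaining with the hypothesis $\xi \imp \varphi$ yields $\xi \imp \vartheta$ by transitivity of implication. Items (c), (g) and (i) are analogous: (c) uses monotonicity of $\land$ on the right, (g) is contraposition-by-composition $(\varphi\imp\psi) \imp ((\psi\imp\xi)\imp(\varphi\imp\xi))$ applied to the hypothesis via $\MP$, and (i) follows by combining the projection $\varphi\land\psi \imp \varphi$ with (c)-style monotonicity. Item (f) is precisely currying: under the deduction theorem, assume $\varphi\land\psi$, derive $\varphi$ and $\psi$, apply the hypothesis twice via $\MP$ to obtain $\vartheta$, and discharge. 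Item (h) is obtained analogously: under assumption $\psi \imp \varphi$ and $\psi$, derive $\varphi$, then apply the hypothesis $\varphi \imp (\psi \imp \vartheta)$ twice.

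For the compound items I would use propositional manipulations under the deduction theorem. For (b), assume $(\xi\imp\varphi)\land(\xi\imp\psi)$ and $\xi$; by $\MP$, derive $\varphi$ and $\psi$, hence $\varphi\land\psi$, and apply the hypothesis $(\varphi\land\psi)\imp\vartheta$ to conclude $\vartheta$. For (d), assume $\xi\land\vartheta$; by the hypothesis $\vartheta\imp(\varphi\lor\psi)$ obtain $\varphi\lor\psi$, and then do case analysis: in the $\varphi$ case conclude $\xi\land\varphi$, hence the disjunction; similarly in the $\psi$ case. For (e), assume $(\varphi\imp\xi)\land(\psi\imp\xi)$ together with $\vartheta$; the hypothesis gives $\varphi\lor\psi$, and case analysis followed by $\MP$ produces $\xi$.

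The main ``obstacle'' is purely bookkeeping: making sure every case analysis is discharged and every implication is introduced at the right point. Since no step requires reasoning about $\cb$ or $\cd$, the only $\IntCK$-specific fact used is that it conservatively contains $\IPL$. Once these derivations are written out, the lemma follows immediately, and it can be used without comment in Lemmas~\ref{lemma:str-prop}--\ref{lemma:cut-adm} where propositional juggling inside the formula interpretation $\fint(\cdot)$ of nested sequents is needed.
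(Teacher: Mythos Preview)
Your proposal is correct and takes essentially the same approach as the paper: the paper's proof simply notes that items (a)--(e) are from \cite{strassburger:2013} and that items (f)--(i) ``can be easily verified in intuitionistic propositional logic,'' which is exactly your observation that all nine items are pure $\IPL$ facts and hence hold in $\IntCK$. One small inaccuracy in your closing remark: this lemma is actually used in the \emph{soundness} proof (Lemmas~\ref{lemma:filling:output} and~\ref{lemma:filling:input}), not in the structural/cut-elimination Lemmas~\ref{lemma:str-prop}--\ref{lemma:cut-adm}.
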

\begin{proof}
	The proof of \ref{it:imp:mono:lemma:validities} - \ref{it:imporand:multi:validities} can be found in~\cite{strassburger:2013}; cases \ref{it:impand:lemma:validities} - \ref{it:impanddis:validities} can be easily verified in intuitionistic propositional logic. 
\end{proof}

\begin{lemma}
	\label{lemma:validities:cond}
	The following hold:
	\begin{enumerate}[$i)$]
		\item 
		\label{it:cbmono:lemma:validities}
		If $\derivesIntCk \varphi  \IMP \vartheta$, then $\derivesIntCk (\xi \cb \varphi ) \IMP (\xi\cb \vartheta)$; 	
		
		\item 
		\label{it:cbmulti:lemma:validities}
		If $\derivesIntCk (\varphi  \AND \psi) \IMP \vartheta$, then $\derivesIntCk \big( (\xi \cb \varphi ) \AND (\xi \cb \psi)\big) \IMP (\xi\cb \vartheta)$; 	
		
		\item 
		\label{it:cdmono:lemma:validities}
		If $\derivesIntCk\vartheta \IMP  \varphi $, then $\derivesIntCk (\xi \cd \vartheta) \IMP (\xi \cd \varphi)$;

		\item 
		\label{it:cdmulti:lemma:validities}
		If $\derivesIntCk\vartheta \IMP ( \varphi \OR \psi)$, then $\derivesIntCk (\xi \cd \vartheta) \IMP \big( (\xi \cd \varphi) \OR (\xi \cd \psi)\big)$;

		\item 
		\label{it:t1:lemma:validities}
		$\derivesIntCk \big(\varphi \cb (\psi \IMP \chi) \big) \IMP \big( ( \varphi \cb \psi ) \IMP (\varphi \cb \chi) \big)$; 
		
		\item 
		\label{it:t2:lemma:validities}
		$ \derivesIntCk\big(\varphi \cb (\psi \IMP \chi) \big) \IMP \big( ( \varphi \cd \psi ) \IMP (\varphi \cd \chi) \big)$.
		
	\end{enumerate}	
\end{lemma}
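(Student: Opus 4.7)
The plan is to prove the six items in turn, using the axioms and rules of $\IntCK$ listed in Fig.~\ref{fig:axioms} (replacement rules \RAbox, \RCbox, \RAdiam, \RCdiam, distribution axioms \axCMbox, \axCCbox, \axCMdiam, \axCCdiam, and \axCW), together with intuitionistic propositional reasoning. Items $i)$ and $iii)$ are monotonicity principles for $\cb$ and $\cd$; $ii)$ and $iv)$ are their multi-premiss variants; $v)$ and $vi)$ are axiom-$\axK$-style distribution principles. The overall strategy is to first establish the two monotonicity items $i)$ and $iii)$, and then bootstrap to $ii)$, $iv)$ and finally $v)$, $vi)$.

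For $i)$, assuming $\derivesIntCk \varphi \IMP \vartheta$, propositional reasoning gives $\varphi \IPL (\varphi \AND \vartheta)$, so by \RCbox\ we obtain $(\xi \cb \varphi) \IPL \xi \cb (\varphi \AND \vartheta)$. By \axCMbox, $\xi \cb (\varphi \AND \vartheta) \IMP (\xi \cb \varphi) \AND (\xi \cb \vartheta)$, which yields $(\xi \cb \varphi) \IMP (\xi \cb \vartheta)$ by propositional reasoning. Item $iii)$ is the dual: from $\vartheta \IMP \varphi$ we get $\varphi \IPL (\vartheta \OR \varphi)$ (since $(\vartheta \OR \varphi) \IMP \varphi$ holds intuitionistically under the hypothesis), hence by \RCdiam, $(\xi \cd \varphi) \IPL \xi \cd (\vartheta \OR \varphi)$. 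Together with $(\xi \cd \vartheta) \IMP (\xi \cd \vartheta) \OR (\xi \cd \varphi)$ and \axCMdiam\ (which gives $(\xi \cd \vartheta) \OR (\xi \cd \varphi) \IMP \xi \cd (\vartheta \OR \varphi)$), this chains to $(\xi \cd \vartheta) \IMP (\xi \cd \varphi)$.

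For $ii)$, assume $(\varphi \AND \psi) \IMP \vartheta$. By \axCCbox, $(\xi \cb \varphi) \AND (\xi \cb \psi) \IMP \xi \cb (\varphi \AND \psi)$; applying $i)$ to the hypothesis gives $\xi \cb (\varphi \AND \psi) \IMP \xi \cb \vartheta$, and chaining yields the conclusion. For $iv)$, assume $\vartheta \IMP (\varphi \OR \psi)$. By $iii)$, $(\xi \cd \vartheta) \IMP \xi \cd (\varphi \OR \psi)$, and by \axCCdiam, $\xi \cd (\varphi \OR \psi) \IMP (\xi \cd \varphi) \OR (\xi \cd \psi)$; the conclusion follows by chaining.

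For $v)$, we start from \axCCbox: $(\varphi \cb (\psi \IMP \chi)) \AND (\varphi \cb \psi) \IMP \varphi \cb ((\psi \IMP \chi) \AND \psi)$. Since $(\psi \IMP \chi) \AND \psi \IMP \chi$ is an intuitionistic validity, item $i)$ yields $\varphi \cb ((\psi \IMP \chi) \AND \psi) \IMP \varphi \cb \chi$, so $(\varphi \cb (\psi \IMP \chi)) \AND (\varphi \cb \psi) \IMP \varphi \cb \chi$, which is intuitionistically equivalent to the desired formula. Item $vi)$ uses the analogous scheme with \axCW\ in place of \axCCbox\ and $iii)$ in place of $i)$: from $(\varphi \cd \psi) \AND (\varphi \cb (\psi \IMP \chi)) \IMP \varphi \cd (\psi \AND (\psi \IMP \chi))$ and $(\psi \AND (\psi \IMP \chi)) \IMP \chi$, item $iii)$ gives $\varphi \cd (\psi \AND (\psi \IMP \chi)) \IMP \varphi \cd \chi$, and propositional reasoning reshapes this into $(\varphi \cb (\psi \IMP \chi)) \IMP ((\varphi \cd \psi) \IMP (\varphi \cd \chi))$. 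No step is particularly subtle; the only mild care needed is to respect the intuitionistic nature of the propositional reasoning (in particular, the intuitionistic equivalence $\varphi \IPL (\vartheta \OR \varphi)$ used in $iii)$ genuinely requires $\vartheta \IMP \varphi$, and no classical manipulation is invoked anywhere).
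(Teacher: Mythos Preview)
Your proof is correct and follows essentially the same route as the paper, which derives the monotonicity rules $i)$ and $iii)$ from the replacement rules together with \axCMbox/\axCMdiam, then obtains $ii)$ from $i)$ plus \axCCbox\ and $iv)$ from $iii)$ plus \axCCdiam; the paper defers $i), iii), v), vi)$ to Lemma~4 of \cite{olk:2024}, whereas you spell them out explicitly (using \axCCbox\ for $v)$ and \axCW\ for $vi)$), but the underlying arguments coincide. One cosmetic point: you write $\varphi \IPL (\varphi \AND \vartheta)$ etc., but in this paper \texttt{\textbackslash IPL} is the logic name, not a biimplication symbol---use \texttt{\textbackslash biimp} or \texttt{\textbackslash IMPP} instead.
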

\begin{proof}
	Most of the proofs can be found in Lemma 4 of~\cite{olk:2024}. 
	Specifically, \ref{it:cbmono:lemma:validities}  is (\hilbertaxiomstyle{RM}$\BOX$), 
	\ref{it:cdmulti:lemma:validities} is (\hilbertaxiomstyle{RM}$\DIA$), 
	\ref{it:t2:lemma:validities} is (\hilbertaxiomstyle{T1}) 
	and 
	\ref{it:t1:lemma:validities} is (\hilbertaxiomstyle{T2}) from  Lemma 4 of~\cite{olk:2024}. 
	Then, \ref{it:cbmulti:lemma:validities} follows from  \ref{it:cbmono:lemma:validities} and axiom \axCMbox, and 
	\ref{it:cdmulti:lemma:validities} is derived from  \ref{it:cdmono:lemma:validities} and axiom \axCCdiam. 
\end{proof}

The following two lemmas are needed to take output and input contexts into account. In both lemmas, the first item takes care of one-premiss 
rules, while the second item is needed for the branching propositional rules. 

\begin{lemma}
	\label{lemma:filling:output}
	Let $\Delta_1, \Delta_2, \Sigma$ be nested sequents and $\GC{\,}$ be a \bcont. Then:
	\begin{enumerate}
		\item 
		\label{itmono:lemma:filling:output}
		If $\derivesIntCk \fin{\Delta_1}  \IMP \fin{\Sigma}$, then $\derivesIntCk \fin{\GC{\Delta_1}} \IMP \fin{\GC{\Sigma}}$. 
		\item 
		\label{itmulti:lemma:filling:output}
		If $\derivesIntCk \big(  \fin{\Delta_1} \AND \fin{\Delta_2}\big) \IMP \fin{\Sigma}$, then $\derivesIntCk\big( \fin{\GC{\Delta_1}} \AND \fin{\GC{\Delta_2}}\big) \IMP \fin{\GC{\Sigma}}$.  
	\end{enumerate}
\end{lemma}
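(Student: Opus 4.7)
The plan is to prove both parts simultaneously by induction on the structure of the bcont $\GC{\,}$, using the validity lemmas~\ref{lemma:validities:prop} and~\ref{lemma:validities:cond}. By the inductive definition of context, restricted so that no output formula appears outside the hole, a bcont takes one of three shapes: the empty hole $\{\,\}$; an input prefix $\Lambda, \GCp{\,}$ for some input sequent $\Lambda$ and smaller bcont $\GCp{\,}$; or a nested layer $\NS{\eta}{\GCp{\,}}$ for some formula $\eta$ and smaller bcont $\GCp{\,}$. The base case is immediate, since $\GC{\Delta_i} = \Delta_i$ and $\GC{\Sigma} = \Sigma$.

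For the step $\GC{\,} = \NS{\eta}{\GCp{\,}}$, since $\Delta_1, \Delta_2, \Sigma$ are nested sequents and $\GCp{\,}$ is a bcont, the filled sequents $\GCp{\Delta_i}$ and $\GCp{\Sigma}$ remain nested. The outermost interpretation is therefore read via the $\cb$-clause, giving $\fin{\GC{\Delta_i}} \equiv \eta \cb \fin{\GCp{\Delta_i}}$ and similarly for $\Sigma$. The induction hypothesis provides $\fin{\GCp{\Delta_1}} \IMP \fin{\GCp{\Sigma}}$ for part~1 and $\fin{\GCp{\Delta_1}} \AND \fin{\GCp{\Delta_2}} \IMP \fin{\GCp{\Sigma}}$ for part~2; applying Lemma~\ref{lemma:validities:cond}.\ref{it:cbmono:lemma:validities} and Lemma~\ref{lemma:validities:cond}.\ref{it:cbmulti:lemma:validities} respectively closes each case.

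For the step $\GC{\,} = \Lambda, \GCp{\,}$, I would first verify the auxiliary equivalence $\fin{\Lambda, \Gamma} \equiv \fin{\Lambda} \IMP \fin{\Gamma}$, derivable in $\IPL$ by a short induction on $\Lambda$ from the clauses defining $\fin{\cdot}$, valid whenever $\Gamma$ is a nested sequent. Applying this identity to $\GCp{\Delta_i}$ and $\GCp{\Sigma}$ (nested by the previous observation), the goal reduces to an implication between $\IMP$-formulas sharing the antecedent $\fin{\Lambda}$. The induction hypothesis combined with Lemma~\ref{lemma:validities:prop}.\ref{it:imp:mono:lemma:validities} for part~1 and Lemma~\ref{lemma:validities:prop}.\ref{it:imp:multi:lemma:validities} for part~2 then yields the conclusion.

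The main obstacle is bookkeeping around the formula interpretation: nested sequents are multisets, so one must check that $\fin{\cdot}$ is well-behaved under re-parsing, which the auxiliary identity above captures. It is also essential that $\Delta_1, \Delta_2, \Sigma$ are assumed to be \emph{nested} rather than input sequents, as this guarantees that the outer indexed component in the nested-layer step is always read via $\cb$ and not $\cd$: the $\cd$-analogue of Lemma~\ref{lemma:validities:cond}.\ref{it:cbmulti:lemma:validities} required for part~2 is not in general derivable in $\IntCK$.
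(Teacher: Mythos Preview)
Your proposal is correct and follows essentially the same route as the paper: induction on the structure of the \bcont with the same three cases, invoking Lemma~\ref{lemma:validities:prop}.\ref{it:imp:mono:lemma:validities}/\ref{it:imp:multi:lemma:validities} for the $\Lambda,\GCp{\,}$ step and Lemma~\ref{lemma:validities:cond}.\ref{it:cbmono:lemma:validities}/\ref{it:cbmulti:lemma:validities} for the $\NS{\eta}{\GCp{\,}}$ step. Your explicit auxiliary equivalence $\fin{\Lambda,\Gamma}\equiv\fin{\Lambda}\IMP\fin{\Gamma}$ is a welcome piece of bookkeeping that the paper leaves implicit.
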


\begin{proof}
	By induction on the structure of $\GC{\,}$. If  $\GC{\,}$ is $\{\,\}$, both \ref{itmono:lemma:filling:output} and \ref{itmulti:lemma:filling:output} immediately follow. 
	Suppose $\GC{\,}$ is of the shape $\Lambda, \GCp{\,}$. To prove \ref{itmono:lemma:filling:output}, suppose that $\derivesIntCk\fin{\Delta_1}  \IMP \fin{\Sigma}$. By inductive hypothesis,  $\derivesIntCk\fin{\GCp{\Delta_1}}  \IMP \fin{\GCp{\Sigma}}$. 
	Then, by \ref{it:imp:mono:lemma:validities}, we have  
	$\derivesIntCk (\fin{\Lambda} \IMP  \fin{\GCp{\Delta_1}} ) \IMP (\fin{\Lambda}  \IMP \fin{\GCp{\Sigma}} )$. 
	Using \ref{it:impand:lemma:validities}, we obtain 
	that  $\derivesIntCk (\fin{\Lambda} \AND  \fin{\GCp{\Delta_1}} ) \IMP (\fin{\Lambda}  \AND\fin{\GCp{\Sigma}} )$, and we are done. Statement \ref{itmulti:lemma:filling:output} is proved similarly, using \ref{it:imp:multi:lemma:validities}. 
	Next, suppose $\GC{\,}$ is of the shape $\NS{\eta}{ \GCp{\,}}$.  To prove \ref{itmono:lemma:filling:output}, suppose that $\derivesIntCk \fin{\Delta_1}  \IMP \fin{\Sigma}$. By inductive hypothesis,  $\derivesIntCk \fin{\GCp{\Delta_1}}  \IMP \fin{\GCp{\Sigma}}$. 
	By \ref{it:cbmono:lemma:validities}, we have that 
	$\derivesIntCk \eta \cb \fin{\GCp{\Delta_1}}  \IMP \eta \cb \fin{\GCp{\Sigma}}$, and we are done. 
	Case \ref{itmono:lemma:filling:output} is proved similarly, using \cref{it:cdmulti:lemma:validities}. 
\end{proof}

\begin{lemma}
	\label{lemma:filling:input}
	Let $\Delta_1, \Delta_2, \Sigma$ be input sequents and $\GC{\,}$ be a \wcont. Then:
	\begin{enumerate}
		\item 
		\label{itmono:lemma:filling:input}
		If $\derivesIntCk \fin{\Sigma} \IMP \fin{\Delta_1} $, then $ \derivesIntCk\fin{\GC{\Delta_1}} \IMP \fin{\GC{\Sigma}}$. 
		\item 
		\label{itmulti:lemma:filling:input}
		If $\derivesIntCk \fin{\Sigma} \IMP(\fin{\Delta_1} \OR \fin{\Delta_2}) $, then $\derivesIntCk \big(\fin{\GC{\Delta_1}} \AND \fin{\GC{\Delta_2}} \big) \IMP \fin{\GC{\Sigma}}  $.  
	\end{enumerate}
\end{lemma}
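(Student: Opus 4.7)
The plan is to proceed by structural induction on the \wcont $\GC{\,}$, paralleling the proof of Lemma~\ref{lemma:filling:output} but with case analysis reflecting the position of the unique $\circ$-formula. By Remark~\ref{remark:contexts}, a \wcont has one of three shapes, and no bare $\{\,\}$ base case arises: (a) $\GC{\,} = \Lambda, \GCp{\,}$ with $\Lambda$ an input sequent and $\GCp{\,}$ a \wcont; (b) $\GC{\,} = \Delta, \GCp{\,}$ with $\Delta$ a nested sequent carrying the output and $\GCp{\,}$ a \bcont; (c) $\GC{\,} = \NS{\eta}{\GCp{\,}}$ with $\GCp{\,}$ a \wcont.

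Cases~(a) and~(c) are the easy ones. In case~(a) I would apply the inductive hypothesis to $\GCp{\,}$ and propagate through $\fin{\Lambda} \IMP (-)$ using Lemma~\ref{lemma:validities:prop}(\ref{it:imp:mono:lemma:validities}) for part~1 and Lemma~\ref{lemma:validities:prop}(\ref{it:imp:multi:lemma:validities}) for part~2, exploiting the intuitionistic equivalence between $\fin{\Lambda, \GCp{\Delta_i}}$ and $\fin{\Lambda} \IMP \fin{\GCp{\Delta_i}}$ (valid because $\GCp{\Delta_i}$ is a nested sequent). In case~(c) the outer wrap is $\eta \cb (-)$, and Lemma~\ref{lemma:validities:cond}(\ref{it:cbmono:lemma:validities}) and~(\ref{it:cbmulti:lemma:validities}) close the two parts respectively.

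Case~(b) is the crux: the hole now lies in the antecedent of the implication built by $\Delta$ around its output formula. A short sub-induction on $\Delta$ establishes that $\fin{\Delta, \GCp{\Delta_i}}$ is intuitionistically equivalent to $\fin{\GCp{\Delta_i}} \IMP \fin{\Delta}$, regardless of whether $\Delta$'s output lies at the top level or inside brackets. For part~1, by essentially the same argument as Lemma~\ref{lemma:filling:output}(\ref{itmono:lemma:filling:output}) (now applied to input sequents), the \bcont $\GCp{\,}$ propagates the hypothesis $\fin{\Sigma} \IMP \fin{\Delta_1}$ to $\fin{\GCp{\Sigma}} \IMP \fin{\GCp{\Delta_1}}$; contravariance of implication (Lemma~\ref{lemma:validities:prop}(\ref{it:impimp:lemma:validities})) with the fixed consequent $\fin{\Delta}$ then completes part~1.

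The main obstacle lies in case~(b) part~2, which demands $\fin{\GCp{\Sigma}} \IMP \fin{\GCp{\Delta_1}} \OR \fin{\GCp{\Delta_2}}$, a disjunction-preservation statement through $\GCp{\,}$ that is not directly supplied by Lemma~\ref{lemma:filling:output}. I plan to establish it by an auxiliary induction on $\GCp{\,}$: because $\GCp{\,}$ is a \bcont, every operator built around the hole is either $\AND$ (at the same nesting level) or $\cd$ (for traversed brackets), and both distribute disjunction outwards in $\IntCK$---$\AND$ by $\IPL$, and $\cd$ by axiom \axCCdiam\ (Lemma~\ref{lemma:validities:cond}(\ref{it:cdmulti:lemma:validities})). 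With this disjunction-preservation in hand, Lemma~\ref{lemma:validities:prop}(\ref{it:imporand:multi:validities}) closes part~2. The most delicate bookkeeping thus concentrates in this auxiliary induction and in the formula-interpretation equivalence for case~(b); all other steps are routine propagations of monotonicity.
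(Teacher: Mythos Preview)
Your approach is correct and covers the same ground as the paper's, but is organised differently. The paper exploits Remark~\ref{remark:contexts} directly: it writes the \wcont as $\GCp{\LC{\,}, \Pi}$ where $\GCp{\,}$ and $\LC{\,}$ are each a \bcont and $\Pi$ is the nested sequent carrying the output, runs the auxiliary induction on the inner $\LC{\,}$ (exactly your case~(b) sub-induction, using the $\AND$- and $\cd$-monotonicity for part~1 and their disjunction-distributivity for part~2), applies contravariance against $\Pi$, and then invokes Lemma~\ref{lemma:filling:output} \emph{once} for the outer $\GCp{\,}$. Your direct structural induction instead peels off those outer layers one at a time in cases~(a) and~(c)---in effect re-deriving inline the $\IMP$/$\cb$ propagation that Lemma~\ref{lemma:filling:output} already supplies---and handles the level where the output lives in case~(b). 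Both routes work; the paper's is more modular because it reuses Lemma~\ref{lemma:filling:output}, while yours keeps the whole argument self-contained in a single induction. One small quibble: Remark~\ref{remark:contexts} does not give your three-case split; that split comes straight from the inductive grammar of contexts, and your case~(b) equivalence $\fin{\Delta, \GCp{\Delta_i}} \leftrightarrow (\fin{\GCp{\Delta_i}} \IMP \fin{\Delta})$ follows immediately by currying rather than needing a sub-induction on $\Delta$.
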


\begin{proof}
	By \Cref{remark:contexts}, $\GC{\,}$ can be written as $\GCp{\LC{\,}, \Pi}$, where  $\GCp{\,}$, $\LC{\,}$ are \bcont and $\Pi$ is a nested sequent. 
	To prove \ref{itmono:lemma:filling:input}, we shall first show, by induction on the structure of $\LC{\,}$, that if $\derivesIntCk\fin{\Sigma} \IMP \fin{\Delta_1}$, then $\derivesIntCk\fin{\LC{\Sigma}} \IMP { \fin{\LC{\Delta_1}}}$. Assume that $\derivesIntCk\fin{\Sigma} \IMP \fin{\Delta_1}$. 
	If $\LC{\,} = \{ \,\}$, then the result immediately follows. 
	Next, take $\LC{\,} = \Lambda_1, \Lambda_2\{\,\}$. By inductive hypothesis, we have that $\derivesIntCk \fin{\Lambda_2\{\Sigma\}} \IMP \fin{\Lambda_2\{\Delta_1\}} $. Using  \ref{it:impand:mono:lemma:validities}, we have that $\derivesIntCk (\fin{\Lambda_1} \AND \fin{\Lambda_2\{\Sigma\}}) \IMP (\fin{\Lambda_1} ) \AND\fin{\Lambda_2\{\Delta_1\}} $, whence $\fin{ \Lambda_1 \AND \Lambda_2\{\Sigma\}} \IMP \fin{\Lambda_1 \AND \Lambda_2\{\Delta_1\}}  $. 
	Finally, suppose  $\LC{\,} = \NS{\eta}{\Lambda_1\{\,\}}$. By inductive hypothesis,  $\derivesIntCk \fin{ \Lambda_1\{\Sigma\}} \IMP \fin{\Lambda_1 \{\Delta_1\}} $. Using \ref{it:cdmono:lemma:validities}, we have that $\derivesIntCk  \eta \cd \fin{ \Lambda_1, \{\Sigma\}} \IMP \eta \cd \fin{ \Lambda_1, \{\Delta_1\}}$, i.e., $\derivesIntCk  \fin { \NS{\eta}{\Lambda_1\{\Sigma\}}}  \IMP \fin{\NS{\eta}{\Lambda_1\{\Delta_1\}}}$. This concludes the subproof. 
	
	Next, suppose that $\derivesIntCk\fin{\Sigma} \IMP \fin{\Delta_1}$. By what we have just proved, we have that  $\derivesIntCk\fin{\LC{\Sigma}} \IMP { \fin{\LC{\Delta_1}}}$.  Applying  \ref{it:impimp:lemma:validities}, we obtain $\derivesIntCk  (\fin{\LC{\Delta_1}}  \IMP \fin{\Pi})\IMP ( \fin{\LC {\Sigma}} \IMP \fin{\Pi})$, whence $\derivesIntCk  \fin{\LC{\Delta_1}  ,\Pi}\IMP \fin{\LC {\Sigma}, \Pi}$. Observe that both $\LC{\Delta_1}  ,\Pi$ and $\LC {\Sigma}, \Pi$ are nested sequents, and that $\GCp{\,}$ is a \bcont. 
	Thus, we can apply \ref{itmono:lemma:filling:output} of \Cref{lemma:filling:output} and conclude that $\derivesIntCk  \fin{\GCp{\LC{\Delta_1}  ,\Pi}}\IMP \fin{\GCp{\LC {\Sigma}, \Pi}}$, thus concluding the proof of \ref{itmono:lemma:filling:input}. 
	
	The proof of \ref{itmulti:lemma:filling:input} proceeds similarly. 	We first show, by induction on the structure of $\LC{\,}$, that if $\derivesIntCk\fin{\Sigma} \IMP (\fin{\Delta_1} \OR \fin{\Delta_2})$, then $\derivesIntCk\fin{\LC{\Sigma}} \IMP ( \fin{\LC{\Delta_1}} \OR \fin{\LC{\Delta_2}})$. Assume that $\derivesIntCk\fin{\Sigma} \IMP (\fin{\Delta_1} \OR \fin{\Delta_2})$. 
	If $\LC{\,} = \{ \,\}$, then the result immediately follows. 
	Next, take $\LC{\,} = \Lambda_1, \Lambda_2\{\,\}$. By inductive hypothesis, we have that $\derivesIntCk \fin{\Lambda_2\{\Sigma\}} \IMP (\fin{\Lambda_2\{\Delta_1\}} \OR \fin{\Lambda_2\{\Delta_2\}})$. 
	Using  \ref{it:impor:multi:validities}, we have that $\derivesIntCk (\fin{\Lambda_1} \AND \fin{\Lambda_2\{\Sigma\}}) \IMP \big( (\fin{\Lambda_1} ) \AND\fin{\Lambda_2\{\Delta_1\}} \OR (\fin{\Lambda_1} ) \AND\fin{\Lambda_2\{\Delta_2\}}  \big) $, whence $\fin{ \Lambda_1 \AND \Lambda_2\{\Sigma\}} \IMP (\fin{\Lambda_1 \AND \Lambda_2\{\Delta_1\}}\OR \fin{\Lambda_1 \AND \Lambda_2\{\Delta_1\}})  $. 
	Finally, suppose  $\LC{\,} = \NS{\eta}{\Lambda_1\{\,\}}$. By inductive hypothesis,  $\derivesIntCk \fin{ \Lambda_1\{\Sigma\}} \IMP (\fin{\Lambda_1 \{\Delta_1\}} \OR \fin{\Lambda_1 \{\Delta_2\}})$. 
	Using \ref{it:cdmulti:lemma:validities}, we have that $\derivesIntCk  \eta \cd \fin{ \Lambda_1, \{\Sigma\}} \IMP (\eta \cd \fin{ \Lambda_1, \{\Delta_1\}} \OR \eta \cd \fin{ \Lambda_1, \{\Delta_2\}} )$, i.e., $\derivesIntCk  \fin { \NS{\eta}{\Lambda_1\{\Sigma\}}}  \IMP (\fin{\NS{\eta}{\Lambda_1\{\Delta_1\}}} \OR \fin{\NS{\eta}{\Lambda_1\{\Delta_1\}}})$. This concludes the subproof. 
	
	Next, suppose that $\derivesIntCk\fin{\Sigma} \IMP (\fin{\Delta_1} \OR \fin{\Delta_2})$. By what we have just proved, we have that  $\derivesIntCk\fin{\LC{\Sigma}} \IMP ( \fin{\LC{\Delta_1}} \OR \fin{\LC{\Delta_2}})$.  
	Applying  \ref{it:imporand:multi:validities}, we obtain $\derivesIntCk  \big((\fin{\LC{\Delta_1}}  \IMP \fin{\Pi}) \AND (\fin{\LC{\Delta_2}}  \IMP \fin{\Pi}) \big) \IMP ( \fin{\LC {\Sigma}} \IMP \fin{\Pi})$, 
	whence $\derivesIntCk  \big( \fin{\LC{\Delta_1}  ,\Pi} \AND  \fin{\LC{\Delta_2}  ,\Pi}  \big)\IMP \fin{\LC {\Sigma}, \Pi}$. 
	We have that $\LC{\Delta_1} ,\Pi$ and  $\LC{\Delta_2} ,\Pi$ and $\LC {\Sigma}, \Pi$ are nested sequents, and that $\GCp{\,}$ is a \bcont. 
	Thus, we can apply \ref{itmulti:lemma:filling:output} of \Cref{lemma:filling:output} and conclude that $\derivesIntCk  \big(\fin{\GCp{\LC{\Delta_1}  ,\Pi}}  \AND \fin{\GCp{\LC{\Delta_2}  ,\Pi}} \big)\IMP \fin{\GCp{\LC {\Sigma}, \Pi}}$, thus concluding the proof of \ref{itmulti:lemma:filling:input}. 
\end{proof}


\begin{lemma}
	\label{lemma:sound:rules:nes}
	For any rule of $\NIntCK$ having premisses $\Gamma_k$, for $k \in \{\emptyset, 1,2,3\}$ and conclusion $\Gamma$, it holds that if $\derivesIntCk \fin{\Gamma_k}$, for every $k$, then $\derivesIntCk \fin{\Gamma}$. 
\end{lemma}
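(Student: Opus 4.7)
The plan is to proceed by case analysis on the last rule applied, showing for each rule with premisses $\Gamma_1, \ldots, \Gamma_k$ and conclusion $\Gamma$ that $\derivesIntCk \fin{\Gamma_i}$ for every $i$ implies $\derivesIntCk \fin{\Gamma}$. The workhorses are the two filling lemmas: they lift a local implication between the formula interpretations of what sits in the hole to an implication between the interpretations of the full (nested) sequents. Since a rule whose principal formula is input ($\bullet$) forces the surrounding context to be a \wcont (the unique output lies inside the context) and dually for output ($\circ$) principal formulas, exactly one filling lemma is applicable in each case, with the direction of the lifted implication determined by the type of context.

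The initial rules $\initv$ and $\brl{\bot}$ have no premisses and require unconditional derivability of the conclusion. For $\initv$ one first establishes by structural induction on $\GC{\,}$ that $\derivesIntCk \fin{\GC{\wh{\top}}}$ for every \bcont (using \RCbox\ and the axiom $\axCNbox$ to promote derivability through $\cb$-nested components), and then lifts the trivial $\vd \fin{\wh{\top}} \IMP \fin{\bl{p}, \wh{p}}$ through \Cref{lemma:filling:output}. The rule $\brl{\bot}$ is handled symmetrically, using the propagation of $\bot$ across conjunctions and, for $\cd$-nested components appearing inside the $\circ$-context, the axiom $\axCNdiam$. For the propositional rules, one exhibits a local implication that is trivially derivable in $\IPL$ and lifts it through the appropriate filling lemma: e.g., $\brl{\AND}$ reduces to $\fin{\bl{\varphi \AND \psi}} \IMP \fin{\bl{\varphi}, \bl{\psi}}$ and is lifted via \Cref{lemma:filling:input}, while the branching rules $\wrl{\AND}$, $\brl{\OR}$ and $\brl{\IMP}$ use the multi versions of the filling lemmas. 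For the simple conditional rules $\wh{\cb}$ and $\bl{\cd}$, the required local facts $\vd \varphi \cb (\top \IMP \psi) \IMP (\varphi \cb \psi)$ and $\vd (\varphi \cd \psi) \IMP \varphi \cd (\top \AND \psi)$ follow from \RCbox\ and \RCdiam\ (items (i) and (iii) of \Cref{lemma:validities:cond}) and are lifted through \Cref{lemma:filling:output} and \Cref{lemma:filling:input} respectively.

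The main obstacle lies in the three-premiss rules $\bl{\cb}$ and $\wh{\cd}$, where the equivalence premisses must be routed through an index inside the context. For $\bl{\cb}$, the premisses $\bl{\varphi}, \wh{\eta}$ and $\bl{\eta}, \wh{\varphi}$ yield $\vd \varphi \IMP \eta$ and $\vd \eta \IMP \varphi$; applying \RAbox\ gives $\vd (\varphi \cb \psi) \IMP (\eta \cb \psi)$, and together with item (v) of \Cref{lemma:validities:cond} and the identity $\fin{\bl{\psi}, \Delta} \equiv \psi \IMP \fin{\Delta}$ this yields the local implication
\[
\bigl((\varphi \cb \psi) \AND (\eta \cb (\psi \IMP \fin{\Delta}))\bigr) \IMP \bigl((\varphi \cb \psi) \AND (\eta \cb \fin{\Delta})\bigr),
\]
which \Cref{lemma:filling:output} lifts to the required implication from $\fin{\Gamma_3}$ to $\fin{\Gamma}$. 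The dual rule $\wh{\cd}$ is handled symmetrically using \RAdiam\ and item (vi) of \Cref{lemma:validities:cond} (the axiom $\axCKdiam$), with \Cref{lemma:filling:input} performing the lift. The technical challenge is the bookkeeping tracking the position of $\bl{\varphi \cb \psi}$ in the context and the fact that the $\eta$-indexed nested component changes on both sides of the lifted implication; but once the local implication is properly identified, this reduces to a routine unfolding of $\fint$.
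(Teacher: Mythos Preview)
Your overall strategy matches the paper's: case analysis on the last rule, reducing each case to a local implication that is then lifted through the appropriate filling lemma. The treatment of the initial rules, the propositional rules, and the simple conditional rules $\wh{\cb}$, $\bl{\cd}$ is essentially correct.

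The gap is in your handling of $\bl{\cb}$. Your heuristic ``input principal formula $\Rightarrow$ \wcont'' fails here, because the hole content is not just $\bl{\varphi\cb\psi}$ but $\bl{\varphi\cb\psi},\NS{\eta}{\Delta}$, and the unique output formula may sit either inside $\Delta$ or inside $\GC{\,}$. These two cases produce \emph{different} formula interpretations of $\NS{\eta}{\Delta}$: if the output is in $\Delta$ the component is read via $\cb$ (nested sequent), if the output is in $\GC{\,}$ it is read via $\cd$ (input sequent). Your displayed local implication conflates the two: you write a conjunction $(\varphi\cb\psi)\AND(\eta\cb\ldots)$, but conjunction only arises when $\Delta$ is an input sequent, in which case the nested component contributes $\eta\cd\fin{\Delta}$, not $\eta\cb\fin{\Delta}$. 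Conversely, when the output is in $\Delta$ the hole content is a nested sequent and its interpretation is $(\varphi\cb\psi)\IMP(\eta\cb\ldots)$, not a conjunction. More importantly, the omitted case (output in $\GC{\,}$) cannot be discharged with item (v) of \Cref{lemma:validities:cond}: it needs $((\varphi\cb\psi)\AND(\eta\cd\fin{\Delta}))\IMP((\varphi\cb\psi)\AND(\eta\cd(\psi\AND\fin{\Delta})))$, which requires axiom \axCW\ (after converting $\varphi\cb\psi$ to $\eta\cb\psi$ via \RAbox). This is exactly the case split the paper performs, and the use of \axCW\ is not optional. A smaller slip: for $\wh{\cd}$ the output is always in the hole (either $\wh{\varphi\cd\psi}$ or $\wh{\psi}$), so the context is a \bcont\ and the lift is via \Cref{lemma:filling:output}, not \Cref{lemma:filling:input}.
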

\begin{proof}
	The proof that the initial sequents are derivable, i.e., that $\derivesIntCk \GC{\bl{p}, \wh{p}}$ and $\vdash \fin{\GC{\bl{\bot}}}$, is similar to the one in~\cite{strassburger:2013}. 
	
	For the one- and two-premisses rules, it suffices to show that $\big(\bigAND_{k}\fin{\Gamma_k}\big) \IMP \fin{\Gamma}$. This is an easy task because, thanks to \Cref{lemma:filling:output,lemma:filling:input}, we can `ignore' the contexts. More details on the propositional and intuitionistic rules can be found in~\cite{strassburger:2013}. 
	
	By means of example, let us consider rule $\bl{\cd}$. We need to show that $\fin{\GC{\NS{\varphi}{\bl{\psi}}}} \IMP \fin{\GC{\bl{\varphi \cd\psi}}}$. Thanks to \Cref{lemma:filling:input}, it suffices to show that $\fin{{\NS{\varphi}{\bl{\psi}}}} \IMP \fin{\bl{\varphi \cd\psi}}$, that is, $(\varphi \cd \psi) \IMP (\varphi \IMP \psi) $, which is an instance of an axiom. 
	
	Next, let us consider the three-premisses rules. 
	For rule $\wh\cd$, by hypothesis we have that $\derivesIntCk\varphi \IMP\eta$ and $\derivesIntCk \eta \IMP \varphi$. 
	We need to show that the   formula 
	$\big(\eta \cb (\fin{\Delta} \IMP \psi) \big) \IMP \big(( \eta \cd \fin{\Delta} ) \IMP (\varphi \cd \psi) \big)$
	is derivable in $\IntCK$. Since $\derivesIntCk \varphi \IMPP \eta$, this is equivalent to $\big(\varphi \cb (\fin{\Delta} \IMP \psi) \big) \IMP \big(( \varphi \cd \fin{\Delta} ) \IMP (\varphi \cd \psi) \big)$, which is an instance of \ref{it:t2:lemma:validities}, and thus derivable.  
	We conclude the proof applying \Cref{lemma:filling:output}. 
	
	For rule $\bl\cb$, by hypothesis we have that $\derivesIntCk\varphi \IMP\eta$ and $\derivesIntCk \eta \IMP \varphi$. We make a further case distinction, as the formula interpretation changes depending on the position of the output formula. 
	
	If the output formula is in $\GC{\,}$, then we need to show that the  following formula  is derivable in $\IntCK$: 
	$\big( (\varphi \cb \psi) \AND (\eta \cd \fin{\Delta})\big) \IMP \big( \varphi \cb \psi \AND (\eta \cd (\psi \AND \fin{\Delta}))\big) $. 
	Since $\derivesIntCk \varphi \IMPP \eta$, this formula is equivalent to the following: 
	\begin{equation}
		\label{eq:whiteG}
		\big( (\varphi \cb \psi) \AND (\varphi \cd \fin{\Delta})\big)  \IMP \big( \varphi \cb \psi \AND (\varphi \cd (\psi \AND \fin{\Delta})) \big) 	
	\end{equation}
	Consider the following formula, which is an instance of axiom \axCW\xspace and thus derivable: $(\varphi \cb \psi \AND \varphi \cd \fin{\Delta} ) \IMP \varphi \cd (\psi \AND \fin{\Delta})$. By applying \ref{it:impanddis:validities}, we obtain that \cref{eq:whiteG} is derivable. Then, from \Cref{lemma:filling:input}, we have that $\derivesIntCk \fin{\GC{\varphi \cb \psi \AND (\varphi \cd (\psi \AND \fin{\Delta})) }} \IMP \fin{\GC{ (\varphi \cb \psi) \AND (\eta \cd \fin{\Delta})}}$. Since by hypothesis the formula interpretation of the premiss of $\bl{\cb}$ is derivable, a final application of \emph{modus ponens} using the latter derivable formula yields derivability of the formula interpretation of the conclusion of rule $\bl{\cb}$.

	Since $\derivesIntCk \psi \AND \fin{\Delta} \IMP \fin{\Delta}$, this latter formula can be simplified to the formula $
	\big( \varphi \cb \psi \AND (\eta \cd (\fin{\Delta}) \big) \IMP \big( \varphi \cb \psi \AND (\eta \cd \fin{\Delta})\big) $, which is an instance of the identity axiom $\chi \IMP \chi$ and thus derivable. We  conclude the proof by \Cref{lemma:filling:input}. 
	
	Next, suppose that the output formula is in $\Delta$. Then we need to check derivability of the formula: 
	$
	\big( (\varphi \cb \psi) \IMP (\eta \cb (\psi \IMP \fin{\Delta})) \big) \IMP \big( (\varphi \cb \psi) \IMP (\eta \cb \fin{\Delta})  \big)
	$. 
	Since $\derivesIntCk \eta \IMPP \varphi$, this is equivalent to checking derivability of the formula:
	\begin{equation}
		\label{eq:whiteD}
		\big( (\varphi \cb \psi) \IMP (\varphi \cb (\psi \IMP \fin{\Delta})) \big) \IMP \big( (\varphi \cb \psi) \IMP (\varphi \cb \fin{\Delta})  \big)
	\end{equation}
	Consider the following formula, which is an instance of \ref{it:t1:lemma:validities} and thus derivable: 
	$\big(\varphi \cb (\psi \IMP \fin{\Delta}))\big) \IMP \big( (\varphi \cb \psi) \IMP (\varphi \cb \fin{\Delta})  \big)$. By applying \ref{it:impred:lemma:validites} to this latter, we obtain derivability of \cref{eq:whiteD}.  
	We then conclude the proof applying \Cref{lemma:filling:input} and \MP. 
\end{proof}

Soundness of $\NIntCK$ now follows as a corollary of \Cref{lemma:sound:rules:nes}. 

\soundnessnested*

\subsection{Cut elimination for $\NIntCK$}
\label{sec:app:cut-el-nested}

\structuralnes*
\begin{proof}
	Hp- and rp-admissibility of $\wk$ and of $\necs$ follow by an easy induction on the height of the derivation of the premiss of $\wk$ and of $\necs$. 
	In particular, the cases for the intuitionistic rules are as in~\cite{strassburger:2013}, while the cases for conditional rules are similar to those in~\cite{alenda:2016}. In case the last rule applied in the derivation of $\necs$ is $\rep$ or $\cut$, it suffices to apply the inductive hypothesis to the premiss(es) of   $\rep$ or $\cut$, followed by an application of $\rep$ or $\cut$. 
	Hp- and rp-admissibility of $\meds$  is proved by induction on the height of the derivation of $\GC{\NS{\eta}{\Delta_2}}$, and it is similar to the same proof in~\cite{brunnler:2009,strassburger:2013}. For every case, we apply $\wk$ to the premiss of the last rule $\rg$ applied in the derivation of$\GC{\NS{\eta}{\Delta_2}}$, followed by an application of $\rg$.  
	
	Hp- and rp-invertibility of the rules mentioned in $b)$ is also proved by induction on the height of the derivation of the conclusion of each rule. The proofs, which are quite straightforward, are similar to the proofs in\cite{brunnler:2009,strassburger:2013} (for the modal rules) and to the proofs in~\cite{alenda:2016} (fro the conditional rules). 
	
	For hp- and rp-admissibility of $\ctr$ we reason by induction on the height of the derivation of $\GC{\bl{\varphi},\bl{\varphi}}$, and by distinguishing cases according to the last rule $\rg$ applied in the derivation of $\GC{\bl{\varphi},\bl{\varphi}}$. In particular, if $\bl{\varphi}$  is not principal in $\rg$, then we apply the inductive hypothesis,  followed by an application of $\rg$ (to this case belong the cases of $\cut$ and $\rep$). Otherwise, if $\bl{\varphi}$  is principal in $\rg$, then we apply invertibility to the premiss(es) of $\rg$, followed by $\R$ and, possibly, instances of $\ctr$ and $\meds$.  
\end{proof}

\replacement*
\begin{proof}
	By induction on the height $n$ of the derivation $\nder$ of $\GC{\NS{\varphi}{\Delta}}$. 
	If $n=0$, then $\GC{\NS{\varphi}{\Delta}}$ is an initial sequent, and the same holds for $\GC{\NS{\eta}{\Delta}}$. Let us consider the case $n+1$. If the last rule applied in the derivation of $\GC{\NS{\varphi}{\Delta}}$ is \emph{not} $\bl{\cb}$ or $\wh{\cd}$, then it suffices to apply the  inductive hypothesis to the premisses of the rule, followed by an application of the rule. Suppose that the last rule applied is $\bl{\cb}$. Then, the derivations of the three premises of $\rep$ are the following, where $\nder^3$ is the rightmost derivation: 
	\vspace{-0.4cm}
	$$
	\vlderivation{
		\vltr{\nder^1}{ \bl{\varphi}, \wh{\eta} }{\vlhy{\quad}}{\vlhy{}}{\vlhy{}}
	}
	\qquad
	\vlderivation{
		\vltr{\nder^2}{ \bl{\eta}, \wh{\varphi} }{\vlhy{}}{\vlhy{\quad}}{\vlhy{}}
	}
	\qquad
	\vlderivation{
		\vliiin{\bl{\cb}}{}{ \GC{ \bl{\chi \cb \vartheta}, \NS{\varphi}{\Delta} } }{ 
			\vltr{\nder^3_1}{	\bl{\chi}, \wh{\varphi} }{\vlhy{}}{\vlhy{\quad }}{\vlhy{}}
		}{
			\vltr{\nder^3_2}{	\bl{\varphi}, \wh{\chi} }{\vlhy{}}{\vlhy{\quad }}{\vlhy{}}
		}{
			\vltr{\nder^3_3}{ \GC{ \bl{\chi \cb \vartheta}, \NS{\varphi}{\bl{\vartheta}, \Delta} }	 }{\vlhy{}}{\vlhy{\quad }}{\vlhy{}}
		}
	}
	$$
	We construct a derivation $\nder$ of $\GC{ \bl{\chi \cb \vartheta}, \NS{\eta}{\Delta} }	$ as follows, where the double line denotes (possibly multiple) applications of the admissible rules, and the dotted line signifies an application of the inductive hypothesis ($ih$): 
	\vspace{-0.4cm}
	$$
	\vlderivation{
		\vliiin{\cb}{}{\GC{ \bl{\chi \cb \vartheta}, \NS{\eta}{\Delta} }	}{
			\vliin{\cut}{}{\bl{\chi}, \wh{\eta}}{
				\vltr{\nder^3_1}{	\bl{\chi}, \wh{\varphi} }{\vlhy{}}{\vlhy{\quad }}{\vlhy{}}
			}{
				\vliq{\wk}{}{ \bl{\chi}, \bl{\varphi}, \wh{\eta}}{
					\vltr{\nder^1}{	 \bl{\varphi}, \wh{\eta} }{\vlhy{}}{\vlhy{\quad }}{\vlhy{}}
				}
			}
		}{
			\vliin{\cut}{}{\bl{\eta}, \wh{\chi}}{
				\vltr{\nder^2}{	\bl{\eta}, \wh{\varphi} }{\vlhy{}}{\vlhy{\quad }}{\vlhy{}}
			}{
				\vliq{\wk}{}{ \bl{\eta}, \bl{\varphi}, \wh{\chi}}{
					\vltr{\nder^3_2}{  \bl{\varphi}, \wh{\chi}  }{\vlhy{}}{\vlhy{\quad }}{\vlhy{}}
				}
			}
		}{
			\vlid{ih}{}{\GC{ \bl{\chi \cb \vartheta}, \NS{\eta}{\bl{\vartheta}, \Delta} }}{	
				\vltr{\nder^3_3}{ \GC{ \bl{\chi \cb \vartheta}, \NS{\varphi}{\bl{\vartheta}, \Delta} }	 }{\vlhy{}}{\vlhy{\quad }}{\vlhy{}}}
		}
	}
	$$
	In the derivation the $\cut$ rule is used twice, with cut formula $\varphi$. Thus, we have that $\rk{\nder} = \max(\rk{\nder_1}, \rk{\nder_2}, \rk{\nder_3}, \w{\varphi}+1)$. 
	The case in which the last rule applied is $\wh{\cd}$ is similar. 
\end{proof}

\cutadmnes*
\begin{proof}
	By induction on $m+n$, the sum of heights of $\nder^1$ and $\nder^2$, and by distinguishing cases according to the last rules applied in the derivations $\nder^1$ and $\nder^2$. We only detail some cases.

	\begin{description}
		\item[(i)]  At least one of the premisses of $\cut$ is axiomatic. 
		Suppose that the left premiss of $\cut$ is an instance of \init. If it has the form $\GC{\bl{p}, \wh{p}, \wh{\chi}}$, then also $\GC{\bl{p}, \wh{p}}$ is an initial sequent. Otherwise, it has the form $\GC{\bl{q}, \wh{q}}$, for $\chi = q$. The right premiss of $\cut$ has the form $\GC{\bl{q}, \bl{q}}$. Using contraction, we obtain a derivation of $\GC{\bl{q}}$. The other cases are symmetric, or easy. 
		\item[(ii)] The cut formula is \emph{not} principal in at least one of the last rule applied in the derivations of the  left premiss of $\cut$.  In this case, we apply the inductive hypothesis to the premiss(es) of the rule $\rg$ where the cut formula is not principal, and then $\rg$ again.  By means of example, consider the following: 
		$$
		\vlderivation{
			\vltr{\nder_1^1}{\GCD{\wh \chi, \NS{\eta}{\Delta}}}{\vlhy{}}{\vlhy{\quad\,}}{\vlhy{}}
		}
		\qquad 
		\qquad
		\vlderivation{
		\vliiin{\wh{\cd}}{}{\GC{\bl{\chi}, \wh{\varphi \cd \psi}, \NS{\eta}{\Delta} } }{
			\vltr{\nder_2^1}{\bl{\varphi}, \wh{\eta}}{\vlhy{}}{\vlhy{\quad\,}}{\vlhy{}}
		}{
			\vltr{\nder_2^2}{\bl{\eta}, \wh{\varphi}}{\vlhy{}}{\vlhy{\quad\,}}{\vlhy{}}
		}{
			\vltr{\nder_2^3}{ \GC{\bl{\chi},\NS{\eta}{\Delta, \wh{\psi}} } }{\vlhy{}}{\vlhy{\quad\,}}{\vlhy{}}
		}
		}
		$$
		We construct the following derivation: 
		\vspace{-0.3cm}
		$$
		\vlderivation{
			\vliiin{\wh{\cd}}{}{\GC{ \wh{\varphi \cd \psi}, \NS{\eta}{\Delta}  }}{
				\vltr{\nder_2^1}{\bl{\varphi}, \wh{\eta}}{\vlhy{}}{\vlhy{\quad\,}}{\vlhy{}}
			}{
				\vltr{\nder_2^2}{\bl{\eta}, \wh{\varphi}}{\vlhy{}}{\vlhy{\quad\,}}{\vlhy{}}
			}{
			\vliid{\ih}{}{  
			\GCD{ \NS{\eta}{\Delta, \wh{\psi}}}
			}{
					\vltr{\nder_1^1}{\GCD{\wh \chi, \NS{\eta}{\Delta}}}{\vlhy{}}{\vlhy{\quad\,}}{\vlhy{}}
			}{
						\vltr{\nder_2^3}{ \GC{\bl{\chi},\NS{\eta}{\Delta, \wh{\psi}} } }{\vlhy{}}{\vlhy{\quad\,}}{\vlhy{}}
			}
			}
		}
		$$

		\item[(iii)] The cut formula is  principal in both last rule applied in the derivation of both premisses of $\cut$. Propositional and intuitionistic cases can be found in~\cite{strassburger:2013}. We have discussed the case $\bl{\cb}$ and $\wh{\cb}$ in the main text; the case $\bl{\cd}$ and $\wh{\cd}$ is symmetric.  
	\end{description}
\end{proof}

\cutelnes*
\begin{proof}
	Suppose that there is a derivation $\nder$ of  $\nseq$ in $\NIntCKcut$. 
	We show how to construct a derivation $\nder^*$ of $\nseq$ in $\NIntCK$, reasoning by induction on $\rk{\nder}$. If $\rk{\nder} = 0 $, set $\nder^* = \nder$. 
	If $\rk{\nder} > 0 $, select a topmost applications of $\cut$ in $\nder$ having  rank $n+1$ and conclusion $\Gamma$. Thus, the cut formula  $\xi$ of this occurrence of $\cut$ has weight $\w{\xi} = n$,  and the two subderivations of the premisses of $\cut$ have rank $n$. 
	Apply \Cref{lemma:cut-adm} to the premisses of  $\cut$. We obtain a derivation $\nder'$ of $\Gamma$ such that $\rk{\nder'} = n$. 
	The derivation $\nder'$ might contain occurrences of $\rep$, whose rep-formulas  $\zeta$ are such that $\w\zeta < \w\xi$. 
	Next, take the highest occurrence of $\rep$ in $\nder '$, and apply \Cref{lemma:adm:rp} to its premisses.  
	Iterate this procedure until all applications of $\rep$ have been `removed' from $\nder '$. The resulting derivation $\nder^\dagger$ is such that $\rk{\nder^\dagger} \leq n$. 
	Consider the `full' derivation $\nder$, which we have now modified by replacing its subderivation $\nder'$ with $\nder^\dagger$. Select a topmost cut of highest rank, and repeat the procedure. 
\end{proof}

\completenessnes*

\begin{proof}
	We show the derivations of some axioms and inference rules of $\IntCK$ in $\NIntCK$. 
	For reasons of space, we  omit writing the  premisses $\bl{\varphi}, \wh{\varphi}$, derivable by \Cref{prop:gen:init}, of rules $\bl{\cb}$ and $\wh{\cd}$ applied to sequents $\GC{\bl{\varphi \cb \psi}, \NS{\varphi}{\Delta}}$ or $\GC{\wh{\varphi \cd \psi}, \NS{\varphi}{\Delta}}$.
	
	\noindent Derivation of axiom \axCMdiam: 
	$$
	\vlderivation{
		\vlin{\wh{\IMP}}{}{\wh{(\varphi \cd \psi)\lor(\varphi\cd\chi) \imp (\varphi\cd \psi\lor\chi)}}{
			\vliin{\bl{\OR}}{}{ \bl{(\varphi \cd \psi)\lor(\varphi\cd\chi) }, \wh{ (\varphi\cd \psi\lor\chi)} }{
				\vlin{\bl{\cd}}{}{\bl{(\varphi \cd \psi) }, \wh{ (\varphi\cd \psi\lor\chi)}}{
					\vlin{\wh{\cd}}{}{\NS{\varphi}{\bl{\psi}}, \wh{ (\varphi\cd \psi\lor\chi)}}{
						\vlin{\wh{\OR}}{}{\NS{\varphi}{\bl{\psi}, \wh{\psi\lor\chi}}}{
							\vlin{\text{Prop.}~\ref{prop:gen:init}}{}{\NS{\varphi}{\bl{\psi}, \wh{\psi}} }{\vlhy{}}
						}
					}
				}
			}{
				\vlin{\bl{\cd}}{}{\bl{(\varphi \cd \chi) }, \wh{ (\varphi\cd \psi\lor\chi)}}{
					\vlin{\wh{\cd}}{}{\NS{\varphi}{\bl{\chi}}, \wh{ (\varphi\cd \psi\lor\chi)}}{
						\vlin{\wh{\OR}}{}{\NS{\varphi}{\bl{\chi}, \wh{\psi\lor\chi}}}{
							\vlin{\text{Prop.}~\ref{prop:gen:init}}{}{\NS{\varphi}{\bl{\chi}, \wh{\chi}} }{\vlhy{}}
						}
					}
				}
			}
		}
	}
	$$
	Derivation of axiom \axCFS:
	$$
	\vlderivation{
	\vlin{\wh{\IMP}}{}{ \wh{((\varphi \cd \psi)\imp(\varphi\cb\chi)) \imp (\varphi\cb (\psi\imp\chi))} }{
	\vlin{\wh{\cb}}{}{ \bl{(\varphi \cd \psi)\imp(\varphi\cb\chi)},  \wh{(\varphi\cb (\psi\imp\chi))}}{
	\vlin{\wh{\IMP}}{}{  \bl{(\varphi \cd \psi)\imp(\varphi\cb\chi)}, \NS{\varphi}{\wh{\psi\IMP \chi}} }{
	\vliin{\bl{\IMP}}{}{  \bl{(\varphi \cd \psi)\imp(\varphi\cb\chi)}, \NS{\varphi}{\bl{\psi}, \wh{\chi}} }{
	\vlin{\wh{\cd}}{}{ \bl{(\varphi \cd \psi)\imp(\varphi\cb\chi)}, \NS{\varphi}{\bl{\psi} }, \wh{\varphi \cd \psi} }{
	\vlin{\text{Prop.}~\ref{prop:gen:init}}{}{ \bl{(\varphi \cd \psi)\imp(\varphi\cb\chi)}, \NS{\varphi}{\bl{\psi}, \wh{\psi} } }{\vlhy{}}
	}
	}{
	\vlin{\bl{\cb}}{}{ \bl{\varphi\cb\chi},\NS{\varphi}{\bl{\psi}, \wh{\chi}}}{
	\vlin{\text{Prop.}~\ref{prop:gen:init}}{}{  \bl{\varphi\cb\chi},\NS{\varphi}{\bl{\chi}\bl{\psi}, \wh{\chi}} }{ \vlhy{}}
	}
	}
	}
	}
	}
	}
	$$
	
	\noindent Derivation of rule \RAdiam. By hypothesis, we have a derivation $\nder_1$ of 
	 $ \wh{ \varphi \IMP\rho}$ 
	and a derivation $\nder_2$ of 
	$ \wh{\rho \IMP\varphi}$. 
	We derive $\wh{(\varphi \cd \psi) \IMP (\rho \cd \psi)}$, the other direction of the implication being similar.  
	In the derivation, $\mathsf{inv.}$ denotes invertibility of $\wh{\IMP}$ (\Cref{lemma:str-prop}).
	$$
	\vlderivation{
		\vlin{\wh{\IMP}}{}{\wh{(\varphi \cd \psi) \IMP (\rho \cd \psi)}}{
		\vlin{\bl{\cd}}{}{ \bl{\varphi \cd \psi}, \wh{\rho \cd \psi} }{
		\vliiin{\wh{\cd}}{}{ \NS{\varphi}{\bl{\psi}}, \wh{\rho \cd \psi}}{
			\vlid{\mathsf{inv.}}{}{ \bl{\varphi}, \wh{\rho} }{
			\vltr{\nder_1}{  \wh{ \varphi \IMP\rho} }{\vlhy{}}{\vlhy{\quad\,}}{\vlhy{}}
		}
		}{
		\vlid{\mathsf{inv.}}{}{ \bl{\rho}, \wh{\varphi} }{
		\vltr{\nder_2}{  \wh{ \rho \IMP\varphi} }{\vlhy{}}{\vlhy{\quad\,}}{\vlhy{}}
		}
		}{
		\vlin{\text{Prop.}~\ref{prop:gen:init}}{}{\NS{\varphi}{\bl{\psi}\wh{\psi}} }{\vlhy{}}
		}
		}
		}
	}
	$$

\end{proof}

\section{Cut admissibility in sequent calculi $\SCCKstar$}\label{app:cut elim SCCKstar}

\subsection{Cut admissibility in $\SCCK$}

\begin{proposition}[Hp-admissibility of weakening]
The rules $\lwk$ and $\rwk$ are height-preserving admissible in $\SCCK$.
\end{proposition}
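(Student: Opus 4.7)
The plan is to prove the claim by simultaneous induction on the height of a $\SCCK$-derivation $\varDer$ of the premise of $\lwk$ or $\rwk$. In the base case, $\varDer$ has height zero and is thus an instance of \init\ or \lbot. For $\lwk$, adding any formula $\varphi$ to the antecedent of $\G, p \seq p$ or $\G, \bot \seq \D$ produces another instance of the same initial rule (with possibly larger $\G$), again of height zero. For $\rwk$, the premise must have empty succedent, so $\varDer$ is necessarily an instance of \lbot\ of the form $\G, \bot \seq$; placing any $\varphi$ on the right yields $\G, \bot \seq \varphi$, still an instance of \lbot.

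For the inductive step, I would distinguish cases on the last rule $\rulesfont{r}$ applied in $\varDer$. The unifying observation is that the antecedent and the (possibly empty) succedent of the conclusion of every rule of $\SCCK$ are built from an ``active'' part (the principal and side formulas) together with generic side-contexts $\G$ and $\D$; no rule constrains the composition of these side-contexts beyond the single-succedent bound $|\D|\leq 1$. Consequently, the weakening formula can be absorbed into the relevant side-context: I apply the induction hypothesis to each premise whose side-context is being enlarged, and reapply $\rulesfont{r}$ to obtain a derivation of the weakened conclusion of the same height as $\varDer$.

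The step that deserves the most attention is $\rwk$, since it presupposes an empty succedent, so only those rules whose conclusion may have $\D = \emptyset$ need to be checked, namely \lbot, \lland, \llor, \limp, and \rulecbd\ (the rules \rulecb\ and \rulecd\ have a non-empty fixed succedent and are therefore only relevant for $\lwk$, where the weakening formula is placed inside $\G$ and the rule is reapplied directly). For the purely left propositional rules \lland\ and \llor, the induction hypothesis applied to each premise suffices. For \limp, whose left premise carries the fixed succedent $\varphi$, only the right premise $\G', \psi \seq$ needs to be weakened, obtaining $\G', \psi \seq \chi$ by the induction hypothesis; the left premise is carried over unchanged, and \limp\ is reapplied to produce $\G', \varphi\imp\psi \seq \chi$. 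For \rulecbd, the premises contain no formula from $\D$ at all, so the rule is simply reapplied with the new right-hand side $\chi$, yielding a derivation of the same height. No genuine obstacle arises, precisely because the side-context $\D$ is treated as a parameter in every rule of $\SCCK$.
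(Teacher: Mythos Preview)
Your proposal is correct and follows essentially the same approach as the paper: induction on the height of the derivation of the premiss, absorbing the weakening formula into the generic side-context $\G$ or $\D$ and reapplying the last rule. The paper's proof is terser, simply pointing to the standard \textsf{G3ip} argument and illustrating the conditional case with \rulecbd, whereas you spell out the $\rwk$ cases more carefully (notably the asymmetric treatment of the two premisses of \limp), but there is no substantive difference.
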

\begin{proof}
By induction on the height of the derivation of the premiss.
The proof extend the standard proof for \textsf{G3ip} with the new cases induced by the conditional rules.
Suppose that the last rule applied in the derivation of the premiss is a conditional rule.
Then the conclusion of $\lwk$ or $\rwk$ can be obtained by a different application of the same conditional rule. For example, if we have
\begin{center}
\begin{small}
\ax{$\varphi \seqder{1} \rho$} 
\ax{$\rho \seqder{2} \varphi$}
\ax{$\sigma, \psi \seqder{3}$}
\rlab{\rulecbd}
\tinf{$\G, \rho \cb \sigma, \varphi \cd \psi \seq$}
\disp
\end{small}
\end{center}
Then the consequences $\G, \chi, \rho \cb \sigma, \varphi \cd \psi \seq$ \ and $\G, \rho \cb \sigma, \varphi \cd \psi \seq \vartheta$ of respectively $\lwk$ and $\rwk$ can be obtained by
\begin{center}
\begin{small}
\ax{$\varphi \seqder{1} \rho$} 
\ax{$\rho \seqder{2} \varphi$}
\ax{$\sigma, \psi \seqder{3}$}
\rlab{\rulecbd}
\tinf{$\G, \chi, \rho \cb \sigma, \varphi \cd \psi \seq$}
\disp
\ \
\ax{$\varphi \seqder{1} \rho$} 
\ax{$\rho \seqder{2} \varphi$}
\ax{$\sigma, \psi \seqder{3}$}
\rlab{\rulecbd}
\tinf{$\G, \rho \cb \sigma, \varphi \cd \psi \seq \vartheta$}
\disp
\end{small}
\end{center}\qed
\end{proof}

\begin{proposition}[Hp-invertibility]
The rules \lland, \rland, \llor, \rlorone, \rlortwo, \rimp\ are height-preserving invertible, 
and the rule \limp\ is height-preserving invertible with respect to the right premiss.
\end{proposition}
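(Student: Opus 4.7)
The plan is to prove each invertibility claim by induction on the height of the derivation $\varDer$ of the conclusion, following the familiar pattern for \textsf{G3ip}-style calculi (see e.g.~\cite{troelstra:2000}), and to check that the additional conditional rules of $\SCCK$ do not disrupt the argument. I will treat a representative case, say \lland, and indicate why the others are analogous. So, assuming $\varDer$ derives $\G,\varphi\land\psi \seq \D$ with height $h$, I want to show that $\G,\varphi,\psi \seq \D$ has a derivation of height $\leq h$.

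For the base case, if $\G,\varphi\land\psi \seq \D$ is an instance of \init\ or \lbot, then $\G,\varphi,\psi \seq \D$ is also an instance (the principal atom or $\bot$ lies in $\G$ or $\D$, since $\varphi\land\psi$ is neither an atom nor $\bot$). For the inductive step, I split on the last rule $\rulesfont{r}$ applied. If $\varphi\land\psi$ is principal in $\rulesfont{r}$, then $\rulesfont{r}$ is necessarily \lland, and the premiss $\G,\varphi,\psi \seq \D$ already has height $h-1$, so we are done. Otherwise, $\varphi\land\psi$ occurs in $\G$ as a side formula; I apply the induction hypothesis to each premiss of $\rulesfont{r}$ (replacing $\varphi\land\psi$ with $\varphi,\psi$ in the antecedent), obtaining derivations of height $\leq h-1$, and then reapply $\rulesfont{r}$.

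The only slightly new point is the behaviour of the conditional rules \rulecb, \rulecd, \rulecbd\ on the side context $\G$. Each of them has the shape $\G,\rho_1\cb\sigma_1,\dots,\rho_n\cb\sigma_n \seq \D'$ (or similar with a principal $\cd$-formula), where the $\sseq$-premisses of the rule only involve $\cb$- and $\cd$-subformulas of the indices and antecedents, and do not mention $\G$ at all. Hence, if $\varphi\land\psi \in \G$, the $\sseq$-premisses are unchanged after replacing $\varphi\land\psi$ by $\varphi,\psi$, and the induction hypothesis is applied only to the ``main'' premiss containing $\sigma_1,\dots,\sigma_n$ and (possibly) $\psi$ or $\vartheta$, which still contains $\G$ as a side context. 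Reapplying the same conditional rule then yields the required derivation of $\G,\varphi,\psi,\rho_1\cb\sigma_1,\dots,\rho_n\cb\sigma_n \seq \D'$ of height $\leq h$.

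The cases for \rland, \llor, \rlorone, \rlortwo, \rimp\ are entirely analogous: in each, when the rule being inverted is principal in the last step of $\varDer$, the desired premiss is one of the premisses of that last step; otherwise, one applies the induction hypothesis through. For \limp\ with respect to the right premiss $\G,\psi \seq \D$, the only slightly delicate subcase is when $\varphi\imp\psi$ is principal in the last rule, i.e.\ \limp\ itself was applied: then the right premiss is directly available at height $h-1$, so we are done; for all other last rules, we apply the induction hypothesis to each premiss and reapply the rule. The only potential obstacle is ensuring that the conditional rules never need to ``move'' the formula being inverted between antecedent and the structural context in a non-trivial way, but since $\G$ and $\D$ are carried unchanged from conclusion to ``main'' premiss in every conditional rule, no such obstacle arises.
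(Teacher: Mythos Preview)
Your overall strategy---induction on the height of the derivation, extending the standard \textsf{G3ip} argument with the conditional-rule cases---is exactly the paper's approach. However, your handling of the conditional rules contains a genuine misreading. You write that when the last rule is \rulecb, \rulecd, or \rulecbd\ and the formula to be inverted lies in $\G$, ``the induction hypothesis is applied only to the `main' premiss \ldots\ which still contains $\G$ as a side context'', and later that ``$\G$ and $\D$ are carried unchanged from conclusion to `main' premiss in every conditional rule''. This is false: inspect the rules in Fig.~\ref{fig:seq CCKbox}. The premisses of \rulecb, \rulecd, \rulecbd\ are built \emph{only} from the antecedents $\rho_i$, the consequents $\sigma_i$, and (for \rulecd, \rulecbd) $\varphi,\psi,\eta,\vartheta$; the context $\G$ (and $\D$ in \rulecbd) does not appear in any premiss at all. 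So there is nothing to apply the induction hypothesis to.

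The good news is that this makes the case strictly easier than you describe: since the premisses are untouched, you simply reapply the same conditional rule with the modified context. This is precisely what the paper's single worked example shows (for \limp\ with last rule \rulecb): from the premiss $\seq\sigma$ one derives $\G,\psi\seq\rho\cb\sigma$ by the very same instance of \rulecb, just choosing a different side context. So your argument is repairable, and the repair is a simplification; but as written, the description of the conditional-rule case is incorrect and would not go through if one tried to locate the formula being inverted inside the premiss.
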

\begin{proof}
By induction on the height of the derivation $\mathcal D$ of the conclusion of the rules of $\ctr$, extending the proof for \textsf{G3ip} with the cases involving the conditional rules. 
For instance, suppose we have:
\begin{center}
\begin{small}
\ax{$\seqder{} \sigma$}
\rlab{\rulecb}
\uinf{$\G, \varphi \imp \psi \seq\rho \cb \sigma$}
\disp
\end{small}
\end{center}
The sequent $\G, \varphi \imp \psi \seq\rho \cb \sigma$ is an instance of the conclusion of \limp.
We can derive the right premiss of \limp\ as follows:
\begin{center}
\begin{small}
\ax{$\seqder{} \sigma$}
\rlab{\rulecb}
\uinf{$\G, \psi \seq\rho \cb \sigma$}
\disp
\end{small}
\end{center}
\qed
\end{proof}

\begin{proposition}[Hp-admissibility of contraction]
The rule $\ctr$ is height-preserving admissible in $\SCCK$.
\end{proposition}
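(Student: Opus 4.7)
The plan is to prove hp-admissibility of $\ctr$ by a standard induction on the height $n$ of the derivation $\mathcal{D}$ of the premiss $\G, \varphi, \varphi \seq \D$, largely following the \textsf{G3ip} pattern (as in Troelstra--Schwichtenberg) and focusing on the cases introduced by the conditional rules. For $n=0$, the premiss is an instance of \init\ or \lbot, and $\G, \varphi \seq \D$ is again such an instance (observing that if $\varphi = p$ coincides with the axiom atom, one copy still suffices). For the inductive step, we split on the last rule $\rr$ of $\mathcal{D}$ and on whether the contracted formula is principal in $\rr$.

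If neither occurrence of $\varphi$ is principal in $\rr$, we apply the IH to the premiss(es) of $\rr$ (none of which get taller) and reapply $\rr$, which works uniformly for every propositional and conditional rule. The propositional cases where $\varphi$ is principal are handled as in \textsf{G3ip}, invoking the hp-invertibility results already established: in particular the $\limp$ case uses hp-invertibility of \limp\ on the right premiss, and the $\lor$, $\land$, $\imp$ cases use the hp-invertibility established for \llor, \lland, \rimp, etc.

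The genuinely new situations are when $\varphi$ is principal in one of the conditional rules. If $\rr \in \{\rulecd, \rulecbd\}$ and $\varphi = \varphi' \cd \psi'$ is the (unique) principal $\cd$-formula, then the second occurrence of $\varphi$ lies in $\G$, no premiss of $\rr$ depends on it, and a single reapplication of $\rr$ already yields $\G, \varphi \seq \D$. If $\rr \in \{\rulecb, \rulecd, \rulecbd\}$ and $\varphi = \rho \cb \sigma$ is among the principal $\cb$-formulas, we further split: when only one copy is principal, reapplying $\rr$ with the same principal list delivers the contracted conclusion directly; when both copies are principal (so the multiset $\rho_1 \cb \sigma_1, \dots, \rho_n \cb \sigma_n$ contains two identical entries $\rho \cb \sigma$), the ``arithmetic'' premiss has shape $\sigma, \sigma, \sigma_3, \dots, \sigma_n, \dots \seq \dots$, to which we apply the IH to contract the duplicated $\sigma$, and then reapply $\rr$ with one fewer principal $\cb$-formula, obtaining the desired conclusion.

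The main obstacle is precisely this last subcase, where two principal $\cb$-formulas of one of \rulecb, \rulecd, \rulecbd\ must be amalgamated: the proof crucially uses the IH on a strictly shorter derivation of the arithmetic premiss to contract the corresponding $\sigma$, so that the repetition on the left of the rule can be absorbed into a single application with a shorter principal list. Everything else is bookkeeping, and the argument preserves the height bound because each invocation of IH is on a strictly shorter subderivation and each reapplied rule adds at most one to the height.
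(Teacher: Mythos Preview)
Your plan is correct and follows essentially the same approach as the paper: induction on the height of the derivation of the premiss, with the three-way case split (neither copy principal / one copy principal / both copies principal) handled for the conditional rules exactly as you describe, and in particular the key case where both copies of a principal $\cb$-formula are contracted by applying the IH to the duplicated $\sigma$ in the arithmetic premiss and reapplying the rule with one fewer principal formula. The paper presents the same argument with \rulecbd\ as the running example.
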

\begin{proof}
By induction on the height of the derivation $\mathcal D$ of the premiss of $\ctr$, extending the proof for \textsf{G3ip} with the conditional rules. Consider as examples the following cases.

\medskip
\noindent
(i) The contracted formula is not principal in the last rule applied in $\mathcal D$. For instance:
\begin{center}
\begin{small}
\ax{$\varphi \seqder{1} \rho$} 
\ax{$\rho \seqder{2} \varphi$}
\ax{$\sigma, \psi \seqder{3}$}
\rlab{\rulecbd}
\tinf{$\G, \chi, \chi, \rho \cb \sigma, \varphi \cd \psi \seq$}
\disp
\end{small}
\end{center}
We can obtain the consequence $\G, \chi, \rho \cb \sigma, \varphi \cd \psi \seq$ of $\ctr$ as follows:
\begin{center}
\begin{small}
\ax{$\varphi \seqder{1} \rho$} 
\ax{$\rho \seqder{2} \varphi$}
\ax{$\sigma, \psi \seqder{3}$}
\rlab{\rulecbd}
\tinf{$\G, \chi, \rho \cb \sigma, \varphi \cd \psi \seq$}
\disp
\end{small}
\end{center}

\medskip
\noindent
(ii) Only one occurrence of the contracted formula is principal in the last rule applied in $\mathcal D$. For instance:
\begin{center}
\begin{small}
\ax{$\varphi \seqder{1} \rho$} 
\ax{$\rho \seqder{2} \varphi$}
\ax{$\sigma, \psi \seqder{3}$}
\rlab{\rulecbd}
\tinf{$\G, \rho \cb \sigma, \rho \cb \sigma, \varphi \cd \psi \seq$}
\disp
\end{small}
\end{center}
We can obtain the consequence $\G, \rho \cb \sigma, \varphi \cd \psi \seq$ of $\ctr$ as follows:
\begin{center}
\begin{small}
\ax{$\varphi \seqder{1} \rho$} 
\ax{$\rho \seqder{2} \varphi$}
\ax{$\sigma, \psi \seqder{3}$}
\rlab{\rulecbd}
\tinf{$\G, \rho \cb \sigma, \varphi \cd \psi \seq$}
\disp
\end{small}
\end{center}

\medskip
\noindent
(iii) Both occurrences of the contracted formula are principal in the last rule applied in $\mathcal D$. For instance:
\begin{center}
\begin{small}
\ax{$\varphi \seqder{1} \rho$ \ \ \ 
$\rho \seqder{2} \varphi$ \ \ \ 
$\varphi \seqder{3} \rho$ \ \ \
$\rho \seqder{4} \varphi$ \ \ \ 
$\sigma, \sigma, \psi \seqder{5}$}
\rlab{\rulecbd}
\uinf{$\G, \rho \cb \sigma, \rho \cb \sigma, \varphi \cd \psi \seq$}
\disp
\end{small}
\end{center}
We can obtain the consequence $\G, \rho \cb \sigma, \varphi \cd \psi \seq$ of $\ctr$ as follows, where the dotted line represent an application of the inductive hypothesis:
\begin{center}
\begin{small}
\ax{$\varphi \seqder{1} \rho$}
\ax{$\rho \seqder{2} \varphi$}
\ax{$\sigma, \sigma, \psi \seqder{5}$}
\dottedLine
\rlab{$i.h.$}
\uinf{$\sigma, \psi \seq$}
\rlab{\rulecbd}
\tinf{$\G, \rho \cb \sigma, \varphi \cd \psi \seq$}
\disp
\end{small}
\end{center}\qed
\end{proof}

\begin{theorem}[Admissibility of cut]\label{appth:cut elim SCCK}
The rule $\cut$ is admissible in $\SCCK$.
\end{theorem}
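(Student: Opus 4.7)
The plan is to prove admissibility of cut by induction on the lexicographically ordered pair $(c, h)$, where $c = w(\xi)$ is the weight of the cut formula $\xi$ and $h$ is the sum of the heights of the derivations $\varDer_1, \varDer_2$ of the two premises. The inductive hypothesis will allow me either to apply cut on a cut formula of strictly smaller weight (with arbitrary premise derivations), or to apply cut on the same formula but with strictly smaller sum of premise heights. Throughout I will freely use the height-preserving admissibility of weakening and contraction and the height-preserving invertibility of the propositional rules, which have already been proved. The base cases in which at least one premise is an instance of \init\ or $\bot_\mathsf{L}$ are immediate: if $\xi$ is the principal atom of an \init, the conclusion of cut coincides, up to weakening, with the other premise; if $\bot$ occurs on the left of either premise, it persists in the conclusion, which is itself an instance of $\bot_\mathsf{L}$.

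For the inductive step, when $\xi$ is not principal in the last rule of the left premise or of the right premise, I permute cut upward against each premise of that rule, which decreases $h$, and then reapply the rule at the bottom. This works uniformly also for the conditional rules of \Cref{fig:seq CCKbox}: the context $\G, \D$ is inherited additively, the principal conditional formulas are freshly introduced in the conclusion, and the equivalence premises $\varphi \sseq \rho_i$ do not share formulas with $\xi$ when $\xi$ is non-principal. When $\xi$ is propositional and principal in both premises the reduction is the standard one for \textsf{G3ip}, yielding smaller cuts on the immediate subformulas.

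The new work lies in the principal conditional cases. When $\xi = \varphi \cb \psi$ is introduced by \rulecb\ on the left and matched on the right by one of the left-hand $\cb$-formulas $\rho_i \cb \sigma_i$ of an application of \rulecb, \rulecd, or \rulecbd, I reconstruct the conclusion in three steps. First, using the equivalence premises $\varphi \sseq \rho_i$ from the left derivation and the $\varphi' \sseq \rho_i$ from the right derivation (where $\varphi'$ is the antecedent conditioned in the right-hand rule), I apply cuts on $\varphi$, $\varphi'$ and the $\rho_i$, all of weight strictly below $w(\xi)$, to produce the equivalences needed by the final conditional rule. Second, I apply a single cut on $\psi$ (again of strictly smaller weight) between the right premise $\sigma_1,\ldots,\sigma_n \seq \psi$ of the left \rulecb\ and the principal conclusion-premise of the right rule, which carries $\psi$ on the left. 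Third, I close the derivation by one application of the same conditional rule used on the right, but now in the merged antecedent context. The cases $\xi = \varphi \cd \psi$, with left premise ending in \rulecd\ and right premise ending in \rulecd\ or \rulecbd, follow the same pattern and in fact include the explicit example sketched in the text of \Cref{th:cut elim CCK}.

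The main obstacle will be the bookkeeping: in each conditional principal case several auxiliary cuts are created on the atomic antecedents $\varphi, \rho_i, \eta, \xi_j$ of the conditionals and on $\psi$, and I must verify that each is either on a formula of strictly smaller weight than $\xi$ (so the outer induction on $c$ applies) or arises after a reduction that strictly decreases $h$ while preserving $c$. Particular care is needed in the $\cb$/\rulecbd\ and $\cd$/\rulecbd\ interactions, where the final rule has no succedent and I must check that the empty right-hand side is preserved by the chain of smaller cuts. Once this is verified, the topmost-cut elimination argument gives admissibility of cut in $\SCCK$ as stated in \Cref{appth:cut elim SCCK}.
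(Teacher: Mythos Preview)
Your proposal is correct and follows essentially the same strategy as the paper: lexicographic induction on $(c,h)$, standard \textsf{G3ip} reductions for the propositional cases, and in the conditional principal cases a reduction to cuts on the antecedent and consequent subformulas (all of strictly smaller weight than $\xi$) followed by a single reapplication of the right-hand conditional rule in the merged context. One minor point of phrasing: in the non-principal conditional cases the cut formula sits in the weakening context $\G,\D$ of the rule, so rather than ``permuting cut upward against each premise'' one simply reapplies the conditional rule with the merged context---but your remark that the context is inherited additively is exactly this observation.
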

\begin{proof}
By induction on lexicographically ordered pairs ($c$, $h$), 
where $c$ is the weight 
of the cut formula $\varphi$, and $h$ is the sum of the heights of the derivations of the two premisses of $\cut$.
We only show the inductive step of the proof, distinguishing some cases according to whether the cut formula is or not principal
in the last rules applied in the derivations of the premisses of $\cut$.
Moreover, we only consider the cases involving the conditional rules, the other cases are exactly as in the standard proof of cut elimination for the intuitionistic propositional calculus \textsf{G3ip} (see e.g. \cite{troelstra:2000}).
For the sake of readability, we show applications of conditional rules with only one or two principal $\cb$-formulas that however contain all information needed to construct the cases of rule applications with more principal $\cb$-formulas.

\bigskip
\noindent
(i) The cut formula is not principal in the last rule applied in the derivation of the left premiss of $\cut$.
Then, the only conditional rule that can be applied as last rule is \rulecbd.

\begin{center}
\begin{small}
\ax{$\varphi \seqder{1} \rho$ \ \ $\rho \seqder{2} \varphi$ \ \ $\sigma, \psi \seqder{3}$}
\rlab{\rulecbd}
\uinf{$\G, \rho \cb \sigma, \varphi \cd \psi \seq \chi$}
\disp
\qquad
$\G', \chi \seqder{4} \D$
\end{small}
\end{center}
The consequence $\G, \G', \rho \cb \sigma, \varphi \cd \psi \seq \D$ of $\cut$ can be derived as follows:

\begin{center}
\begin{small}
\ax{$\varphi \seqder{1} \rho$ \ \ $\rho \seqder{2} \varphi$ \ \ $\sigma, \psi \seqder{3}$}
\rlab{\rulecbd}
\uinf{$\G, \G', \rho \cb \sigma, \varphi \cd \psi \seq \D$}
\disp
\end{small}
\end{center}

\bigskip
\noindent
(ii) 
The cut formula is not principal in the last rule applied in the derivation of the right premiss of $\cut$.
Consider the cases where the last rule applied is a conditional rule.
In all there cases the consequence of $\cut$ ca be obtained by means of a different application of the same rule.
For instance, suppose we have:
\begin{center}
\begin{small}
$\G \seqder{1} \chi$ \qquad
\ax{$\varphi \seqder{2} \rho$ \ \ $\rho \seqder{3} \varphi$ \ \ $\sigma, \psi \seqder{4}$}
\rlab{\rulecbd}
\uinf{$\G', \chi, \rho \cb \sigma, \varphi \cd \psi \seq \D$}
\disp
\end{small}
\end{center}
The consequence $\G, \G', \rho \cb \sigma, \varphi \cd \psi \seq \D$ of cut can be obtained as follows:
\begin{center}
\begin{small}
\ax{$\varphi \seqder{2} \rho$ \ \ $\rho \seqder{3} \varphi$ \ \ $\sigma, \psi \seqder{4}$}
\rlab{\rulecbd}
\uinf{$\G, \G', \rho \cb \sigma, \varphi \cd \psi \seq \D$}
\disp
\end{small}
\end{center}

\bigskip
\noindent
(iii)
The cut formula is principal in the last rules applied in the derivations of both premisses of $\cut$.
As before, we only consider the possible cases involving conditional rules.

\bigskip
$\bullet$ The last rules applied in the derivations of the premisses of $\cut$ are, respectively, \rulecb\ and \rulecb:
\begin{center}
\begin{small}
\ax{$\varphi \seqder{1} \rho$ \ \ $\rho \seqder{2} \varphi$ \ \ $\sigma \seqder{3} \psi$}
\rlab{\rulecb}
\uinf{$\G, \rho \cb \sigma \seq \varphi \cb \psi$}
\disp
\quad
\ax{$\varphi \seqder{4} \xi$ \ \ $\xi \seqder{5} \varphi$ \ \  $\eta \seqder{6} \xi$ \ \ $\xi \seqder{7} \eta$ \ \ $\psi, \vartheta \seqder{8} \chi$}
\rlab{\rulecb}
\uinf{$\G', \varphi \cb \psi, \eta \cb \vartheta \seq \xi \cb \chi$}
\disp
\end{small}
\end{center}
The consequence $\G, \G', \rho \cb \sigma, \eta \cb \vartheta \seq \xi \cb \chi$  of $\cut$ can be derived as follows,
where dotted lines represent applications of the induction hypothesis:
%
\begin{center}
\begin{small}
\ax{$\xi \seqder{5} \varphi$ \ \ $\varphi \seqder{1} \rho$}
\dottedLine
\uinf{$\xi \seq \rho$}
\ax{$\rho \seqder{2} \varphi$ \ \ $\varphi \seqder{4} \xi$}
\dottedLine
\uinf{$\rho \seq \xi$}
\ax{$\eta \seqder{6} \xi$ \ $\xi \seqder{7} \eta$} 
\ax{$\sigma \seqder{3} \psi$ \ \ $\psi, \vartheta \seqder{8} \chi$}
\dottedLine
\uinf{$\sigma,\vartheta \seq \chi$}
\rlab{\rulecb}
\qinf{$\G, \G', \rho \cb \sigma, \eta \cb \vartheta \seq \xi \cb \chi$}
\disp
\end{small}
\end{center}

\bigskip
$\bullet$ The last rules applied in the derivations of the premisses of $\cut$ are, respectively, \rulecb\ and \rulecd:
\begin{center}
\begin{small}
\ax{$\varphi \seqder{1} \rho$ \ \ $\rho \seqder{2} \varphi$ \ \ $\sigma \seqder{3} \psi$}
\rlab{\rulecb}
\uinf{$\G, \rho \cb \sigma \seq \varphi \cb \psi$}
\disp
\quad
\ax{$\eta \seqder{4} \varphi$ \ \ $\varphi \seqder{5} \eta$ \  \ $\eta \seqder{6} \xi$ \ \ $\xi \seqder{7} \eta$ \ \ $\psi, \vartheta \seqder{8} \chi$}
\rlab{\rulecd}
\uinf{$\G', \varphi \cb \psi, \eta \cd \vartheta \seq \xi \cd \chi$}
\disp
\end{small}
\end{center}
The consequence $\G, \G', \rho \cb \sigma, \eta \cd \vartheta \seq \xi \cd \chi$  of $\cut$ can be derived as follows,
where dotted lines represent applications of the induction hypothesis:
%
\begin{center}
\begin{small}
\ax{$\eta \seqder{4} \varphi$ \ \ $\varphi \seqder{1} \rho$}
\dottedLine
\uinf{$\eta \seq \rho$}
\ax{$\rho \seqder{2} \varphi$ \ \ $\varphi \seqder{5} \eta$}
\dottedLine
\uinf{$\rho \seq \eta$}
\ax{$\eta \seqder{6} \xi$ \ $\xi \seqder{7} \eta$} 
\ax{$\sigma \seqder{3} \psi$ \ \ $\psi, \vartheta \seqder{8} \chi$}
\dottedLine
\uinf{$\sigma,\vartheta \seq \chi$}
\rlab{\rulecd}
\qinf{$\G, \G', \rho \cb \sigma, \eta \cd \vartheta \seq \xi \cd \chi$}
\disp
\end{small}
\end{center}

\bigskip
$\bullet$ The last rules applied in the derivations of the premisses of $\cut$ are, respectively, \rulecb\ and \rulecbd:
\begin{center}
\begin{small}
\ax{$\varphi \seqder{1} \rho$ \ \ $\rho \seqder{2} \varphi$ \ \ $\sigma \seqder{3} \psi$}
\rlab{\rulecb}
\uinf{$\G, \rho \cb \sigma \seq \varphi \cb \psi$}
\disp
\quad
\ax{$\eta \seqder{4} \varphi$ \ \ $\varphi \seqder{5} \eta$ \ \ $\psi, \vartheta \seqder{6}$}
\rlab{\rulecbd}
\uinf{$\G', \varphi \cb \psi, \eta \cd \vartheta \seq \D$}
\disp
\end{small}
\end{center}
The consequence $\G, \G', \rho \cb \sigma, \eta \cd \vartheta \seq \D$ of $\cut$ can be derived as follows,
where dotted lines represent applications of the induction hypothesis:
%
\begin{center}
\begin{small}
\ax{$\eta \seqder{4} \varphi$ \ \ $\varphi \seqder{1} \rho$}
\dottedLine
\uinf{$\eta \seq \rho$}
\ax{$\rho \seqder{2} \varphi$ \ \ $\varphi \seqder{5} \eta$}
\dottedLine
\uinf{$\rho \seq \eta$}
\ax{$\sigma \seqder{3} \psi$ \ \ $\psi, \vartheta \seqder{6}$}
\dottedLine
\uinf{$\sigma,\vartheta \seq$}
\rlab{\rulecbd}
\tinf{$\G, \G', \rho \cb \sigma, \eta \cd \vartheta \seq \D$}
\disp
\end{small}
\end{center}

\bigskip
$\bullet$ The last rules applied in the derivations of the premisses of $\cut$ are, respectively, \rulecd\ and \rulecd:
\begin{center}
\begin{small}
\ax{$\varphi \sseqder{1}{2} \rho$ \quad $\varphi \sseqder{3}{4} \eta$ \quad $\sigma, \psi \seqder{5} \vartheta$}
\rlab{\rulecd}
\uinf{$\G, \rho \cb \sigma, \varphi \cd \psi \seq \eta \cd \vartheta$}
\disp
\hfill
\ax{$\eta \sseqder{6}{7} \xi$ \quad  $\eta \sseqder{8}{9} \kappa$ \quad $\chi, \vartheta \seqder{10} \zeta$}
\rlab{\rulecd}
\uinf{$\G', \xi \cb \chi, \eta \cd \vartheta \seq \kappa \cd \zeta$}
\disp
\end{small}
\end{center}
The consequence $\G, \G', \rho \cb \sigma, \xi \cb \chi, \varphi \cd \psi \seq \kappa \cd \zeta$  of $\cut$ can be derived as follows,
where dotted lines represent applications of the induction hypothesis:
%
\begin{center}
\begin{tiny}
\ax{$\varphi \sseqder{1}{2} \rho$} 
\ax{$\varphi \seqder{3} \eta$ \ \ $\eta \seqder{6} \xi$}
\dottedLine
\uinf{$\varphi \seq \xi$}
\ax{$\xi \seqder{7} \eta$ \ \ $\eta \seqder{4} \varphi$}
\dottedLine
\uinf{$\xi \seq \varphi$}
\ax{$\eta \sseqder{8}{9} \kappa$}
\ax{$\sigma, \psi \seqder{5} \vartheta$ \ \ $\chi, \vartheta \seqder{10} \zeta$}
\dottedLine
\uinf{$\sigma, \chi, \psi \seq \zeta$}
\rlab{\rulecd}
\QuinaryInfC{$\G, \G', \rho \cb \sigma, \xi \cb \chi, \varphi \cd \psi \seq \kappa \cd \zeta$}
\disp
\end{tiny}
\end{center}

\bigskip
$\bullet$ The last rules applied in the derivations of the premisses of $\cut$ are, respectively, \rulecd\ and \rulecbd:
\begin{center}
\begin{small}
\ax{$\varphi \seqder{1} \rho$ \ \ $\rho \seqder{2} \varphi$ \ \ $\varphi \seqder{3} \eta$ \ \ $\eta \seqder{4} \varphi$ \ \ $\sigma, \psi \seqder{5} \vartheta$}
\rlab{\rulecd}
\uinf{$\G, \rho \cb \sigma, \varphi \cd \psi \seq \eta \cd \vartheta$}
\disp
\ \
\ax{$\eta \seqder{6} \xi$ \ \ $\xi \seqder{7} \eta$ \ \ $\chi, \vartheta \seqder{8}$}
\rlab{\rulecbd}
\uinf{$\G', \xi \cb \chi, \eta \cd \vartheta \seq \D$}
\disp
\end{small}
\end{center}
The consequence $\G, \G', \rho \cb \sigma, \xi \cb \chi, \varphi \cd \psi \seq \D$  of $\cut$ can be derived as follows,
where dotted lines represent applications of the induction hypothesis:
%
\begin{center}
\begin{small}
\ax{$\varphi \seqder{1} \rho$ \ \ $\rho \seqder{2} \varphi$} 
\ax{$\varphi \seqder{3} \eta$ \ \ $\eta \seqder{6} \xi$}
\dottedLine
\uinf{$\varphi \seq \xi$}
\ax{$\xi \seqder{7} \eta$ \ \ $\eta \seqder{4} \varphi$}
\dottedLine
\uinf{$\xi \seq \varphi$}
\ax{$\sigma, \psi \seqder{5} \vartheta$ \ \ $\chi, \vartheta \seqder{8}$}
\dottedLine
\uinf{$\sigma, \chi, \psi \seq$}
\rlab{\rulecbd}
\qinf{$\G, \G', \rho \cb \sigma, \xi \cb \chi, \varphi \cd \psi \seq \D$}
\disp
\end{small}
\end{center}
\qed
\end{proof}

\subsection{Cut admissibility in $\SCCKMP$}\label{app:cut elim ext}

\begin{proposition}
The following rule is admissible in $\SCCKMP$,
where $\Phi \cb \Psi = \varphi_1 \cb \psi_1, ..., \varphi_n \cb \psi_n$ and $n \geq 1$:
%
\begin{center}
\begin{small}
\ax{$\{\G, \Phi \cb \Psi \seq \varphi_i\}_{i \leq n}$} 
\ax{$\G, \Phi \cb \Psi, \psi_1, ..., \psi_n \seq \D$}
\llab{\rulempboxstar}
\binf{$\G, \Phi \cb \Psi  \seq \D$}
\disp.
\end{small}
\end{center}
\end{proposition}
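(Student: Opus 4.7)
I would proceed by induction on $n \geq 1$, the number of principal $\cb$-formulas in the rule. In the base case $n = 1$, the rule \rulempboxstar\ coincides exactly with the primitive rule \rulempbox\ of $\SCCKMP$, so admissibility is immediate.

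For the inductive step, assume $n \geq 2$ and that \rulempboxstar\ is admissible for any choice of $n-1$ principal $\cb$-formulas. Given derivations of $\G, \Phi \cb \Psi \seq \varphi_i$ for each $i \leq n$ and of $\G, \Phi \cb \Psi, \psi_1, \ldots, \psi_n \seq \D$, I would apply the primitive rule \rulempbox\ to the distinguished formula $\varphi_1 \cb \psi_1$, treating the remaining conditionals $\varphi_2 \cb \psi_2, \ldots, \varphi_n \cb \psi_n$ as part of the (left) context. The left premise of this application is exactly $\G, \Phi \cb \Psi \seq \varphi_1$, which is hypothesis. For the right premise $\G, \Phi \cb \Psi, \psi_1 \seq \D$, I would invoke the induction hypothesis with the $n-1$ principal formulas $\varphi_2 \cb \psi_2, \ldots, \varphi_n \cb \psi_n$ and with enlarged left context $\G, \varphi_1 \cb \psi_1, \psi_1$. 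The required premises are: for each $i \in \{2, \ldots, n\}$, the sequent $\G, \Phi \cb \Psi, \psi_1 \seq \varphi_i$, obtained from the given $\G, \Phi \cb \Psi \seq \varphi_i$ by height-preserving admissibility of $\lwk$ (Theorem~\ref{th:cut elim CCK}); and the sequent $\G, \Phi \cb \Psi, \psi_1, \psi_2, \ldots, \psi_n \seq \D$, which is precisely the given right premise (multisets being insensitive to order). The induction hypothesis then yields the desired right premise of \rulempbox, and one application of \rulempbox\ concludes.

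There is no real obstacle here: \rulempboxstar\ is a straightforward iterated form of \rulempbox, and every tool the construction uses — the primitive rule \rulempbox, height-preserving admissibility of $\lwk$, and the inductive hypothesis — has already been established. The only minor point requiring care is to make the induction step peel off a single $\cb$-formula while augmenting the context with the corresponding $\psi_i$, so that the weakened premises match exactly the premises expected by the inductive hypothesis.
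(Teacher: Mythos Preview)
Your proof is correct and takes essentially the same approach as the paper: both arguments iterate the primitive rule \rulempbox\ $n$ times, using height-preserving admissibility of $\lwk$ to adjust the left premises. The paper unrolls the iteration explicitly (applying \rulempbox\ to $\varphi_1\cb\psi_1$ at the top and to $\varphi_n\cb\psi_n$ at the bottom), while you package the same construction as an induction on $n$ with the order reversed; the only minor slip is that the relevant admissibility of $\lwk$ for $\SCCKMP$ is stated in Theorem~\ref{th:cut elim SCCKstar} rather than Theorem~\ref{th:cut elim CCK}.
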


\begin{proof}
The rule \rulempboxstar\ can be shown admissible as follows, where ($A)$ stays for the sequent $\G, \Phi \cb \Psi, \psi_1, ..., \psi_n \seq \D$.
First, for all $1 \leq i \leq n-1$, we obtain
\begin{center}
\begin{small}
\ax{$\G, \Phi \cb \Psi \seq \varphi_i$} 
\doubleLine
\llab{$\lwk$}
\uinf{$\G, \Phi \cb \Psi, \psi_{i+1}, ..., \psi_n \seq \varphi_i$ \ ($B_i$)} 
\disp
\end{small}
\end{center}
Then:
\begin{center}
\begin{small}
\ax{$\G, \Phi \cb \Psi \seq \varphi_n$} 
\ax{$\vdots$}
\ax{($B_2$)}
\ax{($B_1$)}
\ax{($A$)}
\rlab{\rulempbox}
\binf{$\G, \Phi \cb \Psi, \psi_2, ..., \psi_n \seq \D$}
\rlab{\rulempbox}
\binf{$\G, \Phi \cb \Psi, \psi_3, ..., \psi_n \seq \D$}
\noLine
\uinf{$\vdots$}
\binf{$\G, \Phi \cb \Psi, \psi_n \seq \D$}
\rlab{\rulempbox}
\binf{$\G, \Phi \cb \Psi \seq \D$}
\disp
\end{small}
\end{center}
\qed
\end{proof}

\begin{theorem}
The rule $\cut$ is admissible in $\SCCKMP$.
\end{theorem}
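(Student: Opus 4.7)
The plan is to extend the proof of Theorem~\ref{appth:cut elim SCCK}, using the same induction on lexicographically ordered pairs $(c, h)$, where $c$ is the weight of the cut formula and $h$ is the sum of the heights of the derivations of the two premisses of $\cut$. Non-principal cases involving the new rules \rulempbox\ and \rulempdiam\ are routine: if the cut formula is parametric in the last rule applied in either premiss, it already appears in the relevant premisses of that rule with strictly smaller height, so the cut can be pushed upward by the inductive hypothesis and the rule reapplied.

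The genuinely new principal subcases are: \textbf{(A)} left rule \rulecb, right rule \rulempbox, cut formula $\varphi \cb \psi$; \textbf{(B)} left rule \rulempdiam, right rule \rulecd, cut formula $\varphi \cd \psi$; and \textbf{(C)} left rule \rulempdiam, right rule \rulecbd, cut formula $\varphi \cd \psi$. The main obstacle is case (A), where a naive attempt to cut on $\varphi \cb \psi$ leaves the multiset $\rho_1 \cb \sigma_1, \dots, \rho_n \cb \sigma_n$ inherited from the left \rulecb\ conclusion still to be disposed of. The key tool is the admissible rule \rulempboxstar. Setting $\G^* = \G, \G', \rho_1 \cb \sigma_1, \dots, \rho_n \cb \sigma_n$, I plan to derive $\G^* \seq \D$ by a single application of \rulempboxstar\ on $\G^*$, whose premisses are obtained as follows. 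For each $i \leq n$, cut the left \rulecb\ conclusion with the first \rulempbox\ premiss $\G', \varphi \cb \psi \seq \varphi$ on $\varphi \cb \psi$ (same weight but strictly smaller height sum, so the inductive hypothesis applies) to get $\G^* \seq \varphi$, and then cut with the \rulecb\ premiss $\varphi \seq \rho_i$ on $\varphi$ (strictly smaller weight) to obtain $\G^* \seq \rho_i$. The main premiss $\G^*, \sigma_1, \dots, \sigma_n \seq \D$ is obtained analogously: cut the left \rulecb\ conclusion with the second \rulempbox\ premiss $\G', \varphi \cb \psi, \psi \seq \D$ on $\varphi \cb \psi$ (smaller height sum) to obtain $\G^*, \psi \seq \D$, then cut with $\sigma_1, \dots, \sigma_n \seq \psi$ on $\psi$ (smaller weight). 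The degenerate case $n = 0$ does not require \rulempboxstar\ and is handled by these two cuts directly.

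Cases (B) and (C) are then easier. For (B), I derive the conclusion $\G^* \seq \eta \cd \vartheta$ by \rulempdiam: the premiss $\G^* \seq \eta$ follows by cutting $\G \seq \varphi$ (from the left \rulempdiam) with $\varphi \seq \eta$ (from the right \rulecd) on $\varphi$ and weakening; the premiss $\G^* \seq \vartheta$ is derived by \rulempboxstar, whose $\rho_i$-premisses come from cuts of $\G \seq \varphi$ with $\varphi \seq \rho_i$ on $\varphi$ plus weakening, and whose main premiss $\G^*, \sigma_1, \dots, \sigma_n \seq \vartheta$ comes from cutting $\G \seq \psi$ with the \rulecd\ premiss $\sigma_1, \dots, \sigma_n, \psi \seq \vartheta$ on $\psi$ plus weakening. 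Case (C) is analogous but avoids the outer \rulempdiam: $\G^* \seq \D$ follows directly by \rulempboxstar, where the main premiss is obtained by cutting the \rulecbd\ premiss $\sigma_1, \dots, \sigma_n, \psi \seq$ with $\G \seq \psi$ on $\psi$ and applying $\rwk$. Throughout, every cut introduced by these reductions is either on a formula of strictly smaller weight or has strictly smaller height sum, so the induction carries through.
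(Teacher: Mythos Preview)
Your proposal is correct and follows essentially the same approach as the paper: the same lexicographic induction, the same identification of the three new principal subcases, and the same use of the admissible rule \rulempboxstar\ to dispose of the inherited $\rho_i \cb \sigma_i$ formulas after reducing the cuts to smaller weight or height. The only cosmetic differences are that in case (C) you apply $\rwk$ before \rulempboxstar\ rather than after, and you make the degenerate case $n=0$ explicit; neither affects the argument.
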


\begin{proof}
By extending the proof for $\SCCK$ with the cases involving the rules \rulempbox\ and \rulempdiam.
We only show that cases where the cut formula is principal in the last rules applied in the derivations of both premisses of $\cut$.

\bigskip
$\bullet$ The last rules applied in the derivations of the premisses of $\cut$ are, respectively, \rulecb\ and \rulempbox:
\begin{center}
\begin{small}

\ax{$\varphi \seqder{1} \rho_1$ \quad $\rho_1 \seqder{2} \varphi$ \quad ... \quad $\varphi \seqder{2n - 1} \rho_n$ \quad $\rho_n \seqder{2n} \varphi$ \quad $\sigma_1, ..., \sigma_n \seqder{2n + 1} \psi$}
\rlab{\rulecb}
\uinf{$\G, \rho_1 \cb \sigma_1, ..., \rho_n \cb \sigma_n \seq \varphi \cb \psi$}
\disp

\vspace{0.3cm}
\ax{$\G', \varphi \cb \psi \seqder{2n + 2} \varphi$}
\ax{$\G', \varphi \cb \psi, \psi \seqder{2n + 3} \D$}
\rlab{\rulempbox}
\binf{$\G', \varphi \cb \psi \seq \D$}
\disp
\end{small}
\end{center}

\noindent
The consequence $\G, \G', \rho_1 \cb \sigma_1, ..., \rho_n \cb \sigma_n \seq \D$ of $\cut$ can be derived as follows,
where $P \cb \Sigma = \rho_1 \cb \sigma_1, ..., \rho_n \cb \sigma_n$.
First, for all $1 \leq i \leq n$, we obtain:
\begin{center}
\begin{small}
\ax{$\G, P \cb \Sigma \seq \varphi \cb \psi$}
\ax{$\G', \varphi \cb \psi \seqder{2n + 2} \varphi$}
\dottedLine
\binf{$\G, \G', P \cb \Sigma \seq \varphi$}
\ax{$\varphi \seqder{2i - 1} \rho_i$}
\dottedLine
\binf{$\G, \G', P \cb \Sigma \seq \rho_i$ \ \ ($A_i$)}
\disp
\end{small}
\end{center}
Then:
\begin{center}
\begin{small}
\ax{$\sigma_1, ..., \sigma_n \seqder{2n + 1} \psi$}
\ax{$\G, P \cb \Sigma \seq \varphi \cb \psi$}
\ax{$\G', \varphi \cb \psi, \psi \seqder{2n + 3} \D$}
\dottedLine
\binf{$\G, \G', P \cb \Sigma, \psi \seq \D$}
\dottedLine
\binf{$\G, \G', P \cb \Sigma, \sigma_1, ..., \sigma_n \seq \D$ \ \ ($B$)}
\disp
\end{small}
\end{center}
Finally:
\begin{center}
\begin{small}
\ax{($A_1$) \quad ... \quad ($A_n$) \quad ($B$)}
\rlab{\rulempboxstar}
\uinf{$\G, \G', P \cb \Sigma \seq \D$}
\disp
\end{small}
\end{center}

\bigskip
$\bullet$ The last rules applied in the derivations of the premisses of $\cut$ are, respectively, \rulempdiam\ and \rulecd:
\begin{center}
\begin{small}
\ax{$\G \seqder{1} \varphi$} 
\ax{$\G \seqder{2} \psi$}
\rlab{\rulempdiam}
\binf{$\G \seq \varphi \cd \psi$}
\disp

\vspace{0.3cm}
\ax{$\{\varphi \sseqder{2i + 1}{2i + 2} \rho_i \}_{i \leq n}$}
\ax{$\varphi \sseqder{2n + 3}{2n + 4} \eta$}
\ax{$\sigma_1, ..., \sigma_n, \psi \seqder{2n + 5} \vartheta$}
\rlab{\rulecd}
\tinf{$\G', \rho_1 \cb \sigma_1, ..., \rho_n \cb \sigma_n, \varphi \cd \psi \seq \eta \cd \vartheta$}
\disp
\end{small}
\end{center}
The consequence $\G, \G', \rho_1 \cb \sigma_1, ..., \rho_n \cb \sigma_n \seq \eta \cd \vartheta$ of $\cut$ can be derived as follows,
where $P \cb \Sigma = \rho_1 \cb \sigma_1, ..., \rho_n \cb \sigma_n$.
\begin{center}
\begin{small}
\ax{$\G \seqder{1} \varphi$} 
\ax{$\varphi \seqder{2i + 1} \rho_i$}
\llab{$\big\{$}
\rlab{$\big\}_{i \leq n}$}
\dottedLine
\binf{$\G \seq \rho_i$} 
\doubleLine
\rlab{$\lwk$}
\uinf{$\G, P \cb \Sigma \seq \rho_i$} 
\ax{$\G \seqder{2} \psi$}
\ax{$\sigma_1, ..., \sigma_n, \psi \seqder{2n + 5} \vartheta$}
\dottedLine
\binf{$\G, \sigma_1, ..., \sigma_n \seq \vartheta$}
\doubleLine
\rlab{$\lwk$}
\uinf{$\G, P \cb \Sigma, \sigma_1, ..., \sigma_n \seq \vartheta$}
\rlab{\rulempboxstar}
\binf{$\G, P \cb \Sigma \seq \vartheta$ \ \ ($A$)} 
\disp

\vspace{0.3cm}
\ax{$\G \seqder{1} \varphi$} 
\ax{$\varphi \seqder{2n + 3}\eta$}
\dottedLine
\binf{$\G \seq \eta$}
\doubleLine
\rlab{$\lwk$}
\uinf{$\G, P \cb \Sigma \seq \eta$}
\ax{($A$)} 
\rlab{\rulempdiam}
\binf{$\G, P \cb \Sigma \seq \eta \cd \vartheta$}
\disp
\end{small}
\end{center}

\bigskip
$\bullet$ The last rules applied in the derivations of the premisses of $\cut$ are, respectively, \rulempdiam\ and \rulecbd:

\begin{center}
\begin{small}
\ax{$\G \seqder{1} \varphi$} 
\ax{$\G \seqder{2} \psi$}
\rlab{\rulempdiam}
\binf{$\G \seq \varphi \cd \psi$}
\disp
\hfill
\ax{$\{\varphi \sseqder{2i + 1}{2i + 2} \rho_i \}_{i \leq n}$}
\ax{$\sigma_1, ..., \sigma_n, \psi \seqder{2n + 3}$}
\rlab{\rulecbd}
\binf{$\G', \rho_1 \cb \sigma_1, ..., \rho_n \cb \sigma_n, \varphi \cd \psi \seq \D$}
\disp
\end{small}
\end{center}

\noindent
The consequence $\G, \G', \rho_1 \cb \sigma_1, ..., \rho_n \cb \sigma_n \seq \D$ of $\cut$ can be derived as follows,
where $P \cb \Sigma = \rho_1 \cb \sigma_1, ..., \rho_n \cb \sigma_n$.

\begin{center}
\begin{small}
\ax{$\G \seqder{1} \varphi$} 
\ax{$\varphi \seqder{2i + 1} \rho_i$}
\llab{$\big\{$}
\rlab{$\big\}_{i \leq n}$}
\dottedLine
\binf{$\G \seq \rho_i$} 
\doubleLine
\rlab{$\lwk$}
\uinf{$\G, P \cb \Sigma \seq \rho_i$} 
\ax{$\G \seqder{2} \psi$}
\ax{$\sigma_1, ..., \sigma_n, \psi \seqder{2n + 3}$}
\dottedLine
\binf{$\G, \sigma_1, ..., \sigma_n \seq$}
\doubleLine
\rlab{$\lwk$}
\uinf{$\G, P \cb \Sigma, \sigma_1, ..., \sigma_n \seq$}
\rlab{\rulempboxstar}
\binf{$\G, P \cb \Sigma \seq$}
\doubleLine
\rlab{$\rwk$}
\uinf{$\G, P \cb \Sigma \seq \D$}
\disp
\end{small}
\end{center}
\qed
\end{proof}

\subsection{Cut admissibility in $\SCCKID$}

\begin{theorem}
The rule $\cut$ is admissible in $\SCCKID$.
\end{theorem}
\begin{proof}
By modifying the proof for $\SCCK$, replacing \rulecb, \rulecd, \rulecbd\ with  \rulecbid, \rulecdid, \rulecbdid.
As before, we only show that cases where the cut formula is principal in the last rules applied in the derivations of both premisses of $\cut$.

\bigskip
$\bullet$ The last rules applied in the derivations of the premisses of $\cut$ are, respectively, \rulecbid\ and \rulecbid:
\begin{center}
\begin{small}
\ax{$\varphi \seqder{1} \rho$ \ $\rho \seqder{2} \varphi$ \ $\sigma, \varphi \seqder{3} \psi$}
\rlab{\rulecbid}
\uinf{$\G, \rho \cb \sigma \seq \varphi \cb \psi$}
\disp
\
\ax{$\varphi \seqder{4} \xi$ \ $\xi \seqder{5} \varphi$ \  $\eta \seqder{6} \xi$ \ $\xi \seqder{7} \eta$ \ $\psi, \vartheta, \xi \seqder{8} \chi$}
\rlab{\rulecbid}
\uinf{$\G', \varphi \cb \psi, \eta \cb \vartheta \seq \xi \cb \chi$}
\disp
\end{small}
\end{center}
The consequence $\G, \G', \rho \cb \sigma, \eta \cb \vartheta \seq \xi \cb \chi$  of $\cut$ can be derived as follows,
where dotted lines represent applications of the induction hypothesis:
%
\begin{center}
\begin{small}
\ax{$\xi \seqder{5} \varphi$}
\ax{$\sigma, \varphi \seqder{3} \psi$}
\ax{$\psi, \vartheta, \xi \seqder{8} \chi$}
\dottedLine
\binf{$\sigma, \varphi, \vartheta, \xi \seq \chi$}
\dottedLine
\binf{$\sigma, \xi, \vartheta, \xi \seq \chi$}
\rlab{$\ctr$}
\uinf{$\sigma, \vartheta, \xi \seq \chi$ \ ($A$)}
\disp

\vspace{0.3cm}
\ax{$\xi \seqder{5} \varphi$ \ \ $\varphi \seqder{1} \rho$}
\dottedLine
\uinf{$\xi \seq \rho$}
\ax{$\rho \seqder{2} \varphi$ \ \ $\varphi \seqder{4} \xi$}
\dottedLine
\uinf{$\rho \seq \xi$}
\ax{$\eta \seqder{6} \xi$ \ $\xi \seqder{7} \eta$} 
\ax{($A$)}
\rlab{\rulecbid}
\qinf{$\G, \G', \rho \cb \sigma, \eta \cb \vartheta \seq \xi \cb \chi$}
\disp
\end{small}
\end{center}

\bigskip
$\bullet$ The last rules applied in the derivations of the premisses of $\cut$ are, respectively, \rulecbid\ and \rulecdid:
\begin{center}
\begin{small}
\ax{$\varphi \seqder{1} \rho$ \ $\rho \seqder{2} \varphi$ \ $\sigma, \varphi \seqder{3} \psi$}
\rlab{\rulecbid}
\uinf{$\G, \rho \cb \sigma \seq \varphi \cb \psi$}
\disp
\
\ax{$\eta \seqder{4} \varphi$ \ $\varphi \seqder{5} \eta$ \  $\eta \seqder{6} \xi$ \ $\xi \seqder{7} \eta$ \ $\psi, \vartheta, \eta \seqder{8} \chi$}
\rlab{\rulecdid}
\uinf{$\G', \varphi \cb \psi, \eta \cd \vartheta \seq \xi \cd \chi$}
\disp
\end{small}
\end{center}
The consequence $\G, \G', \rho \cb \sigma, \eta \cd \vartheta \seq \xi \cd \chi$  of $\cut$ can be derived as follows,
where dotted lines represent applications of the induction hypothesis:

\begin{center}
\begin{small}
\ax{$\eta \seqder{4} \varphi$}
\ax{$\sigma, \varphi \seqder{3} \psi$}
\ax{$\psi, \vartheta, \eta \seqder{8} \chi$}
\dottedLine
\binf{$\sigma, \varphi, \vartheta, \eta \seq \chi$}
\dottedLine
\binf{$\sigma, \eta, \vartheta, \eta \seq \chi$}
\rlab{$\ctr$}
\uinf{$\sigma, \eta, \vartheta \seq \chi$ \ ($A$)}
\disp

\vspace{0.3cm}
\ax{$\eta \seqder{4} \varphi$ \ \ $\varphi \seqder{1} \rho$}
\dottedLine
\uinf{$\eta \seq \rho$}
\ax{$\rho \seqder{2} \varphi$ \ \ $\varphi \seqder{5} \eta$}
\dottedLine
\uinf{$\rho \seq \eta$}
\ax{$\eta \seqder{6} \xi$ \ $\xi \seqder{7} \eta$} 
\ax{($A$)}
\rlab{\rulecdid}
\qinf{$\G, \G', \rho \cb \sigma, \eta \cd \vartheta \seq \xi \cd \chi$}
\disp
\end{small}
\end{center}

$\bullet$ The last rules applied in the derivations of the premisses of $\cut$ are, respectively, \rulecbid\ and \rulecbdid:

\begin{center}
\begin{small}
\ax{$\varphi \seqder{1} \rho$ \ \ $\rho \seqder{2} \varphi$ \ \ $\sigma, \varphi \seqder{3} \psi$}
\rlab{\rulecbid}
\uinf{$\G, \rho \cb \sigma \seq \varphi \cb \psi$}
\disp
\quad
\ax{$\eta \seqder{4} \varphi$ \ \ $\varphi \seqder{5} \eta$ \ \ $\psi, \vartheta, \eta \seqder{6}$}
\rlab{\rulecbdid}
\uinf{$\G', \varphi \cb \psi, \eta \cd \vartheta \seq \D$}
\disp
\end{small}
\end{center}

The consequence $\G, \G', \rho \cb \sigma, \eta \cd \vartheta \seq \D$ of $\cut$ can be derived as follows,
where dotted lines represent applications of the induction hypothesis:

\begin{center}
\begin{small}
\ax{$\eta \seqder{4} \varphi$ \ \ $\varphi \seqder{1} \rho$}
\dottedLine
\uinf{$\eta \seq \rho$}
\ax{$\rho \seqder{2} \varphi$ \ \ $\varphi \seqder{5} \eta$}
\dottedLine
\uinf{$\rho \seq \eta$}
\ax{$\eta \seqder{4} \varphi$}
\ax{$\sigma, \varphi \seqder{3} \psi$}
\ax{$\psi, \vartheta, \eta \seqder{6}$}
\dottedLine
\binf{$\sigma, \varphi, \vartheta, \eta \seq$}
\dottedLine
\binf{$\sigma, \eta, \vartheta, \eta \seq$}
\rlab{$\ctr$}
\uinf{$\sigma, \eta, \vartheta \seq$}
\rlab{\rulecbdid}
\tinf{$\G, \G', \rho \cb \sigma, \eta \cd \vartheta \seq \xi \cd \chi$}
\disp
\end{small}
\end{center}

$\bullet$ The remaining cases \rulecdid\ - \rulecdid\ and \rulecdid\ - \rulecbdid\ are similar. 
\qed
\end{proof}

\subsection{Cut admissibility in $\SCCKCEM$}
\label{app:cutel:cem}
\begin{theorem}
The rule $\cut$ is admissible in $\SCCKID$.
\end{theorem}
\begin{proof}
The proof is almost identical to that of $\SCCK$, with every occurrence of a $\cd$-formula in the left-hand-side of a sequent 
replaced with $k \geq 1$ $\cd$-formulas.
As an example, suppose that the cut formula is principal in the last rules applied in the derivations of both premisses of $\cut$, and the last rule applied are, respectively, \rulecdcem\ and \rulecbdcem:
\begin{center}
\begin{small}
\ax{$\varphi _1\seqder{1} \rho$ \ \ $\rho \seqder{2} \varphi$ \ \ $\varphi \seqder{3} \eta_1$ \ \ $\eta_1 \seqder{4} \varphi$ \ \ $\sigma, \psi \seqder{5} \vartheta_1$}
\rlab{\rulecdcem}
\uinf{$\G, \rho \cb \sigma, \varphi \cd \psi, \seq \eta_1 \cd \vartheta_1$}
\disp

\vspace{0.3cm}
\ax{$\eta_1 \seqder{6} \xi$ \ \ $\xi \seqder{7} \eta_1$ \ \ $\eta_1 \seqder{8} \eta_2$ \ \ $\eta_2 \seqder{9} \eta_1$ \ \ $\chi, \vartheta_1, \vartheta_2 \seqder{10}$}
\rlab{\rulecbdcem}
\uinf{$\G', \xi \cb \chi, \eta_1 \cd \vartheta_1, \eta_2 \cd \vartheta_2 \seq \D$}
\disp
\end{small}
\end{center}
The consequence $\G, \G', \rho \cb \sigma, \xi \cb \chi, \varphi \cd \psi, \eta_2 \cd \vartheta_2 \seq \D$  of $\cut$ can be derived as follows,
where dotted lines represent applications of the induction hypothesis:
%
\begin{center}
\begin{small}
\ax{$\varphi \seqder{3} \eta_1$ \ \ $\eta_1 \seqder{6} \xi$}
\dottedLine
\uinf{$\varphi \seq \xi$ \ ($A$)}
\disp
\hfill
\ax{$\xi \seqder{7} \eta_1$ \ \ $\eta_1 \seqder{4} \varphi$}
\dottedLine
\uinf{$\xi \seq \varphi$ \ ($B$)}
\disp
\hfill
\ax{$\eta_2 \seqder{9} \eta_1$ \ \ $\eta_1 \seqder{4} \varphi$}
\dottedLine
\uinf{$\eta_2 \seq \varphi$ \ ($C$)}
\disp
\hfill
\ax{$\varphi \seqder{3} \eta_1$ \ \ $\eta_1 \seqder{8} \eta_2$}
\dottedLine
\uinf{$\varphi \seq \eta_2$ \ ($D$)}
\disp

\vspace{0.3cm}
\ax{$\varphi \seqder{1} \rho$ \ \ \ $\rho \seqder{2} \varphi$
\ \ \ ($A$) \ \ \ ($B$) \ \ \ ($C$) \ \ \ ($D$)}
\ax{$\sigma, \psi \seqder{5} \vartheta_1$ \ \ $\chi, \vartheta_1, \vartheta_2 \seqder{10}$}
\dottedLine
\uinf{$\sigma, \chi, \psi, \vartheta_2 \seq$}
\rlab{\rulecbdcem}
\binf{$\G, \G', \rho \cb \sigma, \xi \cb \chi, \varphi \cd \psi, \eta_2 \cd \vartheta_2 \seq \D$}
\disp
\end{small}
\end{center}
\qed
\end{proof}

\section{Derivations in sequent calculi}\label{app:der SCCKstar}
\subsection{Derivations in $\SCCK$}\label{app:der SCCK}

1. The sequent $\G, \varphi \seq \varphi$ is derivable for every $\varphi\in\lan$. 
By induction on the construction of $\varphi$.
For example, if $\varphi = \rho\cd\sigma$:
\medskip
\begin{center}
\ax{}
\dottedLine
\llab{$i.h.$}
\uinf{$\rho \seq \rho$}
\ax{}
\dottedLine
\llab{$i.h.$}
\uinf{$\rho \seq \rho$}
\ax{}
\dottedLine
\llab{$i.h.$}
\uinf{$\sigma \seq \sigma$}
\rlab{\rulecd}
\tinf{$\G, \rho \cd \sigma \seq \rho \cd \sigma$}
\disp
\end{center}

\bigskip
\noindent
2. Derivation of \axCMbox:
\begin{center}
\ax{$\varphi \sseq \varphi$}
\ax{$\psi \land \chi \seq \psi$}
\llab{\rulecb}
\binf{$\varphi \cb \psi \land \chi \seq \varphi \cb \psi$}
\ax{$\varphi \sseq \varphi$}
\ax{$\psi \land \chi \seq \chi$}
\rlab{\rulecb}
\binf{$\varphi \cb \psi \land \chi \seq \varphi \cb \chi$}
\rlab{\rland}
\binf{$\varphi \cb \psi \land \chi \seq (\varphi \cb \psi) \land (\varphi \cb \chi)$}
\rlab{\rimp}
\uinf{$\seq (\varphi \cb \psi \land \chi) \imp (\varphi \cb \psi) \land (\varphi \cb \chi)$}
\disp
\end{center}

\bigskip
\noindent
3. Derivation of \axCCbox:
\begin{center}
\ax{$\varphi \sseq \varphi$}
\ax{$\psi, \chi \seq \psi \land \chi$}
\rlab{\rulecb}
\binf{$\varphi \cb \psi, \varphi \cb \chi \seq \varphi \cb \psi \land \chi$}
\rlab{\lland}
\uinf{$(\varphi \cb \psi) \land (\varphi \cb \chi) \seq \varphi \cb \psi \land \chi$}
\rlab{\rimp}
\uinf{$\seq (\varphi \cb \psi) \land (\varphi \cb \chi) \imp (\varphi \cb \psi \land \chi)$}
\disp
\end{center}


\bigskip
\noindent
4. Derivation of \axCNbox:
\begin{center}
\ax{$\seq \top$}
\rlab{\rulecb}
\uinf{$\seq \varphi \cb \top$}
\disp
\end{center}

\bigskip
\noindent
5. Derivation of \axCNdiam:
\begin{center}
\ax{$\bot \seq$}
\rlab{\rulecd}
\uinf{$\varphi \cd \bot \seq \bot$}
\rlab{\rimp}
\uinf{$\seq (\varphi \cd \bot) \imp \bot$}
\disp
\end{center}

\bigskip
\noindent
6. Derivation of \axCKdiam:
\begin{center}
\ax{$\varphi \sseq \varphi$}
\ax{$\psi \imp \chi, \psi \seq \chi$}
\rlab{\rulecd}
\binf{$\varphi \cb (\psi \imp \chi), \varphi \cd \psi \seq \varphi \cd \chi$}
\rlab{\rimp}
\uinf{$\varphi \cb (\psi \imp \chi) \seq (\varphi \cd \psi) \imp (\varphi \cd \chi)$}
\rlab{\rimp}
\uinf{$\seq (\varphi \cb (\psi \imp \chi)) \imp ((\varphi \cd \psi) \imp (\varphi \cd \chi))$}
\disp
\end{center}

\bigskip
\noindent
7. Admissibility of \RAbox\ (similar proofs for \RAdiam, \RCbox, \RCdiam):
\begin{center}\begin{small}
\ax{$\seq (\varphi \imp \rho) \land (\rho \imp \varphi)$}
\dottedLine
\llab{$inv$\rland}
\uinf{$\seq \varphi \imp \rho$}
\dottedLine
\llab{$inv$\rimp}
\uinf{$\varphi \seq \rho$}
\ax{$\seq (\varphi \imp \rho) \land (\rho \imp \varphi)$}
\dottedLine
\llab{$inv$\rland}
\uinf{$\seq \rho \imp \varphi$}
\dottedLine
\llab{$inv$\rimp}
\uinf{$\rho \seq \varphi$}
\ax{$\psi \seq \psi$}
\rlab{\rulecb}
\tinf{$\varphi \cb \psi \seq \rho \cb \psi$}
\rlab{\rimp}
\uinf{$\seq (\varphi \cb \psi) \imp (\rho \cb \psi)$}
\disp
\end{small}
\end{center}

\subsection{Derivations in $\SCCKID$}

Derivation of \axIDbox:
\begin{center}
\ax{$\varphi \seq \varphi$}
\rlab{\rulecbid}
\uinf{$\seq \varphi \cb \varphi$}
\disp
\end{center}

\subsection{Derivations in $\SCCKMP$}

1. Derivation of \axMPbox:
\begin{center}
\ax{$\varphi \cb \psi, \varphi \seq \varphi$}
\ax{$\varphi \cb \psi, 	\psi \seq \psi$}
\rlab{\rulempbox}
\binf{$\varphi \cb \psi, \varphi \seq \psi$}
\rlab{\rimp}
\uinf{$\varphi \cb \psi \seq \varphi \imp \psi$}
\rlab{\rimp}
\uinf{$\seq (\varphi \cb \psi) \imp (\varphi \imp \psi)$}
\disp
\end{center}

\bigskip
\noindent
2. Derivation of \axMPdiam:
\begin{center}
\ax{$\varphi \seq \varphi$}
\ax{$\psi \seq \psi$}
\rlab{\rulempdiam}
\binf{$\varphi, \psi \seq \varphi \cd \psi$}
\rlab{\lland}
\uinf{$\varphi \land \psi \seq \varphi \cd \psi$}
\rlab{\rimp}
\uinf{$\seq \varphi \land \psi \imp (\varphi \cd \psi)$}
\disp
\end{center}

\subsection{Derivations in $\SCCKCEM$}\label{app:der CEM}

Derivation of \axCEMdiam:
\begin{center}
\ax{$\varphi \sseq \varphi$}
\ax{$\varphi \sseq \varphi$}
\ax{$\psi, \chi \seq \psi \land \chi$}
\rlab{\rulecdcem}
\tinf{$\varphi \cd \psi, \varphi \cd \chi \seq \varphi \cd \psi \land \chi$}
\rlab{\lland}
\uinf{$(\varphi \cd \psi) \land (\varphi \cd \chi) \seq \varphi \cd \psi \land \chi$}
\rlab{\rimp}
\uinf{$\seq (\varphi \cd \psi) \land (\varphi \cd \chi) \imp (\varphi \cd \psi \land \chi)$}
\disp
\end{center}

\section{Derivations in axiomatic systems}\label{app:der CCKstar}
\subsection{Derivations in $\CCK$}\label{app:der CCK}

\noindent
1. Derivation of \RMbox:

\begin{center}
\begin{tabular}{lr}
$\psi \imp \vartheta$ \\
$\psi \biimp \psi \land \vartheta$ & (1) \\
$(\varphi \cb \psi) \imp (\varphi \cb \psi \land \vartheta)$ \ \ & (2, \RCbox) \\
$(\varphi \cb \psi) \imp (\varphi \cb \vartheta)$ & (3, \axCMbox) \\
\end{tabular} 
\end{center}

\bigskip
\noindent
2. Derivation of \RMdiam:

\begin{center}
\begin{tabular}{lr}
$\psi \imp \vartheta$ \\
$\psi \biimp \psi \land \vartheta$ & (1) \\
$(\varphi \cb \psi) \imp (\varphi \cb \psi \land \vartheta)$ \ \ & (2, \RCbox) \\
$(\varphi \cb \psi) \imp (\varphi \cb \vartheta)$ & (3, \axCMbox) \\
\end{tabular} 
\end{center}

\bigskip
\noindent
3. Derivation of \axCW:

\begin{center}
\begin{tabular}{lr}
$\chi \imp (\psi \imp \psi \land \chi)$ \\
$(\varphi \cb \chi) \imp (\varphi \cb \psi \imp \psi \land \chi)$ & (1, \RMbox) \\
$(\varphi \cb \psi \imp \psi \land \chi) \imp (\varphi \cd \psi \imp \varphi \cd \psi \land \chi)$ \ \ & (\axCKdiam) \\
$(\varphi \cd \psi)\land(\varphi\cb\chi) \imp (\varphi\cd \psi\land\chi)$ & (2,3) \\
\end{tabular}
\end{center}

\bigskip
\noindent
4. Derivation of the rule
\begin{center}
\ax{$\varphi \biimp \rho_1$}
\ax{$\varphi \biimp \rho_2$}
\ax{$\varphi \biimp \eta$}
\ax{$\sigma_1 \land \sigma_2 \land \psi \imp \vartheta$}
\rlab{(\rulecd)}
\qinf{$(\rho_1 \cb \sigma_1) \land (\rho_2 \cb \sigma_2) \land (\varphi \cd \psi) \imp (\eta \cd \vartheta)$}
\disp:
\end{center}

\begin{center}
\begin{small}
\begin{tabular}{lr}
$\varphi \biimp \rho_1$ \\

$(\rho_1 \cb \sigma_1) \imp (\varphi \cb \sigma_1)$ & (1, \RAbox) \\

$\varphi \biimp \rho_2$ \\

$(\rho_2 \cb \sigma_2) \imp (\varphi \cb \sigma_2)$ & (3, \RAbox) \\

$(\rho_1 \cb \sigma_1) \land (\rho_2 \cb \sigma_2) \land (\varphi \cd \psi) \imp (\varphi \cb \sigma_1) \land (\varphi \cb \sigma_2) \land (\varphi \cd \psi)$ \ \ & (2, 4)\\

$(\varphi \cb \sigma_1) \land (\varphi \cb \sigma_2) \land (\varphi \cd \psi) \imp (\varphi \cd \sigma_1 \land \sigma_2 \land \psi)$ & (\axCCbox, \axCW)\\

$\sigma_1 \land \sigma_2 \land \psi \imp \vartheta$ \\

$(\varphi \cd \sigma_1 \land \sigma_2 \land \psi) \imp (\varphi \cd \vartheta)$ & (7, \RMdiam) \\

$\varphi \biimp \eta$ \\

$(\varphi \cd \vartheta) \imp (\eta \cd \vartheta)$ & (9, \RAdiam) \\

$(\rho_1 \cb \sigma_1) \land (\rho_2 \cb \sigma_2) \land (\varphi \cd \psi) \imp (\eta \cd \vartheta)$ & (5, 6, 8, 10)
\end{tabular}
\end{small}
\end{center}

\bigskip
\noindent
5. Derivations of Hilbert rules corresponding to \rulecb\ and \rulecbd\ are similar to \rulecd.

\subsection{Derivations in $\CCKID$}

Derivation of the rule
\begin{center}
\ax{$\varphi \biimp \rho_1$}
\ax{$\varphi \biimp \rho_2$}
\ax{$\varphi \biimp \eta$}
\ax{$\sigma_1 \land \sigma_2 \land \varphi \land \psi \imp \vartheta$}
\rlab{(\rulecdid)}
\qinf{$(\rho_1 \cb \sigma_1) \land (\rho_2 \cb \sigma_2) \land (\varphi \cd \psi) \imp (\eta \cd \vartheta)$}
\disp:
\end{center}

\begin{center}
\begin{small}
\begin{tabular}{lr}
$\varphi \biimp \rho_1$ \\

$(\rho_1 \cb \sigma_1) \imp (\varphi \cb \sigma_1)$ & (1, \RAbox) \\

$\varphi \biimp \rho_2$ \\

$(\rho_2 \cb \sigma_2) \imp (\varphi \cb \sigma_2)$ & (3, \RAbox) \\

$(\rho_1 \cb \sigma_1) \land (\rho_2 \cb \sigma_2) \land (\varphi \cd \psi) \imp (\varphi \cb \sigma_1) \land (\varphi \cb \sigma_2) \land (\varphi \cd \psi)$ \ \ & (2, 4)\\

$(\varphi \cb \sigma_1) \land (\varphi \cb \sigma_2) \land (\varphi \cd \psi) \imp (\varphi \cb \sigma_1) \land (\varphi \cb \sigma_2) \land (\varphi \cb \varphi) \land (\varphi \cd \psi)$ & (\axIDbox)\\

$(\varphi \cb \sigma_1) \land (\varphi \cb \sigma_2) \land (\varphi \cb \varphi) \land (\varphi \cd \psi) \imp (\varphi \cd \sigma_1 \land \sigma_2 \land \varphi \land \psi)$ & (\axCCbox, \axCW)\\

$\sigma_1 \land \sigma_2 \land \varphi \land \psi \imp \vartheta$ \\

$(\varphi \cd \sigma_1 \land \sigma_2 \land \varphi \land \psi) \imp (\varphi \cd \vartheta)$ & (8, \RMdiam) \\

$\varphi \biimp \eta$ \\

$(\varphi \cd \vartheta) \imp (\eta \cd \vartheta)$ & (10, \RAdiam) \\

$(\rho_1 \cb \sigma_1) \land (\rho_2 \cb \sigma_2) \land (\varphi \cd \psi) \imp (\eta \cd \vartheta)$ & (5, 7, 9, 11)
\end{tabular}
\end{small}
\end{center}

\subsection{Derivations in $\CCKMP$}

1. Derivation of the rule
\begin{center}
\ax{$(\varphi \cb \psi) \imp \varphi$}
\ax{$(\varphi \cb \psi) \land \psi \imp \chi$}
\rlab{(\rulempbox)}
\binf{$(\varphi \cb \psi) \imp \chi$}
\disp:
\end{center}

\begin{center}
\begin{tabular}{lr}
$(\varphi \cb \psi) \imp \varphi$ \\

$(\varphi \cb \psi) \imp \varphi \land (\varphi \imp \psi)$ \ \ & (1, \axIDbox) \\

$(\varphi \cb \psi) \imp \psi$ & (2) \\

$(\varphi \cb \psi) \land \psi \imp \chi$ \\

$(\varphi \cb \psi) \imp \chi$ & (3, 4) \\
\end{tabular}
\end{center}

\bigskip
\noindent
2. Derivation of the rule
\begin{center}
\ax{$\xi \imp \varphi$}
\ax{$\xi \imp \psi$}
\rlab{(\rulempdiam)}
\binf{$\xi \imp (\varphi \cd \psi)$}
\disp:
\end{center}

\begin{center}
\begin{tabular}{lr}
$\xi \imp \varphi$ \\

$\xi \imp \psi$ \\

$\xi \imp \varphi \land \psi$ & (1, 2) \\

$\xi \imp (\varphi \cd \psi)$ \ \ & (3, \axMPdiam) \\ 
\end{tabular}
\end{center}

\subsection{Derivations in $\CCKCEM$}
\label{app:sound:cem}
Derivation of the rule
\begin{center}
\begin{small}
\ax{$\varphi_1 \biimp \rho_1$}
\ax{$\varphi_1 \biimp \rho_2$}
\ax{$\varphi_1 \biimp \varphi_2$}
\ax{$\varphi_1 \biimp \eta$}
\ax{$\sigma_1 \land \sigma_2 \land \psi_1 \land \psi_2 \imp \vartheta$}
\rlab{(\rulecdcem)}
\QuinaryInfC{$(\rho_1 \cb \sigma_1) \land (\rho_2 \cb \sigma_2) \land (\varphi_1 \cd \psi_1) \land (\varphi_2 \cd \psi_2) \imp (\eta \cd \vartheta)$}
\disp:
\end{small}
\end{center}

\begin{center}
\begin{small}
\begin{tabular}{lr}
$\varphi_1 \biimp \rho_1$ \\

$(\rho_1 \cb \sigma_1) \imp (\varphi_1 \cb \sigma_1)$ & (1, \RAbox) \\

$\varphi_1 \biimp \rho_2$ \\

$(\rho_2 \cb \sigma_2) \imp (\varphi_1 \cb \sigma_2)$ & (3, \RAbox) \\

$\varphi_1 \biimp \varphi_2$ \\

$(\varphi_2 \cd \psi_2) \imp (\varphi_1 \cd \psi_2)$ & (5, \RAdiam) \\

$(\rho_1 \cb \sigma_1) \land (\rho_2 \cb \sigma_2) \land (\varphi_1 \cd \psi_1) \land (\varphi_2 \cd \psi_2) \imp$ \\

\hfill $(\varphi_1 \cb \sigma_1) \land (\varphi_1 \cb \sigma_2) \land (\varphi_1 \cd \psi_1) \land (\varphi_1 \cd \psi_2)$ \ \ & (2, 4, 6)\\

$(\varphi_1 \cb \sigma_1) \land (\varphi_1 \cb \sigma_2) \land (\varphi_1 \cd \psi_1) \land (\varphi_1 \cd \psi_2) \imp$ \\ 
\hfill $(\varphi_1 \cd \sigma_1 \land \sigma_2 \land \psi_1 \land \psi_2)$ \ \ & (\axCCbox, \axCW, \axCEMdiam)\\

$\sigma_1 \land \sigma_2 \land \psi_1 \land \psi_2 \imp \vartheta$ \\

$(\varphi_1 \cd \sigma_1 \land \sigma_2 \land \psi_1 \land \psi_2) \imp (\varphi_1 \cd \vartheta)$ & (9, \RMdiam) \\

$\varphi_1 \biimp \eta$ \\

$(\varphi_1 \cd \vartheta) \imp (\eta \cd \vartheta)$ & (11, \RAdiam) \\

$(\rho_1 \cb \sigma_1) \land (\rho_2 \cb \sigma_2) \land (\varphi_1 \cd \psi_1) \land (\varphi_2 \cd \psi_2) \imp (\eta \cd \vartheta)$ \qquad \qquad  & (7, 8, 10, 12)
\end{tabular}
\end{small}
\end{center}

\section{Semantic completeness of $\ConstCK$}\label{app:sem}

\begin{proposition}[Soundness]
For $\varphi\in\lan$, if $\varphi$ is derivable in $\CCK$, then $\varphi$ is valid in all \ccm s.
\end{proposition}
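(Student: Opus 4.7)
The plan is to proceed by induction on the length of a derivation of $\varphi$ in $\CCK$, showing that each axiom is valid in every \ccm\ and that each rule preserves validity. Before tackling the conditional principles, I would first establish the \emph{hereditary property} for \ccm s, namely that if $w \lessc w'$ and $\M, w \Vd \varphi$, then $\M, w' \Vd \varphi$. The proof is by induction on the construction of $\varphi$: the atomic case uses monotonicity of $\V$, the propositional cases use transitivity of $\lessc$ (as in standard Kripke semantics for $\IPL$), and the cases for $\cb$ and $\cd$ are immediate from the fact that clauses $(G\cb)$ and $(G\cd)$ already universally quantify over $\lessc$-successors, so transitivity of $\lessc$ does the job directly. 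Validity of the $\IPL$ axioms and soundness of modus ponens are then standard, given hereditariness.

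Next I would verify the four congruence rules \RAbox, \RCbox, \RAdiam, \RCdiam. The crucial observation is that if $\varphi \biimp \rho$ is valid in every \ccm\ then, in any \ccm\ $\M$, $\truthset{\varphi} = \truthset{\rho}$, whence $\Rof{\truthset{\varphi}} = \Rof{\truthset{\rho}}$. Applying the clauses $(G\cb)$ and $(G\cd)$, the satisfaction conditions for $\varphi \cc \psi$ and $\rho \cc \psi$ coincide, so the two formulas are equivalent at every world; the rules \RCbox\ and \RCdiam\ are even simpler since only the consequent position changes, and the universal/existential quantification over successor worlds is then directly affected by the pointwise equivalence of $\psi$ and $\chi$.

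For the conditional axioms, \axCMbox\ and \axCCbox\ are immediate from the fact that $(G\cb)$ quantifies universally and satisfaction commutes with $\AND$; \axCNbox\ is trivial since every world satisfies $\top$; and \axCNdiam\ follows because if some $w' \more w$ satisfied $\varphi \cd \bot$, then taking $w'' = w'$ in clause $(G\cd)$ would produce a world satisfying $\bot$, a contradiction, so $\neg(\varphi \cd \bot)$ holds. The most delicate case is \axCKdiam: assume $w \Vd \varphi \cb (\psi \IMP \chi)$ and let $w' \more w$ with $w' \Vd \varphi \cd \psi$; to show $w' \Vd \varphi \cd \chi$, pick any $w'' \more w'$, use $(G\cd)$ to obtain $v$ with $w'' \Rof{\truthset{\varphi}} v$ and $v \Vd \psi$, and then exploit $w \futs w'' \Rof{\truthset{\varphi}} v$ (by transitivity of $\futs$) together with $(G\cb)$ applied to the hypothesis to conclude $v \Vd \psi \IMP \chi$, which by reflexivity of $\futs$ gives $v \Vd \chi$ as required.

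The main obstacle I anticipate is keeping the chains of $\lessc$- and $\Rof{X}$-successors straight in the argument for \axCKdiam, where the interaction between the global clause $(G\cb)$ for $\cb$ and the global clause $(G\cd)$ for $\cd$ requires threading one world through both relations and appealing to transitivity of $\lessc$ in the right place; everything else is routine once the hereditary property is in place.
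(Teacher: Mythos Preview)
Your proposal is correct and follows essentially the same approach as the paper's proof: induction on derivations, checking that each axiom is valid and each rule preserves validity in \ccm s. The paper's appendix proof is terser (it only spells out \RAdiam, \RCdiam, and \axCKdiam\ as examples and skips the intermediate step $v \Vd \psi \imp \chi$), while you are more explicit about the hereditary property and about threading transitivity of $\futs$ through the argument for \axCKdiam, but the overall structure is the same.
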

\begin{proof}
By showing that all axioms and rules of $\CCK$ are valid, respectively validity-preserving, in every \ccm.
Validity of intuitionistic axioms is obvious because \ccm s are a particular case of intuitionistic Kripke models.
We show the following examples.

(\RAdiam) Suppose $\M \models \varphi \biimp \rho$, that is $\trset{\varphi} = \trset{\rho}$. 
If $w \Vd \varphi \cd \psi$, then for all $w' \more w$, there is $v$ such that $w' \Rof{\trset{\varphi}} v$ and $v \Vd \psi$.
Then $w' \Rof{\trset{\rho}} v$, therefore  $w \Vd \rho \cd \psi$.

(\RCdiam) Suppose $\M \models \psi \biimp \chi$, that is for all $z\in\W$, $z \Vd \psi$ iff $z \Vd \chi$. 
If $w \Vd \varphi \cd \psi$, then for all $w' \more w$, there is $v$ such that $w' \Rof{\trset{\varphi}} v$ and $v \Vd \psi$.
Then $v \Vd \chi$, therefore  $w \Vd \varphi \cd \chi$.

(\axCKdiam) Suppose (i) $w \Vd \varphi \cb (\psi \imp \chi)$ and (ii) $w \Vd \varphi \cd \psi$.
By (ii), for all $w' \more w$, there is $v$ such that $w' \Rof{\trset{\varphi}} v$ and $v \Vd \psi$.
Then by (i), $v \Vd \chi$, therefore $w \Vd \varphi \cd \psi$.

\end{proof}

\lindenbaum*
\begin{proof}
The proof extends the one of \cite{segerbergMin} for $\IPL$ to the conditional language $\lan$
without essential modifications.
Consider an enumeration $\psi_0, \psi_1, \psi_2, ...$ of all formulas of $\lan$.
The set $\B$ is 
constructed as follows:
$\A_0 = \A$; 
$\A_{n+1}=\A_n\cup\{\psi_n\}$ if $\A_n\not\vd \psi_n \imp \varphi$,
and $\A_{n+1}=\A_n$ otherwise;
$\B = \bigcup_{n \in \mathbb{N}}\A_n$.
It follows that $\B\not\vd \varphi$,
otherwise there are $\chi_1,.., \chi_k\in\B$ such that
$\vd \chi_1 \land  ... \land \chi_k\imp \varphi$,
which means that there is $n\in \mathbb{N}$ such that $\chi_1,.., \chi_k\in\A_n$,
hence $\A_n\vd \varphi$,
against the construction hypothesis.
Then, $\varphi\notin \B$ (given  the validity of $\varphi \imp \varphi$ in $\IPL$).
Moreover, $\B$ is prime:
it is consistent because $\B\not\vd \varphi$.
It is deductively closed: suppose $\B\vd \chi$ and $\chi\notin\B$.
Then $\chi=\psi_n$ for some $n\in \mathbb{N}$, and, by construction, $\A_n \vd \chi\imp \varphi$.
Since $\A_n\subseteq\B$, it follows
$\B\vd \chi\imp \varphi$, hence $\B\vd  \varphi$,
giving a contradiction.
Finally, $\B$ satisfies the disjunction property:
suppose $\chi \lor \xi\in\B$, $\chi\notin\B$ and $\xi\notin\B$.
Then $\chi = \psi_i$ and $\xi = \psi_j$ for some $i,j \in \mathbb{N}$.
Let $n = max\{i,j\}$.
Then by construction, $\A_n \vd \chi \imp \varphi$ and $\A_n \vd \xi \imp \varphi$,
thus  $\A_n \vd \chi \lor \xi \imp \varphi$.
It follows $\B\vd \chi \lor \xi \imp \varphi$, and since $\chi \lor \xi \in \B$,
we have $\B\vd \varphi$,
therefore $\varphi \in \B$, giving a contradiction. 
\end{proof}

\lemmacanel*
\begin{proof}
First, observe that $\vd \varphi \biimp \rho$ 
is an equivalence relation $eq \subseteq \lan^2$.

(1.) 
Let $\A$ be a prime set.
We define $\ceR$ as a set of pairs $(\pset{\varphi}, \beta)$, where 
$\varphi$ is the canonical representative of some equivalence class induced by $eq$
such that  $\varphi\cd\chi \in \A$ for some $\chi$, 
and the corresponding set $\beta$ is constructed as follows:
If $\varphi\cd\chi \in \A$, then $\fcbm{\A} \cup \{\chi\} \not\vd \bot$,
otherwise there are $\varphi\cb\sigma_1, ..., \varphi\cb\sigma_n \in \A$ such that $\varphi\cb\sigma_1, ..., \varphi\cb\sigma_n, \varphi\cd\chi \vd \bot$,
against the consistency of $\A$.
Then by Lemma~\ref{lemma:lind}, there is $\B$ prime such that $\fcbm{\A} \cup \{\chi\} \subseteq \B$.
We call $\B$ \emph{witness} for $\varphi\cd\chi$,
and define $\beta$ as a set containing a witness for every $\varphi\cd\vartheta\in\A$.
One can easily verify that $(\A, \ceR)$ is a \ce.
In particular, if $\eta\cd\vartheta\in\A$, then $\vd \eta\biimp \varphi$, for some canonical representative $\varphi$.
Thus $\pset{\eta} = \pset{\varphi}$, and by rule \RAdiam\
and closure under derivation, 
$\varphi \cd \vartheta \in \A$,
hence there is a corresponding pair $(\pset{\varphi}, \beta) = (\pset{\eta}, \beta) \in \ceR$.

(2.) Consider the same construction as before, noticing that $\fcbm{\A} \not\vd \psi$,
otherwise $\A \vd \varphi\cb\psi$, hence $\varphi\cb\psi\in\A$.

(3.) We construct pairs $(\pset{\varphi}, \beta)$ similarly to case 1,
noticing that if $\varphi\cd\chi \in \A$, then $\fcbm{\A} \cup \{\chi\} \not\vd \psi$,
otherwise $\A \vd \varphi\cd\psi$, hence $\varphi\cd\psi\in\A$.
Thus by Lemma~\ref{lemma:lind}, there is $\B$ prime such that $\fcbm{\A} \cup \{\chi\} \subseteq \B$ and $\psi\notin\B$.
Then the proof proceeds as in 1.\qed
\end{proof}

\lemmamodcan*
\begin{proof}
Note that the claim is equivalent to $\truthset{\varphi} = \ceset{\varphi}$.
We prove the lemma by induction on the construction of $\varphi$,
showing only the cases $\varphi = \rho \cb \psi$, $\varphi = \rho \cd \psi$.

($\varphi = \rho \cb \psi$)
Assume $\rho \cb \psi \in \A$ and $(\A, \ceR) \lessc (\A', \ceR') \Rcof{\trset{\rho}} (\B, \ceU)$.
Then $\A \subseteq \A'$, which implies $\rho \cb \psi \in \A'$, and by \ih, $(\A', \ceR') \Rcof{\ceset{\rho}} (\B, \ceU)$.
By definition of $\Rcof{X}$, there is $\beta$ such that $(\pset{\rho}, \beta) \in \ceR'$ and $\B \in \beta$,
then by definition of \ce, $\psi\in\B$.
Thus by \ih, $(\B, \ceU)\Vd\psi$,
hence $(\A, \ceR) \Vd \rho \cb \psi$.
Now assume $\rho \cb \psi \notin \A$.
By Lemma~\ref{lemma:can el},
there are 
$(\A, \ceR)$, $\beta$, $\B$
such that 
$(\pset{\rho}, \beta)\in\ceR$, $\B\in\beta$ and $\psi\notin\B$.
Moroever, there exists a \ce\ $(\B, \ceU)$.
Then $(\A, \ceR) \Rcof{\ceset{\rho}} (\B, \ceU)$, 
and by \ih,
$(\A, \ceR) \Rcof{\trset{\rho}} (\B, \ceU)$ and $(\B, \ceU)\not\Vd\psi$.
Therefore $(\A, \ceR) \not\Vd \rho\cb\psi$.

($\varphi = \rho \cd \psi$)
Assume $\rho \cd \psi \in \A$ and $(\A, \ceR) \lessc (\A', \ceR')$.
Then $\rho \cd \psi \in \A'$.
By definition of \ce, there are $(\pset{\rho}, \beta) \in \ceR$ and $\B \in \beta$ such that $\psi\in\B$.
Moreover, by Lemma~\ref{lemma:can el}, there is a \ce\ $(\B, \ceU)$.
Hence, by definition, $(\A', \ceR') \Rcof{\ceset{\rho}} (\B, \ceU)$, and by \ih,
$(\A', \ceR') \Rcof{\trset{\rho}} (\B, \ceU)$ and $(\B, \ceU) \Vd \psi$.
Therefore $(\A, \ceR) \Vd \rho \cd \psi$.
Now assume $\rho \cd \psi \notin \A$.
By Lemma~\ref{lemma:can el},
there are 
$(\A, \ceR)$, $\beta$
such that 
$(\pset{\rho}, \beta)\in\ceR$ and for all $\B\in\beta$, $\psi\notin\B$.
Thus, for all $(\B, \ceU)$ such that $(\A, \ceR) \Rcof{\ceset{\rho}} (\B, \ceU)$, 
$\psi\notin\B$,
hence by \ih,
for all $(\B, \ceU)$ such that $(\A, \ceR) \Rcof{\trset{\rho}} (\B, \ceU)$, $(\B, \ceU) \not\Vd \psi$.
Therefore $(\A, \ceR) \not \Vd \rho\cd\psi$.\qed
\end{proof}

\end{document}